\Crefname{algocf}{Algorithm}{Algorithms}
\definecolor{DarkBlue}{rgb}{0.2,0.2,0.6}
\definecolor{DarkRed}{rgb}{0.368,0.097,0.078}
\renewcommand{\paragraph}[1]{\medskip\noindent{\bf #1}}
\newtheorem{theorem}		{Theorem} 	[section]
\newtheorem{lemma}			[theorem]	{Lemma}
\newtheorem{definition}		[theorem]	{Definition}
\newtheorem{observation}		[theorem]	{Observation}
\newtheorem{corollary}		[theorem]	{Corollary}
\newtheorem{claim}			[theorem]	{Claim}
\newtheorem{question}		[theorem]	{Question}
\newtheorem{remark}		    [theorem]	{Remark}
\newcommand{\remove}[1]{}
\DeclareMathSymbol{\N}{\mathbin}{AMSb}{"4E}
\DeclareMathSymbol{\Z}{\mathbin}{AMSb}{"5A}
\DeclareMathSymbol{\R}{\mathbin}{AMSb}{"52}
\DeclareMathSymbol{\Q}{\mathbin}{AMSb}{"51}
\DeclareMathSymbol{\erert}{\mathbin}{AMSb}{"50}
\DeclareMathSymbol{\I}{\mathbin}{AMSb}{"49}
\DeclareMathSymbol{\C}{\mathbin}{AMSb}{"43}
\DeclareMathOperator{\E}{E}
\DeclareMathOperator{\Var}{Var}
\newcommand{\Geom}{{\rm Geom}}
\newcommand{\Bin}{{\rm Bin}}
\newcommand{\Binomial}{{\rm Bin}}
\newcommand{\Bernoulli}{{\rm Bernoulli}}
\newcommand{\AAA}{\mathcal A}
\newcommand{\PPP}{\mathcal P}
\newcommand{\eps}{\varepsilon}
\newcommand{\polylog}{\mathop{\rm polylog}}
\title{A Simple and Robust Protocol for Distributed Counting}
\author{
  Edith Cohen\thanks{Google Research and Tel Aviv University. Partially supported by Israel Science Foundation (grant 1156/23). \texttt{edith@cohenwang.com}}
  \and
    Moshe Shechner\thanks{Tel Aviv University. Partially supported by the Israel Science Foundation (grant 1419/24). \texttt{moshe.shechner@gmail.com}}
  \and
  Uri Stemmer\thanks{Tel Aviv University and Google Research. Partially supported by the Israel Science Foundation (grant 1419/24). \texttt{u@uri.co.il}}
}
\date{September 6, 2025}
\begin{document}
\maketitle

\begin{abstract}
We revisit the {\em distributed counting} problem, where a server must continuously approximate the total number of events occurring across $k$ sites while minimizing communication. The communication complexity of this problem is known to be $\Theta(\frac{k}{\eps}\log N)$ for deterministic protocols. Huang, Yi, and Zhang (2012) showed that randomization can reduce this to $\Theta(\frac{\sqrt{k}}{\eps}\log N)$, but their analysis is restricted to the {\em oblivious setting}, where the stream of events is independent of the protocol's outputs.

Xiong, Zhu, and Huang (2023) presented a {\em robust} protocol for distributed counting that removes the oblivious assumption. However, their communication complexity is suboptimal by a $\polylog(k)$ factor and their protocol is substantially more complex than the oblivious protocol of Huang et al.\ (2012). This left open a natural question: could it be that the simple protocol of Huang et al.\ (2012) is already robust?

We resolve this question with two main contributions. First, we show that the protocol of Huang et al.\ (2012) is itself {\em not} robust by constructing an explicit adaptive attack that forces it to lose its accuracy. Second, we present a new, surprisingly simple, robust protocol for distributed counting that achieves the optimal communication complexity of $O(\frac{\sqrt{k}}{\eps} \log N)$. Our protocol is simpler than that of Xiong et al.\ (2023), perhaps even simpler than that of Huang et al.\ (2012), and is the first to match the optimal oblivious complexity in the adaptive setting.
\end{abstract}

\section{Introduction}\label{sec:intro}

In the {\em distributed counting problem}, there is a server (a.k.a.\ the {\em coordinator}) and $k$ sites. Throughout the execution, each site observes {\em events} occurring at different times. The goal of the server is to maintain an ongoing $\eps$-approximation of the number of events observed by all sites together, while minimizing the communication cost.\footnote{For consistency with prior work, throughout the introduction we measure communication complexity by the total number of messages sent during the execution. See Section~\ref{sec:bits} for communication in bits.}

Keralapura, Cormode, and Ramamirtham \cite{keralapura2006communication} presented a simple {\em deterministic} protocol for this problem with a communication cost of $O\left(\frac{k}{\eps}\log N\right)$, where $N$ is the total number of events. In their protocol, each site tracks its local event count and notifies the server whenever this count increases by a factor of $(1+\eps)$. The number of notifications per site is bounded by $O(\frac{1}{\eps}\log N)$, and the total communication is bounded by $O(\frac{k}{\eps}\log N)$. Notably, this protocol requires only one-way communication from the sites to the server. 
Yi and Zhang \cite{yi2009optimal} later established a lower bound of $\Omega\left(\frac{k}{\eps}\log N\right)$ on the communication complexity of every  deterministic protocol, even when two-way communication is allowed. This shows that the protocol of \cite{keralapura2006communication} is  optimal among deterministic protocols. 

For {\em randomized} protocols, the picture is more nuanced. Huang, Yi, and Zhang \cite{huang2012randomized} showed that with only one-way communication, the complexity remains $\Theta(\frac{k}{\eps}\log N)$, offering no asymptotic improvement over deterministic protocols. However, for the two-way setting, they presented a protocol achieving a lower cost of $O(\frac{\sqrt{k}}{\eps}\log N)$, provided that $\eps \leq 1/\sqrt{k}$.\footnote{Throughout the introduction, when stating the communication complexity of randomized protocols we assume that the answers are pointwise accurate with constant probability. For example, for every time step $t$, the released answer is $\eps$-correct with probability at least $0.99$. The dependency on the failure probability will be made precise in the technical sections that follow.}
Furthermore, they proved a matching lower bound, showing that their two-way randomized protocol is asymptotically optimal among all randomized protocols for this problem.

\paragraph{Oblivious vs.\ robust protocols.} 
Huang et al.\ \cite{huang2012randomized} analyzed their randomized protocol under the assumption that the entire input sequence is fixed in advance. That is, for any fixed input sequence, their protocol must succeed with high probability over its internal randomness. This model is known as the {\em oblivious setting} because the entity generating the inputs is ``oblivious'' to the protocol's outputs. Protocols designed to succeed in this setting are called {\em oblivious} protocols.
In contrast, protocols that provably maintain utility even when the input sequence is chosen adaptively, as a function of the protocol's previous outputs, potentially in an adversarial manner, are called {\em robust} protocols (a.k.a.\  {\em adaptive} protocols). 
It can be easily seen that any deterministic protocol that guarantees correctness in the oblivious setting is automatically robust.\footnote{Let $\PPP$ be a deterministic oblivious protocol and suppose towards contradiction that there is an adaptive adversary $\AAA$ that causes $\PPP$ to fail on inputs generated (adaptively) by $\AAA$. Let $\vec{x}$ denote the input sequence generated by $\AAA$ when interacting with $\PPP$. Now, as $\PPP$ is deterministic, it must also fail on this input sequence $\vec{x}$ even when it is fixed ahead of time, which contradicts the utility guarantees of $\PPP$.} However, this is not the case for randomized protocols.
Intuitively, the difficulty is that when inputs are chosen adaptively, these inputs can become dependent on the internal randomness of the protocol, thereby breaking the analysis (and correctness) of many oblivious protocols.  
Designing (randomized) robust algorithms and data structures that outperform deterministic ones is a very active research area in several sub-fields of theoretical computer science.\footnote{See, e.g., \cite{holm2001poly,NanongkaiS17,Wulff-Nilsen17,NanongkaiSW17,chuzhoy2020deterministic,BernsteinC16,Bernstein17,BernsteinChechikSparse,ChuzhoyK19,ChuzhoyS20,gutenberg2020decremental,gutenberg2020deterministic,GutenbergWW20,Chuzhoy21,BhattacharyaHI15,BhattacharyaHN16,BhattacharyaHN17,BhattacharyaK19,Wajc19,BhattacharyaK21deterministic,ben2020framework,HassidimKMMS20,woodruff2020tight,alon2021adversarial,KaplanMNS21,Braverman2021adversarial,ACSS21,BeEO21,chakrabarti2022adversarially,HardtU14,dwork2015preserving,bassily2021algorithmic,steinke2017tight,gupta2021adaptive}.}

The work of Huang et al.\ \cite{huang2012randomized} gave rise to the question of understanding the communication complexity of {\em robust} protocols for the distributed counting problem. 
This question was addressed by Xiong, Zhu, and Huang \cite{xiong2024adversarially} who presented a robust variant of the protocol of Huang et al.\ \cite{huang2012randomized}. 
Their protocol is based on a generic ``robustification technique'' from Hassidim et al.\ \cite{HassidimKMMS20}. This technique uses differential privacy (DP) to protect not the input data, but rather the algorithm's internal randomness. As \cite{HassidimKMMS20} showed,
this can be used to limit (in a precise way) the dependencies between the internal randomness of
the algorithm and its inputs, thereby making it easier to argue about the adaptive setting.\footnote{Following \cite{HassidimKMMS20}, differential privacy was used as a tool to ``robustify'' many oblivious algorithms in several settings. See, e.g.,\  \cite{ACSS21,BeEO21,gupta2021adaptive,BeimelKMNSS22,Cohen0NSSS22,BrandGJLLPS22,SadigurschiSS23,CherapanamjeriS23}.} While this ``robustification technique'' is quite generic, it often comes at the cost of increased algorithmic complexity and reduced performance compared to the ``base'' oblivious algorithm. 

At a high level, the robust protocol of Xiong et al.\ \cite{xiong2024adversarially} has two components: (1) An oblivious protocol, similar to that of Huang et al.\ \cite{huang2012randomized} but more suited to serve as the ``base'' protocol in the robustification-via-DP framework; and (2) a DP layer that adds appropriately calibrated noise in order to ``hide'' the internal randomness used by the base protocol from anyone who observes the released outputs. This approach has three shortcomings:
\begin{enumerate}
    \item The resulting protocol is significantly more complex compared to the oblivious protocol of Huang et al.\ \cite{huang2012randomized}, mainly due to the DP layer.
    \item The use of differential privacy as a robustification method inflates the communication complexity by a factor of $\log^{0.75}(k)$, from $O\left(\frac{\sqrt{k}}{\eps}\log N\right)$ to $O\left(\frac{\sqrt{k}\cdot \log^{0.75}(k)}{\eps}\log N\right)$.
    \item The resulting protocol is robust only in the {\em black-box setting}, where the inputs might depend on the outputs of the server, but not directly on the internal state of the protocol or the messages transmitted between the server and the sites.  
\end{enumerate}

These shortcomings raise the following question:

\begin{question}\label{q:intro}
Could it be that the randomized protocol of Huang et al.\ \cite{huang2012randomized} is itself robust, thereby avoiding these shortcomings?
\end{question}

This question was left open by Xiong et al.\ \cite{xiong2024adversarially}. That is, even though they presented a robust variant of the protocol of \cite{huang2012randomized}, they did not present an attack or argue about the robustness of the original (randomized) protocol.

\subsection{Our results}
We give two answers to Question~\ref{q:intro}.
\begin{enumerate}
    \item First, we show that the original (randomized) protocol of \cite{huang2012randomized} is itself {\em not} robust. Specifically, we design an adaptive attack that generates the inputs (sequences of events) in a way that forces the protocol of \cite{huang2012randomized} to fail. 
    \item Nevertheless, we show that small modifications to the protocol of \cite{huang2012randomized} do make it robust (without using any external ``robustification techniques''). This allows us to avoid all three shortcomings listed above. Specifically, our protocol is as simple as that of \cite{huang2012randomized} (perhaps even simpler), our communication complexity is optimal, and our protocol is robust even in the face of a {\em white-box} attacker where the inputs might be generated both as a function of the released outputs and as a function of the internal state and messages transmitted between the server and the sites.
\end{enumerate}

These results are specified in the following two theorems.
\begin{theorem}[Adaptive attack, informal version of Theorem~\ref{thm:attack}]\label{thm:introAttack}
There exists an attack that generates adaptive inputs to the protocol of \cite{huang2012randomized} that forces it to fail (have error greater than $(1+\eps)$) with high probability.
\end{theorem}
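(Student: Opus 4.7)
The plan is to construct an adaptive adversary that exploits statistical correlations between the protocol's observable outputs and its hidden per-site randomness. The Huang et al.\ protocol operates in rounds: in each round the coordinator holds an estimate $\hat{n}_r$ and updates it (starting a new round) only after a prescribed number of randomized reports have accumulated from the sites. The per-site randomness---whether realized as a random threshold or as per-event sampling coins---determines how many local events each site absorbs before it triggers a report in the current round. The key observation driving the attack is that round transitions (changes in the coordinator's published estimate) leak information about this randomness, which an attacker can use to steer subsequent events toward sites that are unlikely to report soon.

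Within a single round the attack proceeds in two stages. The \emph{probing stage} sends events one at a time, cycling through the sites in a fixed order while the attacker monitors the coordinator's output. Each round transition is a signal that a recently targeted site has just fired, and the attacker removes from consideration any site that is implicated by such a transition. After probing, the attacker is left with a set $S$ of ``silent'' sites that provably did not fire, and whose posterior distribution over remaining threshold is biased upwards. The \emph{exploitation stage} then routes all further events uniformly across $S$. Because the silent sites have large remaining capacities, no new report is generated while the true count grows well beyond $\hat{n}_r$. A direct union-bound calculation shows that, for the parameter regime in which Huang et al.'s protocol attains $O(\sqrt{k}/\eps \log N)$ communication, one has $|S|=\Omega(k)$ with remaining per-site capacity $\Omega(\eps \hat{n}_r/\sqrt{k})$, so the attacker can inject $\Omega(\sqrt{k}\cdot \eps \hat{n}_r)$ extra events before any report is triggered---vastly exceeding the promised $(1+\eps)$ multiplicative error.

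The main obstacle is the conditional distribution analysis. Because a single round transition aggregates contributions from multiple sites, it does not pinpoint which site fired, and the attacker's inference about $S$ is necessarily indirect. The cleanest route is an exchangeability/coupling argument: the sites' random choices are initially exchangeable and the observable output is a symmetric function of them, so conditioning on a site having not fired stochastically increases its remaining threshold. A standard concentration step then upgrades this to a high-probability statement about the total capacity of $S$. A secondary subtlety is to correctly time the switch from probing to exploitation; we resolve this with a simple rule based on the running count of observed round transitions, and we verify that the probing events themselves (being $o(\eps \hat{n}_r)$ in number) do not prematurely end the round. Finally, we amplify the single-round failure probability into the desired high-probability statement by repeating the attack across $\polylog(N)$ consecutive rounds, which is straightforward because a successful attack in any round violates the accuracy guarantee.
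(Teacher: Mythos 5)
Your proposal takes a genuinely different route from the paper's, but it has a gap that I believe is fatal for the specific protocol \cite{huang2012randomized}. Your attack's core idea is to use a probing stage to identify ``silent'' sites whose posterior distribution over remaining threshold is biased upward, and then route events to those sites so that the published estimate stays \emph{below} the true count. This is a plausible strategy for threshold-based protocols (such as the oblivious base protocol inside \cite{xiong2024adversarially}, where each site reports when its local count crosses a random threshold $r_{i,j}$, and not firing after $m$ events genuinely implies $r_{i,j} > m$). However, \cite{huang2012randomized} has each site perform an independent $\Bernoulli(p)$ coin flip on \emph{every} event, so a site's waiting time until its next report is $\Geom(p)$ and hence \emph{memoryless}: conditioning on a site having absorbed $m$ events without reporting leaves the distribution of future waiting time unchanged. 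Your ``silent'' set $S$ therefore has no elevated remaining capacity; the exploitation stage has no statistical advantage over sending events anywhere else, and the bias you are trying to manufacture does not arise. The exchangeability/coupling step you invoke to claim ``conditioning on a site having not fired stochastically increases its remaining threshold'' is exactly the step that fails.

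The paper's attack exploits a different and subtler feature: the estimator $\hat n_i = \bar n_i - 1 + 1/p$ adds a positive correction term $1/p - 1$ that accounts for the \emph{expected} number of unreported events at site $i$ since its last report. The attack sends events to one site at a time until the published estimate changes (which, because only that site is receiving events, unambiguously identifies it as having just fired with $\bar n_i = n_i$), then immediately moves on. By freezing site $i$ right after it reports, the attacker locks in a \emph{positive} bias of $1/p-1$ at that site, because the actual unreported count is zero rather than the $\approx 1/p$ the correction assumes. Repeating over $\Omega(k)$ sites accumulates a total positive bias of $\Theta(k/p)=\Theta(\sqrt{k}\,\varepsilon n)$, which exceeds $\varepsilon n$ by a factor of $\sqrt k$. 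The paper also shows this bias survives the protocol's between-round randomized adjustments because they are mean-preserving. So the paper's attack overestimates, is one-stage, and needs no inference about hidden thresholds; yours underestimates, is two-stage, and relies on an inference that the memorylessness of geometric sampling precludes. If you want to salvage your approach, note that it would apply to threshold-based variants but not to \cite{huang2012randomized} itself, which is the protocol this theorem targets.
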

\begin{theorem}[Simple robust protocol, informal version of Theorem~\ref{thm:RobustProtocol}]\label{thm:introProtocol}
There is a robust protocol in the white-box setting for the distributed counting problem with $k$ sites that guarantees $\eps$-accuracy over $N$ events while achieving optimal communication complexity of  $O\left(\frac{\sqrt{k}}{\eps}\log N\right)$.
\end{theorem}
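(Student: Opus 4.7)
The plan is to design a phase-based protocol in which Bernoulli sampling at the sites provides a robust, fresh source of randomness whose output is provably independent of everything the adversary has observed so far. The server maintains a running estimate $\hat N$; at the start of each phase it broadcasts $\hat N$ to all sites, and each site then independently flips, for each new event it observes, a Bernoulli coin of bias $p_j := p(\hat N_j)$ chosen so that the expected number of pings per phase is $\Theta(\sqrt{k}/\eps)$. On heads, the site pings the server; the server counts pings and, once the count reaches a predetermined threshold, broadcasts the next estimate $\hat N_{j+1} := (1+\eps)\hat N_j$ and a new phase begins. Because the estimate held between broadcasts is $\hat N_j$ and the true count stays in $[\hat N_j, (1+\eps)\hat N_j]$ throughout a correctly-timed phase, $\eps$-accuracy is automatic as long as the transition fires near the correct event count. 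To prevent compounding of small per-phase errors across many phases, I would insert an exact poll (of cost $k$ messages) once per multiplicative doubling of $\hat N$; there are only $O(\log N)$ such doublings, so the total polling cost is $O(k \log N)$, which is absorbed in $O(\sqrt{k}\log N/\eps)$ whenever $\eps \le 1/\sqrt{k}$.

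The cornerstone of the robustness proof is a \emph{freshness lemma} for the white-box adaptive setting. Concretely, I would show that conditional on the complete transcript of past broadcasts, pings, and internal state up to the moment an event arrives, the coin flip associated with that event is a fresh $\Bernoulli(p_t)$ draw independent of the adversary's choices. The critical observation is that $p_t$ is a deterministic function of the public broadcast history, whereas the flip itself is drawn on the fly from the corresponding site's private random tape at the instant the event is announced. Therefore, even if the adversary adaptively chooses which site receives each event based on everything it has observed so far, the sequence of ping indicators within a phase is a genuine i.i.d.\ $\Bernoulli(p_j)$ sample --- it behaves identically to the oblivious case.

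Given this freshness lemma, the rest of the analysis follows standard lines. A Chernoff bound applied within each phase shows that the server's stopping rule fires when the true event count lies in $[\hat N_j,(1+\eps)\hat N_j]$ with high probability. Union-bounding over $O(\log N/\eps)$ phases yields continuous $\eps$-accuracy throughout the execution with high probability, and summing $\Theta(\sqrt{k}/\eps)$ pings per phase plus $O(\log N)$ polls of cost $k$ each gives the claimed $O(\sqrt{k}\log N/\eps)$ total communication. White-box robustness is then immediate from the freshness lemma: no differential-privacy layer is required, and the simplicity relative to the construction of \cite{xiong2024adversarially} arises precisely because we do not need to ``hide'' any internal random bits --- nothing in the transcript reveals them.

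The main obstacle I anticipate is handling the phase-transition events carefully, since they are triggered at a data-dependent stopping time, and the server's subsequent broadcast is a random function of the adversary's past history. I would address this either by an optional-stopping argument for the mean-zero martingale formed by (ping indicator $-$ sampling probability) at each event, or by union-bounding over the finite set of possible stopping times. Both routes exploit the freshness lemma to reduce the adaptive analysis to a bound on an adversarially-generated sequence whose sampling outcomes are nevertheless independent of the sites' private randomness. A secondary concern is ensuring that the Chernoff bound holds uniformly over all intra-phase times (so that the server never errs even momentarily within a phase); this is handled by a standard maximal inequality for sums of independent variables, which I would verify does not introduce any hidden $\polylog$ factors that would spoil the optimal $O(\sqrt{k}\log N/\eps)$ bound.
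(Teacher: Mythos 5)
Your core robustness insight---that the sites' Bernoulli coins are drawn fresh at event arrival, so that no matter how an adaptive adversary routes events, the server sees an i.i.d.\ $\Bernoulli(p)$ sequence of anonymous pings---is essentially identical to the paper's symmetry argument (Section~4.1 observes that the server's state depends only on the total event count and on the ping counter $B$, not on which site received which event). Both framings rest on the same two facts: freshness of the coins, and site-anonymity of the messages. On that point your proposal is sound, although ``freshness'' alone is a bit misleading as a cornerstone, since the original HYZ12 protocol also flips fresh coins yet is attackable; what saves your protocol is that the pings carry no site-specific information, which you use implicitly when concluding the ping sequence is i.i.d.

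The genuine gap is in the communication accounting, and it is not cosmetic. With phases that multiply the estimate by $(1+\eps)$, there are $\Theta(\log_{1+\eps}N)=\Theta(\tfrac{1}{\eps}\log N)$ phases, not $O(\log N)$. Each phase ends with a broadcast of $\hat N_{j+1}$ (and hence the new $p_{j+1}$) to all $k$ sites, costing $\Theta(k)$ messages; that alone is $\Theta(\tfrac{k}{\eps}\log N)$, a factor $\sqrt{k}$ above the target $O(\tfrac{\sqrt{k}}{\eps}\log N)$. On top of that, $\Theta(\sqrt{k}/\eps)$ pings per phase times $\Theta(\tfrac{1}{\eps}\log N)$ phases gives $\Theta(\tfrac{\sqrt{k}}{\eps^2}\log N)$ pings, off by a further factor of $\tfrac{1}{\eps}$. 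Your closing tally ``$\Theta(\sqrt{k}/\eps)$ pings per phase plus $O(\log N)$ polls of cost $k$'' silently replaces the phase count $\tfrac{1}{\eps}\log N$ with $\log N$ and omits the per-phase broadcast entirely.

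The paper's protocol avoids exactly this trap by not broadcasting inside a round: after an exact sync (which, unlike your once-per-doubling polls, happens at the start of \emph{every} round) the server's estimate evolves continuously as $\hat n = \bar n + B/p$ with $p$ held fixed, and the round ends only after exactly $k$ pings. Each round therefore spans a multiplicative increase of roughly $(1+\eps\sqrt{k})$ rather than $(1+\eps)$, so there are only $\Theta(\tfrac{1}{\eps\sqrt{k}}\log N)$ rounds, each costing $O(k)$ messages (sync, one broadcast of $p$, and exactly $k$ pings), giving $O(\tfrac{\sqrt{k}}{\eps}\log N)$ overall. Per-round syncing also eliminates error accumulation across short phases, so your martingale/optional-stopping machinery for drift between polls becomes unnecessary. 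If you insist on $(1+\eps)$ phases you cannot escape paying $\Theta(k)$ per phase transition, so the $\sqrt{k}$ overhead appears structural in your design rather than an accounting slip. A secondary point: union-bounding over all $\Theta(\tfrac{1}{\eps}\log N)$ phases gives only a for-all guarantee and inflates the sampling constant by $\Theta(\log\log N)$; the paper's per-time bound (Theorem~\ref{thm:RobustProtocol}, proved in Section~\ref{sec:perevent} by restricting attention to a window $[t_0,t]$ containing a constant expected number of round starts) avoids this, though for the informal statement this is minor relative to the $\sqrt{k}$ issue above.
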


Before giving a technical overview of our results, we first describe the oblivious protocol of \cite{huang2012randomized}, so that we could illustrate our adaptive attack on it. We then describe the robust protocol of \cite{xiong2024adversarially}, so that we could convey the simplicity of our new protocol.

\subsection{Informal overview of the oblivious protocol of \cite{huang2012randomized}}

At a high level, the protocol of \cite{huang2012randomized} operates in \emph{rounds}, where each round processes roughly twice the number of events as the previous one. Hence, the total number of rounds is \( \Theta(\log N) \).  The breakdown into rounds is determined by a deterministic background protocol, similar to that of Keralapura et al.\ \cite{keralapura2006communication}: Each site $i$ transmits a message to the coordinator each time its local event count $n_i$ doubles. This allows the coordinator to maintain a factor $2$ approximation $n'\in [n/2,n]$ of the total event count $n=\sum_{i\in[k]} n_i$. When $n'$ roughly doubles, the server ends the current round and broadcasts to all sites an updated \emph{transmission probability} 
$p \propto \frac{\sqrt{k}}{\varepsilon n'} = \Theta\left(\frac{\sqrt{k}}{\varepsilon n}\right)$. 
This transmission probability remains fixed during the round and is roughly halved from round to round.

The finer approximation is achieved via the following randomized protocol. 
For each event arrival at site $i$, the site samples a Bernoulli random variable with parameter $p$. If the sample is 1, the site transmits its exact local counter $\bar{n}_i =  n_i$ to the server. 
Throughout the execution, the server estimates each local count $n_i$ as $\hat{n}_i=0$ if no transmission was received from site $i$ and as $\hat{n}_i=\bar{n}_i - 1 + 1/p$, where $\bar{n}_i$ is the \emph{last} value received from site $i$.
The correction term $(-1+1/p)$ accounts for the expected value of $n_i-\bar{n}_i \sim \mathrm{Geom}[p]-1$, which is the number of events that occurred since the last report by the site. The server then estimates the total event count as $\hat{n} = \sum_{i \in [k]} \hat{n}_i$.
  
Huang et al.\ \cite{huang2012randomized} showed that for any fixed input sequence, the estimator for each local count, \( \hat{n}_i \), is unbiased, i.e., $\E[\hat{n}_i] = n_i$, 
and that it has a bounded variance of 
$\text{Var}(\hat{n}_i) \leq \frac{1}{p^2}$.
Thus, the variance of the final estimator satisfies
$\text{Var}(\hat{n}) = \text{Var}\left( \sum_{i \in [k]} \hat{n}_i \right) \leq \frac{k}{p^2}$, which is at most $(\eps n)^2$ as $p = \Theta\left(\frac{\sqrt{k}}{\varepsilon n}\right)$. So the standard deviation is of the order of $\eps n$, as required. 

When a round ends and $p$ is updated to $p'$, the server performs a random correction on its stored count $\bar{n}_i$ for each site $i$. This ensures that the corrected value has the exact same distribution as if the protocol had been run all along with transmission probability $p'$. Consequently, the variance bound ($\text{Var}(\hat{n})\leq\eps n$) remains valid throughout the execution.\footnote{ Specifically, when $p$ is updated to $p'$, the server updates every $\bar{n}_i$ to $\bar{n}_i' = \bar{n}_i -Z$ for an appropriate random variable $Z$ whose distribution depends on $p,p'$.}

\medskip

The expected communication cost per round includes \( O(k) \) messages for broadcasting the transmission probability \( p \), and \( p \cdot O(n) \) messages from the sites to the server, since each of the \( n \) events is transmitted with probability \( p \). Overall, the communication cost is
$$
O\left( \left(  k+pn  \right)\cdot\log N\right)=O\left( \frac{\sqrt{k}}{\eps}\log N\right),
$$
where the last equality follows by plugging in $p$ and by the assumption that \( k \leq 1/\varepsilon^{2} \).

\subsection{Our adaptive attack on \cite{huang2012randomized}}

We construct an adaptive adversary that observes the server's output and selectively generates events that cause the protocol of \cite{huang2012randomized}, which we denote as \( \AAA \), to fail.
The attack operates iteratively in a round-robin fashion over the sites as follows: it continues sending events to a given site until it observes a change in the output of \( \AAA \). Once a change is detected, the attack proceeds to the next site.
The intuition is that when the output of $\AAA$ changes, it means that the current site $i$ has just transmitted a message to the server, with $\bar{n}_i=n_i$. At that point the estimator \( \hat{n}_i \) is larger than the actual local count \( n_i \) by $(-1+1/p)$. Hence, by stopping sending events to that site, the attacker introduces a positive estimation bias. 
Furthermore, this bias persists (in expectation) even when the server changes to the next round, as the random corrections to $\bar{n}_i$ do not change the expectation of the estimate $\hat{n}_i$.
By repeating this process across multiple sites, the adversary accumulates a global bias in the total estimate \( \hat{n} \), thereby causing the server to lose accuracy.  After sufficiently many events, the expected bias at any given time is 
$\approx (k-1)/p \approx \sqrt{k} \eps n$. 
This can be formalized to obtain Theorem~\ref{thm:introAttack}. 
See Section~\ref{sec:resultAttackOGS} for more details.

\subsection{Informal overview of the robust protocol of \cite{xiong2024adversarially}}

We now elaborate on the robust protocol of Xiong et al.\ \cite{xiong2024adversarially}. As we mentioned, this protocol has two components: (1) An oblivious ``base protocol'', similar to that of \cite{huang2012randomized}; and (2) a DP-layer for robustifying this oblivious protocol.

We first describe the oblivious ``base protocol'' used by \cite{xiong2024adversarially}.
At a high level, like  \cite{huang2012randomized}, this oblivious protocol operates in rounds. In the beginning of each round, the server notifies the sites of the new round, collects exact counts from all the sites, and broadcasts the exact total count till now, denoted as $n_0$. Given $n_0$, every site $i$ divides its stream of events into blocks of size $\Delta = O(\varepsilon\cdot n_0 /    \sqrt{k})$. For each block $j$, site $i$ samples a random threshold $r_{i,j}\in[\Delta]$. Throughout the round, site $i$ sends a message to the server whenever its local count crosses any of its internal thresholds $r_{i,j}$. 
On the server side, we estimate the total event count from the beginning of the round as $B\cdot\Delta$, where $B$ denotes the number of messages received from all sites together. 
This protocol has similar performance to \cite{huang2012randomized}, and similarly, its analysis assumes an oblivious input.

We now elaborate on how Xiong et al.\ \cite{xiong2024adversarially} robustified this protocol using DP. To motivate this, observe that if the attacker knows all the internal thresholds $r_{i,j}$, then it can easily attack this protocol, similarly to the way we attacked the protocol of \cite{huang2012randomized}. Specifically, the attacker sends events to site $i$ till its local count crosses a ``large'' threshold, say at least $3\Delta/4$. This accumulates a bias of $\Omega(\eps n_0/\sqrt{k})$ in the estimation for the local count of site $i$. The attacker then continues to the next site. After $\Omega(\sqrt{k})$ sites, the attacker achieves a significant bias of more than $\eps n_0$, thereby causing the protocol to fail. So we want to hide the internal thresholds from the attacker.

However, note that even if the attacker does not know the thresholds then it can still attack the protocol in a similar manner. Specifically, the attacker sends events to site $i$ till it sees that the output of the protocol was modified. At that moment the attacker knows that site $i$ has just transmitted a message, and therefore it learns its current threshold exactly (as the attacker knows how many events site $i$ has received), and it can hence conduct the same attack as before. So we also want to hide the times at which the output of the protocol changes.

Xiong et al.\ \cite{xiong2024adversarially} used two generic techniques from differential privacy in order to hide these two aspects of their oblivious protocol:  the \texttt{AboveThreshold} algorithm of \cite{dwork2009complexity} and the binary tree mechanism of \cite{DworkNPR10}. They then proved, building on the work of Hassidim et al.\ \cite{HassidimKMMS20}, that the resulting protocol is robust in the {\em black-box} model. On the other hand, a {\em white-box} attacker, who knows the internal thresholds or sees the communication between the server and the sites, can still conduct the attack described above. Thus, the protocol of \cite{xiong2024adversarially} is not robust in the white-box model.

\subsection{Informal overview of our simple robust protocol}

Our protocol also operates in rounds. Similarly to the oblivious ``base protocol'' of \cite{xiong2024adversarially}, in the beginning of each round, the server notifies the sites of the new round,  collects exact counts from all the sites, and broadcasts the exact total count so far, denoted as $n_0$. Given $n_0$, every site determines this round's {\em transmission probability} $p\approx \frac{\sqrt{k}}{\eps n_0}$. Then, throughout the round, whenever a site receives an event, it samples a bit $b$ from $\Bernoulli(p)$ and if $b=1$ then the site sends the message ``1'' to the server. At any moment throughout the round, the server estimates the current total number of events as $n_0+B/p$, where $B$ denotes the number of ``1'' messages the server has received from the beginning of the round. The round ends when $B=k$, which happens after $\frac{k}{p}\approx\eps\sqrt{k}n_0$ events in expectation.

\medskip

We show that this absurdly simple protocol is both optimal and adversarially robust (in the white-box model). We sketch the arguments here;  see Sections~\ref{sec:ourRobustProtocol},~\ref{sec:analysisRobust},~\ref{sec:perevent} for the formal details.

\paragraph{Communication.} 
Note that $n_0$ increases (in expectation) by a factor of $(1+\eps \sqrt{k})$ from round to round. Therefore, the total number of rounds is $\approx \log_{1+\eps\sqrt{k}} N$, which for $\eps< 1/\sqrt{k}$ is $\approx \frac{1}{\sqrt{k}\eps}\log N$. The communication cost of each round is $O(k)$: In the beginning of the round the server broadcasts the start of the new round and obtains an exact local count from each of the sites, and throughout the round the server receives exactly $k$ ``1'' messages from the sites. Therefore, the expected overall communication cost is
$O\left( \frac{\sqrt{k}}{\eps}\log N\right)$, matching the communication cost of the oblivious protocol of \cite{huang2012randomized}.

\paragraph{Robustness.} The robustness of our protocol follows from the fact that it is symmetric w.r.t.\ which site receives an event. Specifically, at any given moment throughout the execution, and for every conditioning on the transcript of the interaction so far, the distribution on the outcome of the server is {\em identical} whether site $i$ receives an event or site $j$ receives an event. Thus, the attacker might as well send all events to site $i=1$, as this has no effect on the outcome distribution of the server. As sending all events to site $i=1$ is an {\em oblivious} stream of inputs, it suffices to analyze the error of our protocol in the oblivious setting.

\paragraph{Accuracy.}
So let us consider the oblivious stream where all events arrive at site $i=1$. For consistency with prior works, we analyze our protocol in terms of {\em ``per-time''} accuracy (a.k.a. {\em pointwise} accuracy): For every time step $n$, with probability 0.99 the returned estimate at this time, $\hat{n}$, satisfies $|\hat{n}-n|\leq\eps n$. Here we only sketch the argument; see Sections~\ref{sec:analysisRobust} and~\ref{sec:perevent} for the formal details, along with a {\em ``for-all''} accuracy guarantee (uniform across all time steps). At a high level, our utility analysis has two parts:
\begin{enumerate}
    \item[{\bf (1)}] {\bf {\em ``Per-round''} accuracy.} For every $r$ we show that with probability at least 0.99, the maximum relative error of the count at all events that fall in the $r$th round is at most $\eps$. To see why this holds, recall that the number of ``1'' messages we receive during the round is exactly $k$, and that the number of events witnessed during this round is distributed as the sum of $k$ geometric random variables $X_1,\dots,X_k\sim\Geom(p)$. 
    That is, after the $\ell$th ``1'' message, our estimate is $\left(n_0+\frac{\ell}{p}\right)$, where $n_0$ is the exact count at the beginning of the round, while the actual event count is $\left(n_0+\sum_{j=1}^{\ell} X_{j}\right)$.
    We can thus bound our estimation error using standard (partial-sum) tail bounds on the geometric distribution. Specifically, with probability at least 0.99 it holds that $\max\limits_{1\le \ell\le k}\left|\sum_{j=1}^{\ell} X_{j}-\frac{\ell}{p}\right|\lesssim\frac{\sqrt{k}}{p}\approx \eps n_0\leq\eps n$, yielding our desired relative error throughout the round.

    \item[{\bf (2)}] {\bf From per-round to per-time accuracy.} 
    Converting a per-round guarantee into a per-time guarantee is subtle, because for a fixed time step $n$ 
    it is not clear a priori which round $r$ the time step $n$ will belong to. Furthermore, conditioning on $n$ belonging to a specific round $r$ changes the distribution of our geometric RVs and breaks our per-round analysis. A naive solution here would be to union bound over all rounds, but this would yield a ``for-all'' guarantee that would not match the optimal per-time complexity. Informally, we overcome this by showing that it suffices to union bound over {\em constantly} many rounds rather than all rounds. More specifically, for a fixed time step $t$ we let $t_0<t$ denote the largest time step such that with high probability there is at least one round that starts between time $t_0$ and $t$. This allows us to ignore the execution before time $t_0$, as the error in step $t$ is independent of past rounds (due to the sync in the beginning of the round starting between $t_0$ and $t$). Furthermore, we show that in expectation there is at most a constant number of rounds that begin between $t_0$ and $t$, and that it suffices to argue only about these $O(1)$ rounds.
\end{enumerate}

These arguments can be formalized to obtain Theorem~\ref{thm:introProtocol}, showing that our simple protocol is both optimal and adversarially robust in the white-box model.

\subsection{Empirical demonstration}
In \cref{sec:experiments}, we present simulation results comparing our robust protocol with the oblivious protocol of \cite{huang2012randomized}. We test both protocols on two types of streams: our adaptive attack stream and a non-adaptive input stream. We did not include an evaluation of the robust protocol of \cite{xiong2024adversarially}, as its source code was not available, its implementation relies on external DP libraries, and its associated constants appear to be large. We observe that:
\begin{enumerate}
\item[(1)] As predicted by our analysis, the protocol of \cite{huang2012randomized} is vulnerable to the adaptive attack, whereas our protocol is robust. The difference is quite striking, even more so than guaranteed by our theoretical analysis of the attack, which suggests the constants in our analysis were not tight.

\item[(2)] On the non-adaptive stream, both protocols exhibit similar performance. This shows that robustness does not come at the expense of practicality. (Recall that both protocols have the same asymptotic guarantees on oblivious streams.)

\end{enumerate}

\subsection{Related works}

The distributed counting problem is a specific instance of the more general framework of {\em distributed functional monitoring}, where the goal is to continuously track a function over data streams distributed across multiple sites. This framework has been extensively studied for various functions beyond simple sums. Significant research has focused on estimating frequency moments, identifying heavy hitters, approximating quantiles, and more \cite{dilman2002efficient, cormode2005holistic, 
keralapura2006communication,
CormodeMY08,
arackaparambil2009functional, tirthapura2011optimal, cormode2012continuous, woodruff2012tight, Cormode13,chen2017improved, wu2020learning,HuangXZW25}.

The distributed functional monitoring model is related to other multi-player communication models. A notable example is the classic {\em coordinator model}, introduced by \cite{dolev1989multiparty}, where $k$ players, each holding a static input, communicate with a central referee to compute a function of their joint inputs in a single shot. A significant body of work has analyzed the communication and round complexity of core problems within this model, including \cite{phillips2012lower,
woodruff2013distributed, 
woodruff2014optimal,
viola2015communication,
braverman2017rounds, 
huang2017communication,
huang2021communication,
assadi2024rounds,
esfandiari2024optimal}.

\section{An Adaptive Attack on the HYZ12 Protocol} \label{sec:resultAttackOGS}

In this section we show that the \textsf{HYZ12}, the protocol proposed in \cite{huang2012randomized}, is not robust by presenting an adaptive attack.  
The attack consistently induces a relative estimation error larger by a factor of $\Omega(\sqrt{k})$ compared to the target accuracy~$\varepsilon$.

\begin{theorem}[Attack on \textsf{HYZ12}] \label{thm:attack}
There exist universal constants $c_1,c_2,c_3>0$ and an adaptive attack on the protocol of \cite{huang2012randomized} such that for every event count $n > c_3 \sqrt{k}/\varepsilon$,
\[
\Pr\!\left[\hat{n}-n > c_1 \sqrt{k}\,\varepsilon n\right] \;\ge\; 1 - e^{-c_2 k}.
\]
\end{theorem}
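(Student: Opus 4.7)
My plan is to formalize the round-robin adaptive attack sketched in the overview and analyze the bias it deterministically plants at each site. Concretely, the attacker maintains a target index $i$, initialized to $1$; it sends events one by one to site $i$, observes the server's estimate after each event, and increments $i$ modulo $k$ whenever the estimate changes. The central mechanism is that every observed change in the server's output is caused by a transmission from the currently targeted site, at which moment the server sets $\bar{n}_i = n_i$ and therefore $\hat{n}_i = n_i - 1 + 1/p$, where $p$ is the current transmission probability. This deterministically plants a per-site bias of $-1 + 1/p$ in the server's estimate. Because the attacker immediately abandons that site, its local count is frozen thereafter, and the random corrections $Z$ applied at round transitions preserve $\E[\hat{n}_i]$ (as recalled in the overview), so the bias persists in expectation even after $p$ is updated in subsequent rounds.

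To prove the theorem at a fixed time $n > c_3\sqrt{k}/\varepsilon$, I focus on the current round $r$, which starts at event count $n_r \in [n/2,n]$ and has transmission probability $p_r = \Theta(\sqrt{k}/(\varepsilon n_r))$. The number of events required for the attacker to cycle through all $k$ sites inside round $r$ is $\sum_{i=1}^{k} \tau_i$ with $\tau_i \sim \Geom(p_r)$ and mean $k/p_r = \Theta(\sqrt{k}\,\varepsilon n_r)$. Since the standing assumption $\varepsilon \le 1/\sqrt{k}$ makes $k/p_r$ at most a constant fraction of $n_r$, a Chernoff-type tail bound on the sum of $k$ independent geometric variables shows that the full cycle completes inside round $r$ with probability at least $1 - e^{-c_2 k}$. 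Conditioned on that event, every site's last transmission lies in round $r$, so the bias is the deterministic quantity $k(1/p_r - 1) = \sqrt{k}\,\varepsilon n_r - k = \Omega(\sqrt{k}\,\varepsilon n)$, which yields the claimed bound. For time indices $n$ in round $r$ at which the attacker has not yet finished the cycle of round $r$, I would apply the same Chernoff estimate to round $r-1$ to conclude that a full cycle already completed there; the sites that have not yet re-transmitted in round $r$ still carry a bias of at least $1/p_{r-1} - 1 = 1/(2 p_r) - 1$, so the total bias remains $\Omega(k/p_r) = \Omega(\sqrt{k}\,\varepsilon n)$.

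I expect the main obstacle to be handling the attacker's ``output change'' detector across round boundaries: the coordinator's round-initialization broadcast and the random-correction step visibly perturb the server's estimate and could trick the attacker into misattributing a round change to a site transmission. I would address this either by assuming that the attacker can distinguish a coordinator broadcast from an individual site message (a mild observational assumption consistent with the communication pattern of \textsf{HYZ12}), or by noting that only $O(\log N)$ round transitions occur during the entire execution and absorbing them into the failure probability. The concentration step --- a tail bound for the sum of $k$ geometric variables strong enough to give $1 - e^{-c_2 k}$ failure probability for cycle completion inside a single round --- is routine but requires care, while the expectation calculation itself is essentially direct from the definition of the \textsf{HYZ12} estimator.
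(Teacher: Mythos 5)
Your attack and the intuition for the planted bias match the paper, but the central probabilistic step of your argument does not go through. You claim that, with probability $1-e^{-\Omega(k)}$, the attacker completes a full round-robin cycle through all $k$ sites \emph{within the current round} (falling back to round $r-1$ otherwise), so that the bias is the deterministic quantity $k(1/p_r-1)$. The expected cycle length is $k/p_r=\Theta(\sqrt{k}\,\varepsilon n_r)$, which is \emph{not} a small constant fraction of the round length in general: as $\varepsilon\sqrt{k}$ approaches $1$ it is of the same order as $n_r$, and the number of events in a round is only upper-bounded by $O(n_r)$ (Observation on relative round length), with no matching lower bound of $\Omega(n_r)$, since the doubling counter $n'$ can lag and then jump. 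Moreover a fixed index $n$ can fall arbitrarily early in its round, and your fallback to round $r-1$ inherits the same problem. So "all $k$ sites last reported in the current (or previous) round" cannot be established with probability $1-e^{-\Omega(k)}$ uniformly over the parameter range, and the theorem is supposed to hold for every $n>c_3\sqrt{k}/\varepsilon$. The paper avoids this entirely: it only shows that a constant fraction of sites (at least $k/8$) report at least once in the window $[n_0,n]$ with $n_0\ge n/14$ — a window that may span several rounds — and for those sites lower-bounds $1/p'_i\ge \varepsilon n/(28\sqrt{k})$ using the bound on the transmission probability at round starts, which already yields expected bias $\Omega(\sqrt{k}\,\varepsilon n)$.

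A second gap is in the high-probability step. Once you must rely on sites whose last report lies in an earlier round (your fallback case), the realized bias of site $i$ is $1/p-1-Z_i$ where $Z_i$ is the cumulative random correction applied at round boundaries since that report (distributed as a zero-inflated geometric $Z_{p,p'_i}$ by the telescoping lemma); it equals $1/p'_i-1$ only \emph{in expectation}, not pointwise, so writing "the sites still carry a bias of at least $1/p_{r-1}-1$" conflates the expectation with a deterministic bound. To convert the expected bias into the stated $1-e^{-c_2k}$ guarantee you need a concentration bound on $\sum_i Z_{p,p'_i}$ — this is exactly the paper's Bernstein tail for sums of zero-inflated geometrics — plus a separate (easy) geometric tail bound for the $m^*$ events the active site has absorbed since its last report, which your first scenario silently drops. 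With the fractional-reporting lemma replacing the full-cycle claim and the Bernstein/geometric tails replacing the "deterministic bias" step, the proof closes; without them it does not.
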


\subsection{Description of \textsf{HYZ12}}\label{sec:descriptionHYZ12}

The \textsf{HYZ12} protocol \cite{huang2012randomized} is described in 
\cref{alg:HYZ12,alg:doublingdet}. We briefly review the protocol and its analysis, which assumes that the \emph{event stream is fixed in advance}.

The \textsf{Doubling protocol} (\cref{alg:doublingdet} \cite{keralapura2006communication}) is
deterministic and runs concurrently on the same input.  
It maintains a constant-factor approximation of the total event count: each site notifies the server whenever its local count doubles, and the server triggers a $\textsf{BoundaryReached}$ alert when its global estimate doubles. On $N$ events, this protocol incurs a total communication cost of $k\log N$.

The site-side \textsf{HYZ12} protocol is very simple: Each site $i$ keeps a local copy of a transmission probability $p$ that is updated via server broadcasts and its local event count $n_i$. On each event, it reports its local count to the server with probability $p$.
On the server side, the protocol progresses in rounds. The server stores for each site a value $\bar n_i$ (the last reported count, adjusted as described below) and computes the per-site estimate $\hat n_i = \bar n_i + 1/p - 1$ (when $\bar n_i > 0$). The global estimate is $\hat n = \sum_i \hat n_i$. A new round starts upon a $\textsf{BoundaryReached}$ alert, at which point the server broadcasts an updated probability $p$ and adjusts each $\bar n_i$ so that the resulting state matches the distribution that would have arisen had the new $p$ been in effect from the beginning.

The analysis of \cite{huang2012randomized} holds under the assumption that
the event stream is fixed in advance. 
First, note that with a fixed transmission probability $p$, the most recent report from a site with current local count $n_i$ satisfies
\[
\bar n_i \;\sim\; n_i + 1 - \Geom[p].
\]
Hence the per-site estimate $\hat n_i := \bar n_i + 1/p - 1$ is an unbiased estimator of $n_i$.  

When the protocol decreases $p$ at the start of a new round, the server updates each $\bar n_i$ by subtracting a random variable $Z_i$ with the property\footnote{These are \emph{Zero-inflated geometrics}, see \cref{def:zeroinflated}.} that
\[
\Geom[p_{\text{old}}] + Z_i \;\stackrel{d}{=}\; \Geom[p].
\]
This ensures that the distribution of the server state after the update is identical to the distribution that would have been obtained had the protocol run with the new probability $p$ from the start. Moreover, the expected value of $\hat n_i$ is unchanged, since $\E[Z_i] = 1/p - 1/p_{\text{old}}$.

Therefore, under a fixed input stream, the global estimate $\hat n=\sum_i \hat n_i$ remains unbiased for all time.  
The error distribution at any event count $n$ is stochastically dominated by the deviation of a sum of $k$ independent $\Geom[p]$ random variables from its expectation $k/p$.  
By standard sub-exponential tail bounds (see \cref{eq:bernstein}), for any fixed $\delta>0$, the deviation is $O(\varepsilon n)$ with probability at least $1-\delta$.

\medskip
This completes our overview of the analysis of \cite{huang2012randomized} in the non-adaptive setting.  
We demonstrate that these guarantees fail under adaptively chosen event streams.

\begin{algorithm2e}[H]
\DontPrintSemicolon
\caption{\textsf{Doubling Protocol}  \cite{keralapura2006communication}\label{alg:doublingdet}}
\KwOut{A running estimate $n' \in [n/2, n]$ of $n=\sum_i n_i$ and \textsf{BoundaryReached} when $n'$ doubles.}
\BlankLine
\textbf{Server state:} for each site $i$, $n'_i \gets 0$; total $n' \gets 0$; doubling threshold $\tau \gets 1$.\;
\textbf{Site $i$ state:} local count $n_i \gets 0$; \;
\BlankLine
\textbf{Event arrival at site $i$:}\tcp*{runs independently at each site}
\Indp
$n_i \gets n_i + 1$\;
\lIf{$n_i \textrm{ is a power of } 2$}{send \textsf{DoublingNotify}$(i,n_i)$ to server}
\Indm
\BlankLine

\textbf{Upon server receiving \textsf{DoublingNotify}$(i,n_i)$:}\;
\Indp
$n'\gets n' + n_i-n'_i$\, ; $n'_i \gets n_i$\;
\If{$n' \ge 2\tau$}{\textsf{BoundaryReached}$(n')$; $\tau \gets n'$}\tcp*{$n'$ doubled}
\Indm
\end{algorithm2e}

\begin{algorithm2e}[H]
\DontPrintSemicolon
\caption{\textsf{HYZ12 Protocol} \cite{huang2012randomized}\label{alg:HYZ12}}
\KwIn{parameters $k,\varepsilon$; \textsf{BoundaryReached} from protocol \cref{alg:doublingdet}.}
\KwOut{Running estimate $\hat n = \sum_i \hat n_i$.}
\BlankLine
\tcp{\textbf{Site protocol for sites $i\in[k]$:}}
\textbf{Site $i$ state:} $p\gets 1$ (updated via broadcasts by server), $n_i\gets 0$ event counter.\;
\BlankLine
\textbf{Event arrival at site $i$:}\tcp*{runs independently at each site}
\Indp
$n_i\gets n_i+1$\;
Sample $X\sim \Bernoulli(p)$.\;
\lIf{$X=1$}{send \textsf{Report}$(i, n_i)$ to server\tcp*{report exact local $n_i$}}
\Indm
\textbf{Upon site $i$ receiving \textsf{Broadcast}$(p)$:}\quad update local $p$.\;
\BlankLine
\tcp{\textbf{Server protocol:}}
\textbf{Server state:} for each site $i$: last report $\bar n_i \gets 0$, per-site estimate $\hat n_i \gets 0$; current transmit prob $p \gets 1$.\;
\BlankLine
\textbf{Upon \textsf{BoundaryReached}$(n')$ OR server receiving \textsf{Report}$(i,\bar n)$:} \tcp*{If an event triggers both, server processes the report first.}
\Indp
\If{\textsf{Report}$(i,\bar n)$ is received}{
$\bar n_i \gets \bar n$;\quad $\hat n_i \gets \bar n_i - 1 + \frac{1}{p}$.\; Publish $\hat n \gets \sum_j \hat n_j$.\;}
\If(\tcp*[f]{close old round, open new round}){\textsf{BoundaryReached}$(n')$ }{
$p_{\text{old}}\gets p$;\quad $p \gets 2^{\min\{0, \lfloor \log_2 \frac{\sqrt{k}}{\varepsilon n'} \rfloor \}}$ \tcp*{power of $2$ that is $\Theta\big(\frac{\sqrt{k}}{\varepsilon n'}\bigg)$.}
\If(\tcp*[f]{When $p=1$ server maintains exact counts}){$p<1$}{
\For(\tcp*[f]{Adjust $\bar n_i$ to updated $p$}){$i\in[k]$}{
    Sample $Z_i \sim Z_{p,p_{\mathrm{old}}}$ \tcp*{as in \cref{def:zeroinflated}; 
    $B\sim \Bernoulli[1- p/p_{\text{old}}]$; $G\sim \Geom\!\left[p\right]$; $Z_i\gets B\cdot G$}
    $\bar n_i \gets \max\{0,\ \bar n_i - Z_i\}$.\;
    \leIf{$\bar n_i = 0$}{$\hat n_i \gets 0$}{$\hat n_i \gets \bar n_i - 1 + \frac{1}{p}$}
}
\textsf{Broadcast}$(p)$ to all sites.\;
}}
\Indm
\end{algorithm2e}

\begin{algorithm2e}[H] 
\DontPrintSemicolon
\caption{Attack on Protocol \textsf{HYZ12} \label{alg:attack}}
\KwIn{Access to protocol with $k$ sites  $i\in[k]$: \textsf{InjectEvent}$(i)$, \textsf{Observe}() returning current $\hat n$.}
\KwOut{A sequence of adversarial event injections.}
\BlankLine
Initialize $i\gets 1$; $\hat n_{\text{last}}\gets \textsf{Observe}()$.\;
\BlankLine
\While{horizon not reached}{
    \tcp{Focus on site $i$ only; all other sites are idle}
    \Repeat(\tcp*[f]{Wait for a change in running estimate}){%
        $\hat n \gets \textsf{Observe}() \neq \hat{n}_{\text{last}}$ \;
        }{\textsf{InjectEvent}$(i)$\tcp*{Generate event at site $i$}}
    $\hat{n}_{\text{last}}\gets \hat{n}$\tcp*{Record current estimate}
    $i\gets 1 + (i \bmod k)$\tcp*{Move to the next site in round robin}
}
\end{algorithm2e}

\subsection{Attack description}
Our attack on the HYZ12 protocol (\cref{alg:HYZ12}) is given in \cref{alg:attack}.  
The attack continuously monitors the estimates $\hat n$ published by the server.  
It generates the input stream by injecting events exclusively at the current site $i$ until the server updates its estimate.  
At that moment the attacker switches to the next site in a round-robin order.

\subsection{Preliminaries for the Analysis of the Attack} \label{sec:attacktools}

For brevity, we use the same symbol to denote a random variable and its distribution.  
We write $\Geom(q)$ for a geometric random variable with success probability $q$, supported on $\{1,2,\ldots\}$ with
$\Pr(G=k)=(1-q)^{k-1}q.$
We also use the standard notation $\Binomial(n,p)$ and $\Bernoulli(p)$ for the binomial and Bernoulli distributions, respectively.

\begin{lemma}[Chernoff bound, multiplicative lower tail {\cite{mitzenmacher2017probability}}]\label{thm:chernoffIneq}
Let $x_1,\dots,x_n$ be independent $\{0,1\}$ random variables, and set $X=\sum_{i=1}^n x_i$ with mean $\mu=\E[X]$. Then for all $0\le\alpha\le1$,
\[
\Pr[X \le (1-\alpha)\mu] \;\le\; \exp\!\left(-\tfrac{\mu \alpha^2}{2}\right).
\]
\end{lemma}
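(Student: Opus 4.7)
The plan is to prove the lemma via the classical Chernoff--Cram\'er exponential moment method. First, for any parameter $t > 0$, since $y \mapsto e^{-ty}$ is positive and monotonically decreasing, Markov's inequality gives
$$\Pr[X \le (1-\alpha)\mu] \;=\; \Pr\!\bigl[e^{-tX} \ge e^{-t(1-\alpha)\mu}\bigr] \;\le\; e^{t(1-\alpha)\mu}\,\E\!\bigl[e^{-tX}\bigr].$$
By independence of the $x_i$, the moment generating function factors: writing $p_i := \Pr[x_i=1]$, we have $\E[e^{-tx_i}] = 1 - p_i(1 - e^{-t})$, and the elementary inequality $1+z \le e^z$ (applied with $z = -p_i(1-e^{-t})$) yields $\E[e^{-tx_i}] \le \exp(-p_i(1-e^{-t}))$. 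Multiplying over $i$ and using $\sum_i p_i = \mu$ produces $\E[e^{-tX}] \le \exp(-\mu(1-e^{-t}))$, so
$$\Pr[X \le (1-\alpha)\mu] \;\le\; \exp\!\bigl(\mu\bigl[t(1-\alpha) - (1 - e^{-t})\bigr]\bigr).$$

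Next I would optimize over $t > 0$. Differentiating the exponent with respect to $t$ and setting the derivative to zero gives $e^{-t} = 1-\alpha$, i.e., $t^{\star} = -\ln(1-\alpha)$, which is nonnegative on $[0,1)$; the boundary case $\alpha = 1$ is handled by continuity (with the convention $0\cdot\ln 0 = 0$). Substituting $t = t^{\star}$ simplifies the exponent to
$$\Pr[X \le (1-\alpha)\mu] \;\le\; \exp\!\bigl(-\mu\bigl[(1-\alpha)\ln(1-\alpha) \;+\; \alpha\bigr]\bigr).$$

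The remaining step, and the main (mild) obstacle, is the purely analytic inequality
$$(1-\alpha)\ln(1-\alpha) + \alpha \;\ge\; \frac{\alpha^2}{2} \qquad \text{for all } \alpha \in [0,1].$$
I would establish this by Taylor expansion: expanding $\ln(1-\alpha) = -\sum_{j\ge 1}\alpha^j/j$ and multiplying by $(1-\alpha)$, a short bookkeeping argument gives
$$(1-\alpha)\ln(1-\alpha) + \alpha \;=\; \sum_{j \ge 2} \frac{\alpha^j}{j(j-1)}.$$
Each term is nonnegative for $\alpha \in [0,1]$, so the sum dominates its first term $\alpha^2/2$. (Equivalently, setting $g(\alpha) := (1-\alpha)\ln(1-\alpha) + \alpha - \alpha^2/2$, one checks $g(0)=0$ and $g'(\alpha) = -\ln(1-\alpha) - \alpha \ge 0$ from the inequality $\ln(1-\alpha) \le -\alpha$, which gives $g \ge 0$ on $[0,1]$.) Combining this with the previous display yields the claimed bound $\exp(-\mu\alpha^2/2)$, completing the proof.
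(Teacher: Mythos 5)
Your proof is correct: the Markov bound on $e^{-tX}$, the factorization with $1+z\le e^z$, the optimizer $t^\star=-\ln(1-\alpha)$, and the series (or derivative) verification of $(1-\alpha)\ln(1-\alpha)+\alpha\ge \alpha^2/2$ are all sound, including the $\alpha=1$ boundary remark. The paper does not prove this lemma at all---it simply cites \cite{mitzenmacher2017probability}---and your argument is exactly the standard exponential-moment derivation given in that reference, so there is nothing to reconcile.
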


\begin{corollary}[Binomial lower tail]\label{col:chernoffBinTwoTails}
Let $B \sim \mathrm{Binomial}(n,p)$ with $p<\tfrac12$. Then for all $0\le \alpha \le 1$,
\[
\Pr[B \le (1-\alpha)pn] \;\le\; \exp\!\left(-\tfrac{pn \alpha^2}{2}\right).
\]
\end{corollary}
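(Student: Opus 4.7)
The plan is to recognize that this corollary is essentially a direct specialization of Lemma~\ref{thm:chernoffIneq} to the case of i.i.d.\ Bernoulli summands, so the proof should be a one-line reduction. First I would write $B$ as the sum $\sum_{i=1}^n x_i$ of $n$ independent $\Bernoulli(p)$ indicators; this is precisely the definitional representation of the $\Binomial(n,p)$ law. These variables satisfy all the hypotheses of Lemma~\ref{thm:chernoffIneq}: independence, $\{0,1\}$-valued, and mean $\mu = \E[B] = pn$.

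Second, I would invoke Lemma~\ref{thm:chernoffIneq} with this representation of $B$ and $\mu = pn$, plugging into the conclusion of the lemma to obtain
\[
\Pr[B \le (1-\alpha)pn] \;\le\; \exp\!\left(-\tfrac{pn\,\alpha^2}{2}\right),
\]
which is exactly the stated inequality. No real obstacle arises; the derivation is purely a substitution.

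The only minor curiosity worth flagging is the hypothesis $p < \tfrac12$, which plays no role in the lower-tail argument itself. It appears to be a vestige of the intended two-sided nature of the statement suggested by the label \texttt{chernoffBinTwoTails}: for an \emph{upper}-tail multiplicative Chernoff bound on a binomial one typically benefits from $p$ being bounded away from $1$, and a symmetric formulation would want such a hypothesis for both tails. For the lower-tail inequality as actually stated, the condition $p < \tfrac12$ is not used in the proof.
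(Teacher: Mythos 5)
Your proof is correct and is exactly the intended argument: the paper presents this corollary as an immediate specialization of Lemma~\ref{thm:chernoffIneq} to i.i.d.\ $\Bernoulli(p)$ summands with $\mu = pn$, which is precisely your one-line reduction. Your observation that the hypothesis $p<\tfrac12$ is not actually needed for the lower-tail bound is also accurate.
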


\begin{definition} [Zero-inflated geometric random variables]\label{def:zeroinflated}
    For $0\leq q\le p\leq 1$, define  
\[
Z_{q,p}\ :=\ B\cdot G,\qquad 
B\sim \Bernoulli\!\left(1-\frac{q}{p}\right),\ \ 
G\sim \mathrm{Geom}(q),
\]
with $B$ and $G$ independent.
\end{definition}

See \cref{sec:attacktoolsproofs} for proofs for the following two Lemmas.

\begin{restatable}{lemma}{GeometricTelescopes}[Zero-inflated geometric telescopes]\label{lemm:geometrictelescopes}
Fix parameters $1\ge p_1\ge p_2\ge \cdots \ge p_r>0$. 
If $Z_{p_{i+1},p_i}$, $i=1,\dots,r-1$, are independent, then
\[
\sum_{i=1}^{r-1} Z_{p_{i+1},p_i}\ \stackrel{d}{=}\ Z_{p_r,p_1}.
\]
\end{restatable}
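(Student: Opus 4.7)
The plan is to prove the identity by a direct probability generating function (PGF) computation, exploiting the fact that $Z_{q,p}$ has a PGF whose numerator and denominator depend on $p$ and $q$ respectively, in precisely the form needed to telescope along a monotone chain.

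First I would compute the PGF of $Z_{q,p}$ from the definition. Since $Z_{q,p}=0$ with probability $q/p$ and equals a $\Geom(q)$ variable with the complementary probability $1-q/p$, combining the two pieces over a common denominator and simplifying yields
\[
\E\!\left[z^{Z_{q,p}}\right] \;=\; \frac{q}{p}+\Bigl(1-\frac{q}{p}\Bigr)\frac{qz}{1-(1-q)z} \;=\; \frac{q}{p}\cdot\frac{1-(1-p)z}{1-(1-q)z}.
\]
This is the only calculation in the proof that is not purely formal.

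Given this formula, the lemma is immediate. By independence of the $Z_{p_{i+1},p_i}$, the PGF of the sum is the product of the individual PGFs. The scalar prefactors telescope as $\prod_{i=1}^{r-1} p_{i+1}/p_i = p_r/p_1$, and in the rational factor the numerators $\prod_{i=1}^{r-1}\bigl(1-(1-p_i)z\bigr)$ and denominators $\prod_{i=1}^{r-1}\bigl(1-(1-p_{i+1})z\bigr)$ cancel for all indices except the two endpoints, leaving
\[
\prod_{i=1}^{r-1}\E\!\left[z^{Z_{p_{i+1},p_i}}\right] \;=\; \frac{p_r}{p_1}\cdot\frac{1-(1-p_1)z}{1-(1-p_r)z} \;=\; \E\!\left[z^{Z_{p_r,p_1}}\right].
\]
Since distributions on $\{0,1,2,\dots\}$ are determined by their PGFs, the claimed equality in distribution follows.

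I do not expect a genuine obstacle here: the PGF formula does all the work, and the telescoping is completely transparent once it is in hand. An alternative route would be induction on $r$ via the two-parameter identity $Z_{b,a}+Z_{c,b}\stackrel{d}{=}Z_{c,a}$ for $c\le b\le a$, which can be obtained either from the same PGF calculation applied to $r=3$ or from a four-way case analysis on the pair of Bernoulli variables; I prefer the direct PGF approach since it handles arbitrary $r$ uniformly and avoids case work.
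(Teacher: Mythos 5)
Your proof is correct and matches the paper's argument essentially verbatim: both compute the PGF of $Z_{q,p}$ as $\frac{q}{p}\cdot\frac{1-(1-p)z}{1-(1-q)z}$ and then observe that the product over the chain telescopes in both the scalar prefactor and the rational factor. No gaps.
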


\begin{restatable}{lemma}{zeroinflatedtail}[Bernstein tail for a sum of zero-inflated geometrics] \label{lemm:BersteinTailZeroInflated}
Let $p\in(0,1]$, $\alpha_1,\dots,\alpha_r\in[0,1]$, and for each $i$ let
$X_i=B_i\,G_i$ with $B_i\sim\Bernoulli(\alpha_i)$ and
$G_i\sim\mathrm{Geom}(p)$, supported on $\{1,2,\dots\}$ with
$\Pr(G_i=k)=(1-p)^{k-1}p$. Assume $\{(B_i,G_i)\}_{i=1}^r$ are independent.
Put $A:=\sum_{i=1}^r\alpha_i$ and $\mu:=\E\!\left[\sum_i X_i\right]=A/p$.
Then for all $t\ge 0$,
\[
\Pr\!\Bigl(\sum_{i=1}^r X_i-\mu\ge t\Bigr)
\;\le\;
\exp\!\left(-\,\frac{t^2}{2\left(\dfrac{A(1-p)}{p^2}+\dfrac{t}{p}\right)}\right)
\;\le\;
\exp\!\left(-\,\frac12\min\!\left\{\frac{p^2t^2}{A(1-p)},\,pt\right\}\right).
\]
\end{restatable}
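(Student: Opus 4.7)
My approach will be the classical Cram\'er--Chernoff method: for $\lambda>0$ in the domain of all MGFs,
\[
\Pr\!\Bigl(\sum_{i=1}^r X_i-\mu\ge t\Bigr)\;\le\; e^{-\lambda t}\prod_{i=1}^r E\!\bigl[e^{\lambda(X_i-\alpha_i/p)}\bigr],
\]
so the task reduces to controlling each per-coordinate centered log-MGF and then optimizing over $\lambda$. The zero-inflated structure gives $M_i(\lambda):=E[e^{\lambda X_i}]=(1-\alpha_i)+\alpha_i M_G(\lambda)$, where $M_G(\lambda)=pe^\lambda/(1-(1-p)e^\lambda)$ is the geometric MGF, finite for $\lambda<\log(1/(1-p))$. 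Applying $\log(1+y)\le y$ with $y=\alpha_i(M_G(\lambda)-1)\ge 0$ yields $\log E[e^{\lambda(X_i-\alpha_i/p)}]\le \alpha_i\bigl(M_G(\lambda)-1-\lambda/p\bigr)$, which is additive in $\alpha_i$ and hence sums to $A$ times a scalar quantity.

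The key technical step is to prove, for $\lambda\in[0,p)$, the Bernstein-type control
\[
M_G(\lambda)-1-\frac{\lambda}{p}\;\le\;\frac{(1-p)\lambda^2/(2p^2)}{1-\lambda/p}.
\]
I would establish this by direct algebraic manipulation of $\frac{e^\lambda-1}{1-(1-p)e^\lambda}-\frac{\lambda}{p}$, using the inequality $e^\lambda-1-\lambda\le \lambda^2/(2(1-\lambda))$ on the numerator and the lower bound $1-(1-p)e^\lambda\ge p(1-\lambda/p)$ (which holds on $\lambda\in[0,p)$) on the denominator. Summing across $i$ produces
\[
\sum_i \log E\!\bigl[e^{\lambda(X_i-\alpha_i/p)}\bigr] \;\le\; \frac{A(1-p)\lambda^2/(2p^2)}{1-\lambda/p}.
\]

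Plugging this back into Chernoff's inequality gives
$\Pr\bigl(\sum X_i-\mu\ge t\bigr)\le \exp\!\bigl(-\lambda t+\tfrac{A(1-p)\lambda^2/(2p^2)}{1-\lambda/p}\bigr)$, and the standard Bernstein choice $\lambda^\star= t/\bigl(A(1-p)/p^2+t/p\bigr)$, which lies in $[0,p)$ for every $t\ge 0$, reproduces the first claimed tail $\exp\!\bigl(-t^2/(2(A(1-p)/p^2+t/p))\bigr)$. The second ``$\min$'' form follows from the elementary relation $\nu+ct\le 2\max(\nu,ct)$, applied with $\nu=A(1-p)/p^2$ and $c=1/p$, which yields $\tfrac{t^2}{2(\nu+ct)}\ge \tfrac12\min(t^2/\nu,\,t/c)$ after rearrangement and substitution.

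The principal obstacle I expect is the sharp form of the MGF bound on $M_G(\lambda)-1-\lambda/p$: a naive Taylor expansion at $\lambda=0$ produces the leading coefficient $E[G^2]/2=(2-p)/(2p^2)$, which strictly exceeds the desired $\Var(G)/2=(1-p)/(2p^2)$ whenever $p>0$. Extracting the tighter constant $(1-p)$ rather than $(2-p)$ will require regrouping in the numerator that absorbs the extra $\lambda^k/p^k$ contributions into the $1/(1-\lambda/p)$ denominator, rather than letting them inflate the ``variance'' term; this is the place where the specific rational structure of $M_G$ has to be used, not just generic sub-exponential bounds.
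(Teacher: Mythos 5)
Your overall strategy (Chernoff/Cram\'er method, per-coordinate bound on the centered log-MGF via $\log(1+y)\le y$, Bernstein optimization over $\lambda$) is the same as the paper's, and the final optimization step and the passage to the ``min'' form are fine. However, the ``key technical step'' you identify is a false inequality, and the obstacle you flag at the end cannot be fixed by regrouping.

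Concretely, you claim $M_G(\lambda)-1-\tfrac{\lambda}{p}\le \frac{(1-p)\lambda^2/(2p^2)}{1-\lambda/p}$ for $\lambda\in[0,p)$. But the Taylor expansions at $\lambda=0$ are
\[
M_G(\lambda)-1-\tfrac{\lambda}{p}
=\tfrac12\,\E[G^2]\,\lambda^2+O(\lambda^3)
=\frac{(2-p)\lambda^2}{2p^2}+O(\lambda^3),
\qquad
\frac{(1-p)\lambda^2/(2p^2)}{1-\lambda/p}
=\frac{(1-p)\lambda^2}{2p^2}+O(\lambda^3),
\]
so for every $p<1$ the left side strictly exceeds the right side for all sufficiently small $\lambda>0$. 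No ``regrouping of the numerator'' can repair this, because both sides vanish to first order and the second-order coefficients are determined; the quantity $M_G(\lambda)-1-\lambda/p$ has leading coefficient $E[G^2]/2$ and hence genuinely cannot satisfy a Bernstein bound with variance proxy $\Var(G)=(1-p)/p^2$. The reason is that after you apply $\log(1+y)\le y$ at the zero-inflation stage you are left with the \emph{raw} MGF $M_G(\lambda)$ shifted by a linear term $1+\lambda/p$, which is not the MGF of the centered geometric; the $E[G]^2$ contribution to $E[G^2]$ survives. The quantity you actually want to control is the centered geometric's centered MGF, i.e.
\[
M_G(\lambda)e^{-\lambda/p}-1 \;=\; \E\!\bigl[e^{\lambda(G-1/p)}\bigr]-1
\;=\;\frac{\Var(G)\,\lambda^2}{2}+O(\lambda^3)
\;=\;\frac{(1-p)\lambda^2}{2p^2}+O(\lambda^3),
\]
because the multiplicative factor $e^{-\lambda/p}$ cancels the $E[G]^2$ term. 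This is precisely the expression the paper works with (it bounds $\alpha_i\bigl(M_G(\lambda)e^{-\lambda/p}-1\bigr)$ and then controls $\ln\bigl(M_G(\lambda)e^{-\lambda/p}\bigr)$ via the inequalities $-\ln(1-x)\le x/(1-x)$ and $e^\lambda\le 1/(1-\lambda)$), whereas your decomposition commits to $M_G(\lambda)-1-\lambda/p$ from the start and is thus stuck with the wrong leading coefficient. To repair the argument you would need to carry the factor $e^{-\lambda/p}$ inside the Bernoulli linearization (or, equivalently, apply $\log(1+y)\le y$ with $y=\alpha_i\bigl(M_G(\lambda)e^{-\lambda/p}-1\bigr)$) before trying to establish the per-coordinate Bernstein control.
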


We bound the configured transmission probability and the round length in terms of the event count $n_0$ at the start of a round. These bounds hold for every input stream, including adaptively chosen inputs.

\begin{observation}[transmission probability at round start]\label{obs:HYZ12_p_bound}
Let a new round be initiated when the total event count is $n_0$.  
Then the configured transmission probability $p$ satisfies
\[
\min\!\left\{1,\ \frac{\sqrt{k}}{2\varepsilon n_0}\right\}
\;\le\; p \;\le\;
\min\!\left\{1,\ \frac{2\sqrt{k}}{\varepsilon n_0}\right\}.
\]
\end{observation}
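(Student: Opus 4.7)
The plan is to directly unpack the definition of $p$ in \cref{alg:HYZ12} and combine it with the accuracy guarantee of the \textsf{Doubling Protocol}. First I would record that when \textsf{BoundaryReached}$(n')$ fires with the server's running estimate $n'$, the true event count $n_0$ at that moment satisfies $n'\le n_0 \le 2n'$, so $n' \in [n_0/2,\,n_0]$. Next I would use the elementary identity $2^{\lfloor \log_2 x\rfloor} \in (x/2,\,x]$ valid for $x>0$, together with the fact that $\min\{0,\lfloor\log_2 x\rfloor\}$ equals $0$ when $x\ge 1$ and equals $\lfloor\log_2 x\rfloor$ when $x<1$.

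With these preliminaries, the proof splits into two cases according to the value of $\sqrt{k}/(\varepsilon n')$. In the case $\sqrt{k}/(\varepsilon n') \ge 1$, the cap at $0$ forces $p=1$; the upper bound $p\le \min\{1, 2\sqrt{k}/(\varepsilon n_0)\}$ then reduces to verifying $\varepsilon n_0 \le 2\sqrt{k}$, which follows from $\varepsilon n_0 \le 2\varepsilon n' \le 2\sqrt{k}$, and the lower bound is trivial since $\min\{1,\cdot\}\le 1 = p$. In the complementary case $\sqrt{k}/(\varepsilon n') < 1$, the cap is inactive and $p = 2^{\lfloor \log_2 (\sqrt{k}/(\varepsilon n'))\rfloor}$, so the elementary identity gives
\[
\frac{\sqrt{k}}{2\varepsilon n'} \;<\; p \;\le\; \frac{\sqrt{k}}{\varepsilon n'}.
\]
Substituting the sandwich $n_0/2 \le n' \le n_0$ immediately yields $\sqrt{k}/(2\varepsilon n_0) < p \le 2\sqrt{k}/(\varepsilon n_0)$; and since $p<1$ here, the outer $\min\{1,\cdot\}$ on each side does not tighten either bound. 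A brief sanity check handles the only subtle corner: if one feared $\sqrt{k}/(2\varepsilon n_0) > 1$, this would force $\sqrt{k}/(\varepsilon n') \ge \sqrt{k}/(\varepsilon n_0) > 2 > 1$, contradicting the case hypothesis.

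There is no real obstacle in this argument; it is pure bookkeeping around the floor, the cap at $0$, and the two-sided guarantee of the doubling protocol. The only care required is to keep straight the direction of each inequality between $n'$ and $n_0$ when passing from bounds involving $n'$ to bounds involving $n_0$, and to ensure the $\min\{1,\cdot\}$ on the right-hand sides is handled consistently in both cases.
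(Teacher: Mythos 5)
Your proof is correct and follows essentially the same two-case argument the paper uses: sandwich $n'$ by the doubling protocol's guarantee, then split on whether $\lfloor\log_2(\sqrt{k}/(\varepsilon n'))\rfloor$ is clipped at zero, using $2^{\lfloor\log_2 x\rfloor}\in(x/2,x]$. Your version is a bit more explicit about why the $\min\{1,\cdot\}$ wrappers are harmless in each case, but the substance matches the paper.
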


\begin{proof}
The server counter $n'$ in \cref{alg:doublingdet} satisfies $n'\in[n/2,n]$.
Write $x=\tfrac{\sqrt{k}}{\varepsilon n'}$.
If $x\ge1$ then $\lfloor\log_2 x\rfloor\ge0$ and $p=1$. %
If $x<1$ then $p=2^{\lfloor\log_2 x\rfloor}\in[x/2,x]$, so
$\tfrac{\sqrt{k}}{2\varepsilon n_0}\le p\le \tfrac{2\sqrt{k}}{\varepsilon n_0}$.
Both cases give the stated bounds.
\end{proof}

\begin{observation}[Relative round length]\label{obs:round-length}
Let $n_0 < n_1$ be the event count at the start of two consecutive rounds of \cref{alg:doublingdet}. Then
\[
\frac{n_1-n_0}{n_0} \;\le\; 7 .
\]
\end{observation}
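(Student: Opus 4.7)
The plan is to control the quantity via the server-side counter $n'$ of the Doubling protocol (\cref{alg:doublingdet}), which satisfies a clean multiplicative bound. Let $n'_0$ and $n'_1$ denote the values of $n'$ at the instants the two consecutive \textsf{BoundaryReached} alerts fire, so that $n'_0$ and $n'_1$ are the successive doubling thresholds. I would establish three ingredients and combine them: (i) $n'_0 \le n_0$; (ii) $n_1 < 2 n'_1$; and (iii) $n'_1 < 4 n'_0$. Putting them together gives $n_1 < 2 n'_1 < 8 n'_0 \le 8 n_0$, hence $(n_1 - n_0)/n_0 < 7$.

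For (i) and (ii), the key invariant is that each per-site stored value $n'_i$ is the largest power of $2$ not exceeding $n_i$ (since the site transmits $n_i$ exactly when $n_i$ becomes a power of $2$). Hence $n'_i \le n_i < 2 n'_i$ whenever $n_i \ge 1$, and summing over sites gives $n' \le n < 2 n'$ at every moment. Evaluating at the round-boundary instants yields $n'_0 \le n_0$ and $n_1 < 2 n'_1$ directly.

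For (iii), which is the heart of the argument, I would reason about the state just before the alert that opens the second round. After the first alert, the threshold becomes $\tau = n'_0$, and the counter $n'$ must stay strictly below $2 n'_0$ throughout the round, since otherwise the next alert would have fired earlier. The second alert fires after processing a single \textsf{DoublingNotify}$(i,n_i)$, which increments $n'$ by exactly the previous value $n'^{\textrm{old}}_i$ stored for the reporting site. Because the reported increment is itself bounded by the pre-update total, $n'^{\textrm{old}}_i \le n'_{\textrm{prev}} < 2 n'_0$, we obtain $n'_1 = n'_{\textrm{prev}} + n'^{\textrm{old}}_i < 2 n'_0 + 2 n'_0 = 4 n'_0$, which is (iii).

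The only point requiring care is the chronological bookkeeping in (iii): one must argue that the per-site increment cannot by itself be too large, and this uses the fact that $n'^{\textrm{old}}_i$ never exceeds the global $n'$ just prior to the update. Once that is spelled out, the rest is immediate arithmetic. I do not expect any technical obstacle beyond this careful accounting; the constant $7$ is an artifact of the factor-$2$ slack in the site-to-server approximation and the factor-$2$ slack in the threshold-doubling trigger compounding multiplicatively.
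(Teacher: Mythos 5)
Your proof is correct and follows essentially the same route as the paper's: both bound things through the server counter $n'$, using $n'\in[n/2,n]$ and $n'_1 \le 4 n'_0$ to obtain $n_1 \le 8n_0$. You simply spell out the two facts the paper asserts without justification (the two-sided approximation $n'\le n < 2n'$ from the per-site invariant, and the factor-$4$ overshoot bound via the increment $\le n'_{\mathrm{prev}}$), so your write-up is a correct, more detailed version of the paper's argument; the only micro-gap is the edge case $n_i=1$ (where $n'^{\mathrm{old}}_i=0$ and the increment is $1$, not $n'^{\mathrm{old}}_i$), but the conclusion still holds since $n'_{\mathrm{prev}}\ge n'_0\ge 2$ at any alert after the first.
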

\begin{proof}
Let $n'_i$ be the respective values of the server counter $n'$ at the start of the rounds. It holds that $n'_1 \in [2 n'_0,4n'_0]$.
Furthermore, at any given time, the server’s counter $n'$ satisfies $n'\in[n/2,n]$, where $n$ is the actual event count. Therefore $n_0 \geq n'_0$ and $n_1 \leq 2n'_1\leq 8n'_0\leq 8n_0$.
\end{proof}

\subsection{Analysis of the Attack}

Throughout this section we assume $\varepsilon \le 1/\sqrt{k}$.  
We analyze the distribution of the protocol execution under the attack.  
A \emph{transcript} denotes the full specification of the states of the attack, sites, and server up to some stopping point.

\begin{lemma}[Many reporting sites]\label{lem:attack-Sites-per-round}
Assume $\varepsilon\sqrt{k}<1$.  
Fix a transcript up to the initiation of a round at event count $n_0 \ge 4\sqrt{k}/\varepsilon$, and consider the execution from event $n_0+1$ onward.  
Let $R$ denote the number of distinct sites that send a \textsf{Report} message to the server during the next $n_0$ events. Then
\[
\Pr\!\left[R > \tfrac{k}{8}\right] \;\ge\; 1-\exp\!\left(-\tfrac{\sqrt{k}}{32\varepsilon}\right).
\]
\end{lemma}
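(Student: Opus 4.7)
My plan is to reduce the claim to a concentration estimate on the number $M$ of \textsf{Report} messages triggered during the next $n_0$ events. Under \cref{alg:attack}, the attacker advances to the next site exactly when the published $\hat n$ changes. Inspecting \cref{alg:HYZ12}, only the \textsf{Report} branch publishes $\hat n$ (the \textsf{BoundaryReached} branch updates internal state but does not publish), so every observed change corresponds to a \textsf{Report} from the current site. Therefore, after $M$ reports the attacker has visited sites $1,2,\dots,\min(M,k)$ in round-robin order, and the number of distinct reporting sites is $R=\min(M,k)$. Hence $R>k/8$ is equivalent to $M>k/8$, and it suffices to lower bound $M$.

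To lower-bound $M$, I would first show that the transmission probability in effect at every event in the window is at least $p_{\min}:=\sqrt{k}/(4\varepsilon n_0)$. By \cref{obs:HYZ12_p_bound} together with $n_0\ge 4\sqrt{k}/\varepsilon$, the initial round uses $p_0\ge \sqrt{k}/(2\varepsilon n_0)$. Since the \textsf{Doubling Protocol} opens a new round only when its server counter $n'$ doubles, and $n'\in[n/2,n]$, at most one new round can start while the event count grows from $n_0$ to at most $2n_0$. Any such new round begins at $n_1\le 2n_0$ and \cref{obs:HYZ12_p_bound} again gives $p_1\ge \sqrt{k}/(4\varepsilon n_0)$. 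Consequently each event independently triggers a \textsf{Report} with probability at least $p_{\min}$, so by a coupling argument $M$ stochastically dominates $\Bin(n_0,p_{\min})$, whose mean is $\mu=\sqrt{k}/(4\varepsilon)\ge k/4$ (using $\varepsilon\le 1/\sqrt{k}$).

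The final step is a Chernoff bound. Since $k/8\le \mu/2$, applying \cref{col:chernoffBinTwoTails} with $\alpha=1/2$ yields
\[
\Pr[M\le k/8] \;\le\; \exp(-\mu/8) \;\le\; \exp(-\sqrt{k}/(32\varepsilon)),
\]
as desired. The main obstacle is the middle step: rigorously verifying that at most one new round can start during the window, and that the per-event Bernoulli trials genuinely satisfy the required probability bound under the attacker's adaptive event schedule. Once this is in hand, the rest is a routine binomial concentration argument.
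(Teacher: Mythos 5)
Your proof is correct and follows essentially the same route as the paper: both reduce to lower-bounding the number of \textsf{Report} messages, observe that every transmission probability in effect during the window $[n_0+1,2n_0]$ is at least $\sqrt{k}/(4\varepsilon n_0)$ via \cref{obs:HYZ12_p_bound}, stochastically dominate by a $\Bin(n_0,\sqrt{k}/(4\varepsilon n_0))$ variable, and finish with the Chernoff lower tail (the paper takes $\alpha=1-\varepsilon\sqrt{k}/2$ while you take $\alpha=1/2$; both land on $\exp(-\sqrt{k}/(32\varepsilon))$). Your intermediate claim that ``at most one new round can start'' is true but unnecessary — the paper simply notes that every round governing events in $[n_0+1,2n_0]$ starts at some $n\in[n_0,2n_0]$, so $p\ge \sqrt{k}/(4\varepsilon n_0)$ holds regardless of how many such rounds there are.
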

\begin{proof}
Each update to the published estimate is caused by a \textsf{Report} message from the site currently injected by the attack. Whenever a site reports, the attack observes the change in $\hat{n}$ and moves to the next site.

The transmission probability values during the next  $n_0$ events are configured at a round that starts at some event count $n\in [n_0,2n_0]$ (that is, either at the round that starts at $n_0$ or at a later round).
 By \cref{obs:HYZ12_p_bound}, using our assumption that $n_0 \geq 4\sqrt{k}/\eps$, the transmission probability values are in the interval
\[
\left[\frac{\sqrt{k}}{4\varepsilon n_0}, \frac{2\sqrt{k}}{\varepsilon n_0}\right].
\] 

 Therefore
\[
\Pr\!\left[R<\tfrac{k}{8}\right] \;\le\;
\Pr\!\left[\Bin\left[n_0,\frac{\sqrt{k}}{4\varepsilon n_0} \right]<\tfrac{k}{8}\right].
\]

Applying the Chernoff bound \cref{col:chernoffBinTwoTails} with 
$\alpha = 1-\tfrac{\varepsilon\sqrt{k}}{2}$ gives
\[
\Pr\!\left[\Bin\!\left[n_0,\tfrac{\sqrt{k}}{4\varepsilon n_0}\right]<\tfrac{k}{8}\right]
\;\le\; \exp\!\left(-\tfrac{1}{2}\tfrac{\sqrt{k}}{4\varepsilon}\Bigl(1-\tfrac{\varepsilon\sqrt{k}}{2}\Bigr)^2\right)
\;\le\; \exp\!\left(-\tfrac{\sqrt{k}}{32\varepsilon}\right),
\]
where the last inequality uses $\varepsilon\sqrt{k}\le 1$.

Finally, by the round-robin structure of the attack, once $k/8$ sites have reported, they are necessarily distinct. Hence with probability at least $1-\exp(-\sqrt{k}/(32\varepsilon))$, the number of distinct reporting sites in the round exceeds $k/8$.
\end{proof}

\begin{corollary}[Many sites report once $n/14$ events have occurred]\label{cor:manyaftern-5}
Let $n \ge 56\sqrt{k}/\varepsilon$, and let $n_0 \ge n/14$ be the first event index that is greater or equal to $n/14$ at which a new round starts.  
Let $R^*_n$ be the number of distinct sites that report at least once during $[n_0,n]$.  
Then
\[
\Pr\!\left[R^*_n > \tfrac{k}{2}\right] \;\ge\; 1-\exp\!\left(-\tfrac{\sqrt{k}}{8\varepsilon}\right),
\]
where the probability is over the randomness of the protocol execution from $n_0+1$ onward, conditioned on the transcript up to $n_0$.
\end{corollary}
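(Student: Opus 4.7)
My plan is to apply Lemma~\ref{lem:attack-Sites-per-round} at a sequence of round boundaries inside $[n_0,n]$ that are dyadically spaced, and to combine the per-interval contributions using the round-robin structure of the attack.

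Starting from $m_0=n_0$, I will inductively let $m_{i+1}$ be the first round boundary of Algorithm~\ref{alg:doublingdet} satisfying $m_{i+1}\ge 2m_i$, stopping once $2m_i>n$. By construction, the sub-intervals $(m_i,2m_i]$ are pairwise disjoint subsets of $(n_0,n]$. The assumptions $n\ge 56\sqrt{k}/\varepsilon$ and $n_0\ge n/14$ give $m_i\ge n_0\ge 4\sqrt{k}/\varepsilon$, so Lemma~\ref{lem:attack-Sites-per-round} applies at each $m_i$ and, with probability $\ge 1-\exp(-\sqrt{k}/(32\varepsilon))$, yields more than $k/8$ reports in $(m_i,2m_i]$. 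Because Algorithm~\ref{alg:attack} cycles through sites in round-robin order and advances only upon a change in $\hat n$, the sites reporting in disjoint event windows are consecutive in the cyclic order and hence distinct, so the per-interval contributions aggregate to distinct reporting sites.

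To obtain the sharper failure probability $\exp(-\sqrt{k}/(8\varepsilon))$ (which beats the naive union bound $4\exp(-\sqrt{k}/(32\varepsilon))$ over four applications), I will instead apply the Chernoff lower-tail bound of Corollary~\ref{col:chernoffBinTwoTails} directly to the total number of reports in $(n_0,n]$. Lower-bounding the expected total via $p_e\ge\sqrt{k}/(2\varepsilon e)$ (which follows from Observation~\ref{obs:HYZ12_p_bound} applied at the round containing event $e$) and summing across $e\in(n_0,n]$ yields $\mu\ge\sqrt{k}/\varepsilon$; then taking $\alpha=1-\sqrt{k}\varepsilon/2\ge 1/2$ in Chernoff gives $\Pr[R^*_n\le k/2]\le\exp(-\mu\alpha^2/2)\le\exp(-\sqrt{k}/(8\varepsilon))$.

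\textbf{Main obstacle.} The delicate regime is when $n_0$ approaches the upper end $4n/7$ of its range permitted by Observation~\ref{obs:round-length}: there $n/n_0$ can be as small as $7/4$, so only one dyadic anchor may fit and the iteration scheme collapses to a single Lemma~\ref{lem:attack-Sites-per-round} application yielding only $k/8$ reports. Handling this regime requires leaning entirely on the direct Chernoff argument above, which in turn requires verifying that the $\log(n/n_0)$ factor in the mean is still large enough---the specific constants $14$ and $56$ chosen in the statement are evidently tuned to make this balance work.
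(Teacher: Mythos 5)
Your main route has a genuine quantitative gap at the step ``summing across $e\in(n_0,n]$ yields $\mu\ge\sqrt{k}/\varepsilon$.'' The sum of your per-event lower bounds is $\sum_{e=n_0+1}^{n}\frac{\sqrt{k}}{2\varepsilon e}\approx\frac{\sqrt{k}}{2\varepsilon}\ln\frac{n}{n_0}$, and the corollary only constrains $n_0$ to lie in roughly $[n/14,\,4n/7]$ (the upper end coming, as you yourself note, from \cref{obs:round-length}). Hence $\ln(n/n_0)$ can be as small as $\ln(7/4)<1$, so the best you get this way is $\mu\gtrsim 0.2\,\sqrt{k}/\varepsilon$, not $\sqrt{k}/\varepsilon$. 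With that weaker mean bound, the choice $\alpha=1-\sqrt{k}\varepsilon/2$ no longer places the threshold $k/2$ below $(1-\alpha)\mu$, and in the regime $\sqrt{k}\varepsilon$ close to $1$ the available lower bound on the mean ($\approx 0.2k$) is itself below $k/2$, so the lower-tail bound of \cref{col:chernoffBinTwoTails} cannot certify $k/2$ reports at all. You correctly identified $n_0$ near $4n/7$ as the delicate regime, but then asserted without verification that the constants $14$ and $56$ were ``tuned to make this balance work''; they are not --- the role of $56$ (via $n_0\ge n/14\ge 4\sqrt{k}/\varepsilon$) is only to make \cref{lem:attack-Sites-per-round} applicable, and it does nothing to enlarge $\ln(n/n_0)$ or the mean. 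So the sharper-probability argument, as written, does not go through.

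For comparison, the paper's proof is exactly the single-window degenerate case of your dyadic plan: use \cref{obs:round-length} to locate the first round start $n_0\ge n/14$ (asserting it is at most $n/2$, so $[n_0,2n_0]\subseteq[n_0,n]$) and apply \cref{lem:attack-Sites-per-round} once to that window. As you observe, a single application only delivers the lemma's constants ($k/8$ distinct sites, failure probability $\exp(-\sqrt{k}/(32\varepsilon))$) rather than the $k/2$ and $\exp(-\sqrt{k}/(8\varepsilon))$ stated in the corollary; the paper glosses over this constant mismatch (and downstream, \cref{lem:attack-efficacy} in fact only uses $R^*_n\ge k/8$), so up to constants your fallback route coincides with the intended proof. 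The honest summary: your single-window fallback is essentially the paper's argument, but the refined Chernoff computation that you lean on to reach the stated constants is not correct in the worst-case position of $n_0$.
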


\begin{proof}
By \cref{obs:round-length}, some round must start between $n/14$ and $n/2$; let $n_0$ be the first such start.  
Applying \cref{lem:attack-Sites-per-round}, with probability at least $1-\exp(-\sqrt{k}/(8\varepsilon))$, more than $k/2$ sites report in the interval
$[n_0,2n_0] \subset [n_0,n]$.
\end{proof}

\begin{definition}[Reporting transcript]
Consider the randomness in the execution of the protocol under attack.  Observe that 
the attack algorithm is deterministic.  
The protocol, however, uses two types of randomness:
\begin{itemize}
    \item site-side randomness, determining when sites send reports, and
    \item server-side randomness, used to adjust the counters $\bar n_i$ at the beginning of each round.
\end{itemize}

We call a transcript that fixes all site-side randomness but leaves the server-side randomness unspecified a \emph{reporting transcript}.
\end{definition}

\begin{observation}
A reporting transcript uniquely determines the sequence of round start indices, the configured transmission probabilities, and the states of the attack algorithm and the sites.  
In particular, the value of $R^*_n$ is fixed by the reporting transcript.  
The only remaining randomness lies in the server-side adjustments of $\bar n_i$ and, consequently, in the estimates $\hat n$ published by the server.
\end{observation}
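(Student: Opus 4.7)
The plan is to prove the observation by induction on the sequence of injected events, maintaining the invariant that after every event, the attack's internal state (current site $i$ and $\hat n_{\text{last}}$), the per-site local counters $n_i$, the list of round start indices produced by \cref{alg:doublingdet}, and the configured transmission probability $p$ are all deterministic functions of the reporting transcript. The base case is immediate since all of these are initialized to fixed values before any event is injected.

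For the inductive step, I would trace event $t+1$ through both protocols. The attack's current site $i$ is determined by the hypothesis, so the injection target is fixed; the site's Bernoulli draw for this event is part of the reporting transcript, and combined with the current $p$ (itself determined by the hypothesis) it fixes whether a \textsf{Report} is dispatched. Whether \textsf{DoublingNotify} fires at site $i$ depends only on $n_i$, and whether \textsf{BoundaryReached} fires depends only on the server's doubling counter $n'$, both of which are already determined by the sequence of injections and the hypothesis. When \textsf{BoundaryReached} fires, the updated $p$ is a deterministic function of $n'$, closing the loop.

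The step I expect to require the most care is showing that the attacker's ``did $\hat n$ change?'' test is itself determined by the reporting transcript, despite the server-side randomness introduced by the $Z_i$ adjustments. My plan is to appeal directly to the structure of \cref{alg:HYZ12}: the only \texttt{Publish} statement lies in the \textsf{Report} branch, while the \textsf{BoundaryReached} branch only rewrites internal $\bar n_i$ and $\hat n_i$ values. Thus the attacker's \textsf{Observe}() returns a new value exactly when a \textsf{Report} has just been processed (and, almost surely over the server-side randomness, the newly published value is distinct from $\hat n_{\text{last}}$ because $\bar n_i$ jumps to a previously unseen positive value). Consequently, the round-robin advancement of the attacker is a deterministic function of the reporting transcript.

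Once the invariant is established, the remaining assertions follow directly: the identity of the site receiving each event is fixed, hence so is the set of sites that send at least one \textsf{Report} inside $[n_0, n]$, yielding that $R^*_n$ depends only on the reporting transcript. The only residual randomness is the server's sampling of the $Z_i$'s at round boundaries, which alters the stored $\bar n_i$'s and therefore the numerical values of the published $\hat n$'s, exactly as the observation states.
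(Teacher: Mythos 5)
Your inductive framing is the natural one, and you correctly flag the delicate step: whether the attacker's ``did $\hat n$ change?'' test is a function of the reporting transcript alone. The paper itself states this observation without a proof, so you are filling in what the authors treat as self-evident, and your overall structure (events fix counters, counters fix \textsf{DoublingNotify} and \textsf{BoundaryReached}, $p$ is a deterministic function of $n'$, Reports are part of the transcript) is exactly the right chain of dependencies.

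However, your resolution of the delicate step has a genuine gap. You argue that the newly published value is distinct from $\hat n_{\text{last}}$ ``almost surely \ldots because $\bar n_i$ jumps to a previously unseen positive value.'' This only controls $\hat n_i$, not the published total $\hat n = \sum_j \hat n_j$. Between two consecutive \textsf{Report}s a \textsf{BoundaryReached} can fire \emph{without} a Publish (the branches in \cref{alg:HYZ12} are independent, and only the Report branch publishes); when it does, every active site's $\hat n_j$ is shifted by $-Z_j + (1/p_{\text{new}} - 1/p_{\text{old}})$, which is an integer of either sign. Since each $Z_j$ is a zero-inflated geometric, these shifts have full support on the integers within range, so there is a \emph{nonzero} probability that they exactly cancel the increase in $\hat n_i$ coming from the Report, leaving the published $\hat n$ equal to $\hat n_{\text{last}}$. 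So the event is not almost-sure; and more importantly, even if it were, ``almost surely'' would still be weaker than the observation's claim that the reporting transcript \emph{uniquely determines} the attack's state and $R^*_n$. To close the gap one must either (a) argue the cancellation cannot occur (which your $\bar n_i$-based reasoning does not do, and which I do not believe is true in general), or (b) weaken the observation to a high-probability/up-to-bad-events statement and propagate that through \cref{lem:attack-efficacy} and \cref{thm:attack}. As stated, your step is asserted rather than proved, and the stated reason does not support even the weakened ``almost surely'' version.
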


\begin{lemma}[Attack efficacy conditioned on reporting transcript]\label{lem:attack-efficacy}
Fix an event index $n \geq 500 \sqrt{k}/\eps$ and the reporting transcript up to that point.  
Let $m^*$ denote the number of events received by the active site at event $n$ since its most recent report.  
Assume $R^*_n \ge k/2$, where $R^*_n$ is as in \cref{cor:manyaftern-5}.  
Then there exist absolute constants $c_1,c_2>0$ such that
\begin{align}
\E\!\left[\hat n - n \right] 
  &= \E\!\left[\sum_{i=1}^k (\hat n_i - n_i)\right] 
  \;\;\ge\;\; c_1\sqrt{k}\varepsilon n - m^*, \label{eq:expectation}
\\[1ex]
\Pr\!\left[\hat n - n \;\ge\; \tfrac{c_1}{2} \sqrt{k}\,\varepsilon n -m^*\right] 
  &\;\;\ge\; 1 - e^{-c_2 k}. \label{eq:highprob}
\end{align}
\end{lemma}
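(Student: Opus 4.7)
The plan is to condition on the reporting transcript, which fixes all site-side randomness, and analyze the bias per site using only the server-side round adjustments. Let $S$ be the set of $|S|=R^*_n\ge k/2$ sites that reported during $[n_0,n]$, and for $i\in S$ let $n_i^{(r)}$ and $p_i^{(r)}$ denote site $i$'s local count and the transmission probability at its last report in that window. The round that produced this report started no earlier than $n_0\ge n/14$, so \cref{obs:HYZ12_p_bound} gives $1/p_i^{(r)}\ge \varepsilon n/(28\sqrt{k})$. After its last report, a non-active site $i$ receives no further events (the active site $i^*$ has $n_{i^*}-n_{i^*}^{(r)}=m^*$ unreported events), so every subsequent change to $\bar n_i$ is a server-side zero-inflated geometric correction applied at a round boundary. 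Iterating the truncation at $0$ shows that $\bar n_i$ equals $\max(0, n_i^{(r)}-\sum_j Z_{i,j})$, and telescoping the intermediate corrections via \cref{lemm:geometrictelescopes} yields $\bar n_i=\max(0,\,n_i^{(r)}-Y_i)$ with $Y_i\sim Z_{p,p_i^{(r)}}$, where the $Y_i$'s are mutually independent across sites.

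Next I decompose the per-site error. A direct computation from \cref{alg:HYZ12} gives
\[
\hat n_i-n_i \;=\; \bigl(1/p-1\bigr)-Y_i+\eta_i-m_i^*,
\qquad
\eta_i\;:=\;\mathbbm{1}[Y_i\ge n_i^{(r)}]\,\bigl(Y_i-n_i^{(r)}-1/p+1\bigr),
\]
where $m_i^*=m^*\cdot\mathbbm{1}[i=i^*]$. The key observation is that $\eta_i$ has mean zero: on the truncation event $\{Y_i\ge n_i^{(r)}\}$ we must have $B_i=1$ and $G_i\ge n_i^{(r)}$, so by memorylessness $G_i-n_i^{(r)}\sim \Geom(p)-1$, which has mean $1/p-1$. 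Summing over $i\in S$ then yields $\E[\hat n-n]=\sum_{i\in S}(1/p_i^{(r)}-1)-m^*$; plugging in $1/p_i^{(r)}\ge \varepsilon n/(28\sqrt{k})$, $|S|\ge k/2$, and the hypothesis $n\ge 500\sqrt{k}/\varepsilon$ (which absorbs the ``$-1$'' into a constant factor) delivers \eqref{eq:expectation} for an explicit small constant $c_1$.

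For the high-probability bound, the deviation equals $-\sum_{i\in S}(Y_i-\E Y_i)+\sum_{i\in S}\eta_i$, a sum of independent zero-mean terms. The upper tail of $\sum_{i\in S}(Y_i-\E Y_i)$ is controlled directly by \cref{lemm:BersteinTailZeroInflated} with $A=\sum \alpha_i\le k$. The variables $-\eta_i$ are again mean-zero zero-inflated shifted geometrics (conditional on the indicator, $-\eta_i\stackrel{d}{=}1/p-\Geom(p)$), so a one-line adaptation of \cref{lemm:BersteinTailZeroInflated} controls their upper tail with the same sub-exponential scale $1/p$ and per-term variance $O(1/p^2)$. Substituting $p=\Theta(\sqrt{k}/(\varepsilon n))$ from \cref{obs:HYZ12_p_bound,obs:round-length} and $|S|\le k$, each tail at deviation level $(c_1/4)\sqrt{k}\,\varepsilon n$ is at most $\exp(-\Omega(c_1^2 k))$, which after a union bound yields \eqref{eq:highprob} with $c_2=\Omega(c_1^2)$.

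The main obstacle is the truncation of $\bar n_i$ at zero, which makes the per-site estimator piecewise and could in principle produce a very negative contribution whenever $Y_i$ overshoots $n_i^{(r)}$. The $\eta_i$ reparametrization above cleanly isolates this effect into a mean-zero correction with sub-exponential tails, so the truncation neither biases the sum nor destroys concentration; everything else reduces to Bernstein-type bookkeeping with the parameter bounds already established.
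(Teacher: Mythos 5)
Your proof is correct and follows essentially the same route as the paper's: lower-bound $1/p'_i\ge \eps n/(28\sqrt{k})$ for the $\ge k/2$ recently-reporting sites via \cref{obs:HYZ12_p_bound} and \cref{obs:round-length}, collapse the cumulative server-side adjustments since a site's last report into a single $Z_{p,p'_i}$ via \cref{lemm:geometrictelescopes}, compute the expectation to extract the bias, and control the deviation with \cref{lemm:BersteinTailZeroInflated} at level $\Theta(\sqrt{k}\,\eps n)$, giving $e^{-\Omega(k)}$. The one substantive difference is your $\eta_i$ correction, which explicitly handles the truncation $\bar n_i \gets \max\{0,\ \bar n_i - Z_i\}$ that the paper's proof silently drops (it writes $\bar n_i = n_i - Z_{p,p'_i}$ with no $\max$); your memorylessness argument showing $\E[\eta_i]=0$ is a genuine gain in rigor, though the upper tail of $\sum_i(-\eta_i)$ is not literally an instance of \cref{lemm:BersteinTailZeroInflated} and needs the (routine) Bernstein bound for mean-zero variables bounded above by $1/p$ with variance $O(1/p^2)$, as you indicate. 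One bookkeeping slip: the expectation identity and the deviation decomposition should run over all visited sites rather than only $S$ — sites outside $S$ contribute mean $1/p'_i-1\ge 0$ (so the lower bound on the expectation survives) but they do fluctuate and must be included in the concentration step; this costs nothing since the same tail bound applies with $A\le k$, which is exactly how the paper handles it.
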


\begin{proof}
Let $p$ be the transmission probability at event $n$.
For each site $i$, let $p'_i\!\ge p$ be the transmission probability in the round in which $i$ most recently reported.
By \cref{obs:round-length}, any event time $t\in[n/2,n]$ lies in a round that started at some $n_0(t)\ge t/7\ge n/14$.
Applying \cref{obs:HYZ12_p_bound} at that start time gives
\[
p'_i \;\le\; \frac{2\sqrt{k}}{\varepsilon\,n_0(t)}
\;\le\; \frac{28\sqrt{k}}{\varepsilon\,n}
,
\]
for every site that reports at some $t\in[n/2,n]$.
Since $R_n^*\ge k/8$, at least $k/8$ sites satisfy $p'_i \le \frac{28\sqrt{k}}{\varepsilon\,n}$.

Observe that $\bar n_i$ equals the last count reported by site $i$, minus the cumulative randomized adjustments applied so far.  
By \cref{lemm:geometrictelescopes}, the distribution of this cumulative adjustment is $Z_{p,p'_i}$.

For $i\neq i^*$, the last reported count is equal to the current count $n_i$ and therefore
\[
\bar n_i = n_i - Z_{p,p'_i},\qquad 
\E[\bar n_i]= n_i - \tfrac{1}{p} + \tfrac{1}{p'_i}.
\]
For the active site $i^*$, the last reported count is $n_i-m^*$ and therefore
\[
\bar n_{i^*} = n_{i^*} - m^* - Z_{p,p'_{i^*}},\qquad
\E[\bar n_{i^*}]= n_{i^*} - m^* - \tfrac{1}{p} + \tfrac{1}{p'_{i^*}}.
\]
With the definition $\hat n_i := \bar n_i + \tfrac{1}{p} - 1$, we have
\[
\E[\hat n_i] =
\begin{cases}
n_i + \tfrac{1}{p'_i} - 1, & i\ne i^*,\\[2pt]
n_{i^*} - m^* + \tfrac{1}{p'_{i^*}} - 1, & i=i^*.
\end{cases}
\]
Hence
\[
\E\!\left[\sum_{i=1}^k (\hat n_i - n_i)\right]
= \sum_{i=1}^k \frac{1}{p'_i} - k - m^*
\;\ge\; \frac{k}{8}\cdot \frac{\eps n}{28\sqrt{k}} - k - m^*
\;=\; \frac{\sqrt{k}\eps n}{224} - k - m^*.
\]

Since by our choice of large enough $n$, taking $c_1 = 1/500$ we obtain \cref{eq:expectation}.

For deviations, note that
\[
\sum_{i=1}^k \hat n_i
= \sum_{i=1}^k n_i - m^* + \sum_{i=1}^k \Bigl(\frac{1}{p'_i}-1\Bigr)
- \sum_{i=1}^k Z_{p,p'_i}.
\]
 Therefore, the distribution of the 
fluctuations of $\sum_i \hat n_i -n $ around its mean is exactly the distribution of the fluctuations of
$\sum_i Z_{p,p'_i}$ around (its mean) $\sum_i (1/p - 1/p'_i)$.
By \cref{obs:HYZ12_p_bound}, $p\geq \sqrt{k}/(2\varepsilon n)$.

Applying \cref{lemm:BersteinTailZeroInflated} with deviation $t=\tfrac{c_1}{2}\sqrt{k}\,\varepsilon n$, 
using $p\ge \sqrt{k}/(2\varepsilon n)$ and $A\le k$, yields
\[
\Pr\!\left(\left|\sum_{i=1}^k Z_{p,p'_i} - \sum_{i=1}^k \Bigl(\frac{1}{p}-\frac{1}{p'_i}\Bigr)\right|
> \tfrac{c_1}{2}\sqrt{k}\,\varepsilon n\right)
\;\le\; e^{-\frac{c_1^2}{8}k}.
\]
Rewriting in terms of $\hat n-n+m^*$ and choosing the constants appropriately gives the claimed high-probability bound \eqref{eq:highprob}.
\end{proof}

\begin{proof}[Proof of \cref{thm:attack}]
Fix an event index $n$ and consider a transcript of the execution under the attack up to that point.  
We call a transcript \emph{bad} if either $R^*_n < k/2$ or $m^* > \tfrac13 c_1 \sqrt{k}\varepsilon n$.  

From \cref{cor:manyaftern-5}, the first bad event has probability at most $\exp(-\Omega(\sqrt{k}/\varepsilon))$.  
For the second bad event, recall that $m^* \sim \Geom(p)$ with $p\ge \sqrt{k}/(\varepsilon n)$.  
By standard geometric tail bound:
\[
\Pr[m^* > t] = (1-p)^t \le \exp(-pt).
\]
Substituting $t=\tfrac13 c_1 \sqrt{k}\varepsilon n$ and $p \ge \sqrt{k}/(\varepsilon n)$ yields
\[
\Pr\!\left[m^* > \tfrac13 c_1 \sqrt{k}\varepsilon n\right]
\;\le\; \exp\!\left(- \tfrac13 c_1 \sqrt{k}\varepsilon n \cdot \tfrac{\sqrt{k}}{\varepsilon n}\right)
\;=\; \exp\!\left(- \tfrac13 c_1 k\right).
\]
Thus the probability of a bad transcript is at most $e^{-\Omega(k)}$.

Conditioned on a good transcript, we apply \cref{lem:attack-efficacy}.  
From \eqref{eq:highprob} we obtain
\[
\Pr\!\left[\hat n - n \;\ge\; \tfrac23 c_1 \sqrt{k}\varepsilon n\right] \;\ge\; 1-e^{-c_2 k}.
\]
Since good transcripts occur with probability $1-e^{-\Omega(k)}$, the same lower bound holds unconditionally (with adjusted constants $c_1,c_2>0$).  
This proves the theorem.
\end{proof}

\section{Robust Distributed Counting Protocol}\label{sec:ourRobustProtocol}

\textsf{Robust}, our robust distributed counting protocol, is described in \cref{alg:robustdc}.
The site-side protocol of \textsf{Robust} is very simple and similar to that of \textsf{HYZ12}: The site stores a local copy of a transmission probability $p$ that is updated by server broadcasts.
The site also tracks its local event count as $n_i$. 
When an event arrives, $n_i$ is incremented. With probability $p$, the site then sends a \textsf{ReportSample}$()$ message to the server. When a \textsf{CountRequest}$()$ request arrives from the server, the site sends the server its current count $n_i$.

Server-side, the protocol operates in rounds: In the beginning of each round the server broadcasts to all sites a \textsf{CountRequest}$()$, which requests their exact counts, and each site $i$ reports $n_i$, which the server stores as $\bar{n}_i$. Therefore, in the beginning of the round the server has the exact event count $\bar{n} \gets \sum_{i\in [k]}\bar{n}_i$.
The server then updates the 
transmission probability $p$, based on the updated $\bar{n}$, and broadcasts it to the sites (which update their local copy). The transmission probability remains fixed throughout the round.
The server also maintains a counter $B$ of the number of \textsf{ReportSample}$()$ messages received from sites during the round. The published estimate for the total number of events is $\hat n := \bar{n}+B/p$, and it is updated when either $B$ or $\bar n$ were updated. The round ends when the $k$th report is received and a new round is started.
\vspace{5pt}

\begin{algorithm2e}[H]
\DontPrintSemicolon
\caption{\textsf{Robust} Distributed Counting Protocol\label{alg:robustdc}}
\KwIn{number of sites $k$, accuracy $\varepsilon$, factor $c\geq 1$}
\KwOut{Running estimate $\hat n$ of the number of events $n$.}
\BlankLine
\tcp{\textbf{Site protocol for sites $i\in[k]$:}}
\textbf{Site $i$ state:} $p\gets 1$ (updated via broadcasts by server), $n_i\gets 0$ (local event counter).\;
\BlankLine
\textbf{Event arrival at site $i$:}\tcp*{runs independently at each site}
\Indp
$n_i\gets n_i+1$\;
Sample $X\sim \Bernoulli(p)$.\;
\lIf{$X=1$}{send \textsf{ReportSample}$()$ to server\tcp*{report that an event was sampled}}
\Indm
\textbf{Upon site $i$ receiving \textsf{Broadcast}$(p)$:}\quad update local $p$.\;
\textbf{Upon site $i$ receiving a \textsf{CountRequest}$()$:}\quad send \textsf{ReportCount}$(i,n_i)$ to server.\;
\BlankLine
\tcp{\textbf{Server protocol:}}
\textbf{Server state:}  Counter of messages in current round $B\gets 0$; Number of messages per round $L\gets k$; for each site $i\in [k]$:  $\bar n_i \gets 0$ (reported $n_i$ at the beginning of current round); current transmit prob $p \gets 1$.\;  
Running estimate: $\hat{n} \gets B/p + \bar{n}$; where $\bar{n} := \sum_{i\in[k]} \bar{n}_i$.
\BlankLine
\textbf{Upon server receiving \textsf{ReportSample}$()$:}\;
\Indp
$B\gets B+1$\;
\If(\tcp*[f]{close round, open new round}){$B=k$}{
Broadcast \textsf{CountRequest}$()$.\;
Collect $\bar{n}_i \gets$ \textsf{ReportCount}$(i,n_i)$ from sites $i\in[k]$\;
Update $p\gets \min\left\{1,c\frac{\sqrt{k}}{\eps \bar{n}}\right\}$; \,\textsf{Broadcast}$(p)$ 
\;$B\gets 0$

}

\Indm
\end{algorithm2e}

\subsection{Properties of the Robust Protocol}

The properties of the protocol are stated with respect to an adversary (event stream) that is \emph{white-box adaptive}.
Let $\hat n(n)$ be the server’s estimate after the $n$-th event and let
$M_N$ denote the total number of \emph{messages} sent by the time the $N$-th event occurs.
All probabilities are over the protocol’s internal randomness.

\begin{theorem}[\textsf{Robust}: per-event $(\eps,\delta)$ accuracy]\label{thm:RobustProtocol}
Fix $\delta\in(0,1)$. There exists a constant $c=c(\delta)>0$ such that, for any event stream and every $n\ge1$,
\[
\Pr\big[\,|\hat n(n)-n|>\varepsilon n\,\big]\le\delta.
\]
Moreover,
\[
\mathbb{E}[M_N]
\;=\;
O\!\left(\frac{k\,\log N}{\log\!\left(1+\frac{\sqrt{k}\,\varepsilon}{c}\right)}\right)
\;=\;
\begin{cases}
O\!\left(\dfrac{\sqrt{k}}{\varepsilon}\,\log N\right), & \varepsilon \le c/\sqrt{k},\\[6pt]
O\!\big(k\log N\big), & \varepsilon> c/\sqrt{k}
.
\end{cases}
\]
\end{theorem}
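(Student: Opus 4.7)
The plan is to break the analysis into three parts: a symmetry-based reduction to oblivious inputs, a per-round concentration bound for partial sums of geometrics, and a delicate per-event argument that avoids a $\log N$-factor union bound. For \textbf{robustness by symmetry}, note that the \textsf{ReportSample}$()$ messages carry no site identifier, and the transmission probability $p$ at any moment is common across all sites and determined entirely by the server's transcript (itself a function only of the sequence of reports and of round boundaries). Hence for any fixed history the law of the next \textsf{ReportSample} (and therefore of the published estimate) depends only on the number of events received so far, not on their allocation among sites. Consequently, for any white-box adaptive adversary, the joint distribution of $(\hat n(1),\hat n(2),\ldots)$ coincides with that under the oblivious stream routing every event to site $1$, and it suffices to prove per-event accuracy against oblivious inputs.

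Next, \textbf{per-round accuracy}. Fix a round $r$ with exact starting count $n_0$ and probability $p$. If $p=1$ the estimate coincides with the true count throughout, so suppose $p=c\sqrt{k}/(\eps n_0)<1$. The $B$-th report arrives at internal event index $S_B=\sum_{j=1}^{B} G_j$ where $G_j\sim\Geom(p)$ are i.i.d.; for any internal event $e\in[S_B,S_{B+1})$ the instantaneous error equals $|B/p-e|$, so the within-round max error is bounded by $\max_{0\le B\le k}|S_B-B/p|+1/p$. A maximal Bernstein-type bound for the centred sub-exponential increments $G_j-1/p$ (in the spirit of \cref{lemm:BersteinTailZeroInflated}, applied to both tails) then yields
\[
\Pr\!\left[\,\max_{0\le B\le k}|S_B-B/p|>\tfrac12\eps n_0\,\right]\le\delta',
\]
provided the protocol constant $c$ is taken large enough in $\delta'$; combined with $1/p\le\eps n_0/(c\sqrt{k})\le\eps n_0/2$, the within-round error is at most $\eps n_0$ with probability at least $1-\delta'$.

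To convert this into \textbf{per-event accuracy at $n$}, I would handle small $n<c\sqrt{k}/\eps$ trivially (every round has $p=1$ and the estimate is exact) and otherwise set $t_0:=n(1-C\eps\sqrt{k})$ for a constant $C=C(\delta)$. Writing $r^\star$ for the round containing event $n$, the goal is to bound
\[
\Pr\!\bigl[|\hat n(n)-n|>\eps n\bigr]\;\le\;\Pr\!\bigl[n_0^{(r^\star)}\le t_0\bigr]\;+\;\sum_{r:\,n_0^{(r)}\in(t_0,n]}\Pr\!\bigl[r\text{ occurs and its within-round error exceeds }\eps n_0^{(r)}\bigr],
\]
and to make each term at most $\delta/2$. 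The first is controlled by an upper-tail estimate on the round length $\sum_{j=1}^{k}\Geom(p^{(r)})$: for a round with $n_0^{(r)}\le t_0$ to stretch past $n$, its length must exceed $n-n_0^{(r)}\ge C\eps\sqrt{k}n_0^{(r)}$, a factor $\Omega(Cc)$ over its mean $\eps\sqrt{k}n_0^{(r)}/c$; a Bernstein-for-geometrics bound gives $e^{-\Omega(Cck)}$ per round, and a union bound over the at most $O(\log n/\log(1+\eps\sqrt{k}/c))$ candidate rounds keeps the first term below $\delta/2$ for $C$ large. For the second term, I would show $\E\!\left[|\{r:n_0^{(r)}\in(t_0,n]\}|\right]=O(Cc)$, since successive starting counts grow by a factor whose expected logarithm is $\Theta(\log(1+\eps\sqrt{k}/c))$, so the expected number of starts in an interval of log-width $\Theta(C\eps\sqrt{k})$ is $\Theta(Cc)$; using independence of within-round randomness from the past, the total failure contribution is $O(Cc)\cdot\delta'$, which is made $\le\delta/2$ by taking $c$ large. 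The communication bound is then routine: each round uses exactly $O(k)$ messages ($k$ \textsf{ReportSample}s, one \textsf{CountRequest}, $k$ \textsf{ReportCount} replies, one broadcast of $p$), and a Wald-type argument on the multiplicative growth of $\bar n^{(r)}$ yields the expected number of rounds as $O(\log N/\log(1+\sqrt{k}\eps/c))$, matching the stated bound.

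The \textbf{main obstacle} is the per-round-to-per-event conversion. A direct union bound over all rounds up to time $n$ would cost precisely the $\log N/\log(1+\sqrt{k}\eps)$ factor that appears in the optimal number of rounds and would wreck per-event optimality. Calibrating the window length $C\eps\sqrt{k}n$ so that only $O(1)$ rounds (in expectation) begin in it, while keeping the probability that a round starting before the window reaches $n$ small, requires the simultaneous tuning of $C$, $c$, and the per-round failure $\delta'$ against the geometric-sum tail bounds; this is the technical heart of the argument.
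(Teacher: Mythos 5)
Your high-level plan matches the paper's proof: robustness by site-symmetry (Section 4.1), per-round concentration via a maximal inequality for partial sums of geometrics (Lemma 4.2), and a window argument from $t_0$ to $n$ to pass from per-round to per-event accuracy without paying a $\log N$-sized union bound (Lemma 4.3 and Section 5). However, your treatment of the first term, $\Pr[\text{no round start in }(t_0,n]]$, has a genuine gap that the paper's argument is specifically designed to avoid.

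You propose to bound $\Pr[n_0^{(r^\star)}\le t_0]$ by a union bound over the $O(\log n/\log(1+\eps\sqrt{k}/c))$ candidate rounds, using a per-round tail estimate of $e^{-\Omega(Cck)}$, and then claim this is $\le\delta/2$ ``for $C$ large.'' This does not close: the number of candidate rounds grows unboundedly with $n$ (and with $1/(\eps\sqrt{k})$), while $c$ and $C$ are required to be constants depending only on $\delta$. Even a refined dyadic version of your union bound leaves a prefactor of order $1/\log(1+\eps\sqrt{k}/c)\approx c/(\eps\sqrt{k})$ in front of $e^{-\Omega(Cck)}$, which forces $Cck\gtrsim\log(1/(\eps\sqrt{k}\delta))$; that in turn would force $c$ to grow with $\eps$ and $k$, contradicting $c=c(\delta)$. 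The correct fix, used in the paper's Lemma 5.3, is not a union bound at all: let $W'$ be the \emph{last} round start at or before $t_0$. The event ``no start in $(t_0,n]$'' is exactly the event that the round starting at $W'$ has length exceeding $n-W'$. By the Markov property (the round length at $W'$ is drawn freshly conditional on $W'$), $\Pr[\text{no start}]=\E[f(W')]$ where $f(m):=\Pr[S_m>n-m]$. Since $S_m$ is stochastically nondecreasing in $m$ while $n-m$ decreases, $f$ is nondecreasing, so $\E[f(W')]\le f(t_0)\le\delta$ by the choice of $t_0$. This replaces your $O(\log n)$-term union bound by a single conditional bound on one (stopping-time) round, which is the step your proposal is missing. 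Relatedly, the paper sets $t_0$ as the largest index with $\Pr[S_{t_0}>n-t_0]\le\delta$ (i.e., $n-t_0$ is roughly the $(1-\delta)$-quantile of a round length), rather than your $t_0=n(1-C\eps\sqrt{k})$ with a freestanding $C$; this coupled choice is what makes the expected number of round starts in $(t_0,n]$ cleanly $O(1+\log(1/\delta)/k)$ (the paper's Claim 5.2 plus a Wald-type coupling) and decouples $c$ from $\eps,k,n$. Your bound $\E[M_t]=O(Cc)$ via growth-rate heuristics is in the right spirit, but the paper's stochastic-dominance coupling to i.i.d. $S_{t_0}$ steps is what makes it rigorous. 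The robustness-by-symmetry reduction, the per-round Bernstein/Doob maximal bound, and the communication bound via Lemma 4.1 are all essentially as in the paper.
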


\begin{theorem}[\textsf{Robust}: uniform $(\eps,\delta)$ accuracy]\label{thm:RobustProtocolforall}
For any $\varepsilon>0$, $\delta\in(0,1)$, and  $N$, set
\[
c \;=\; O\!\left(\max\left\{\sqrt{A},\,\frac{A}{\sqrt{k}}\right\}\right),
\qquad
A \;:=\; \log\!\frac{1}{\delta}\;+\;\log\log N\;+\;\max\!\left\{0,\,\log\!\frac{1}{\sqrt{k}\,\varepsilon}\right\}.
\]
Then, for any event stream of length at most $N$, with probability at least $1-\delta$,
\[
\max_{n\in[N]}\frac{|\hat n(n)-n|}{n}\;\le\;\varepsilon,
\qquad
\text{and}\qquad
M_N \;=\; O\!\left(\frac{k\,\log N}{\log\!\left(1+\frac{\sqrt{k}\,\varepsilon}{c}\right)} \;+\; k\log\!\frac{1}{\delta}\right).
\]
\end{theorem}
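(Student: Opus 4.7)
\textbf{Proof plan for Theorem~\ref{thm:RobustProtocolforall}.} My approach has two phases: reduce a white-box adaptive stream to an oblivious single-site stream by symmetry, then run a per-round analysis union-bounded over all rounds. For the reduction, I would observe that the server-side state consists only of aggregate quantities ($B$, $\bar n=\sum_i \bar n_i$, $p$) and that every incoming event triggers an independent $\Bernoulli(p)$ sample regardless of which site receives it. A short coupling/induction argument on the transcript shows that relabeling the recipient of the next event leaves the joint distribution of all future messages and published estimates unchanged. Hence it suffices to prove the theorem assuming every event is delivered to site $1$, which is an oblivious stream.

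In this oblivious setting, let $T_r$, $p_r$ be the event count and transmission probability at the start of round $r$. Since the round terminates on the $k$-th \textsf{ReportSample}, $T_{r+1}-T_r=\sum_{j=1}^{k}X_j^{(r)}$ with $X_j^{(r)}\stackrel{\mathrm{iid}}{\sim}\Geom(p_r)$, and after $\ell$ samples the error is exactly $E_{r,\ell}:=\sum_{j=1}^{\ell}X_j^{(r)}-\ell/p_r$. Uniform $\varepsilon$-accuracy reduces to the event $\max_{r,\ell}|E_{r,\ell}|\le \varepsilon T_r$. First I would bound the number of rounds: in the regime $p_r<1$ we have $p_r\le c\sqrt{k}/(\varepsilon T_r)$, so a Chernoff bound on the negative-binomial length $\sum_j X_j^{(r)}$ gives $T_{r+1}\ge T_r(1+\sqrt{k}\varepsilon/(2c))$ with failure probability $e^{-\Omega(k)}$ per round. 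Combined with a stopping-time argument, this yields that with probability at least $1-\delta/3$ the total number of rounds is at most $R^*=O(\log N/\log(1+\sqrt{k}\varepsilon/c))+O(\log(1/\delta))$.

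Next I would prove a per-round maximal tail bound. The process $S_\ell=\sum_{j=1}^\ell(X_j^{(r)}-1/p_r)$ is a martingale, so $e^{\lambda S_\ell}$ is a submartingale and Doob's maximal inequality together with an optimized MGF argument (as in Lemma~\ref{lemm:BersteinTailZeroInflated} with $\alpha_i\equiv 1$, applied both to the upper and, via a symmetric argument, to the lower tail) yields
\[
\Pr\!\left[\max_{1\le\ell\le k}|E_{r,\ell}|>\tfrac{C_0}{p_r}\bigl(\sqrt{kA}+A\bigr)\right]\;\le\;e^{-A}.
\]
Choosing $A=\Theta(\log(R^*/\delta))$ makes the per-round failure probability at most $\delta/(3R^*)$; unwinding $R^*$ yields $A=\Theta(\log(1/\delta)+\log\log N+\max\{0,\log(1/(\sqrt{k}\varepsilon))\})$, which matches the statement. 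Substituting $1/p_r\le \varepsilon T_r/(c\sqrt{k})$ converts the tail bound to $(\sqrt{A}/c+A/(c\sqrt{k}))\,\varepsilon T_r$, so taking $c=\Theta(\max\{\sqrt{A},A/\sqrt{k}\})$ makes the per-round error at most $\varepsilon T_r\le \varepsilon n$ for every event $n$ in the round (the case $p_r=1$ is error-free). A union bound over the $R^*$ rounds combined with the round-count bound completes the accuracy claim.

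For the message complexity, each round exchanges exactly $k$ \textsf{ReportSample} messages plus an $O(k)$ overhead for the \textsf{CountRequest} broadcast, the $k$ \textsf{ReportCount} replies, and the \textsf{Broadcast}$(p)$, so $M_N=O(kR)$; on the event of the round-count bound this gives $O(k\log N/\log(1+\sqrt{k}\varepsilon/c)+k\log(1/\delta))$. The step I expect to be the main obstacle is the per-round maximal tail bound with constants tight enough to avoid an extra $\log k$ factor in $A$: a crude union bound over $\ell\in[k]$ would inflate $c$ by $\sqrt{\log k}$, so I would need to invoke the exponential submartingale maximal inequality rather than union-bound the $k$ prefixes individually.
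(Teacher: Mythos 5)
Your proposal is correct and takes essentially the same route as the paper: you reduce the white-box stream to an oblivious single-site stream via the symmetry of the server's state, bound the number of rounds $R_N$ by a concentration argument on the negative-binomial round lengths (the paper uses a simpler ``each round doubles with probability $\ge 1/2$'' observation plus a Chernoff bound on how many rounds are needed to accumulate enough ``big'' rounds, but this is a cosmetic difference), establish the per-round maximal deviation bound via Doob's maximal inequality applied to the exponential submartingale with a Bernstein-type MGF estimate for geometrics (exactly the paper's Lemma~\ref{lemma:perrounderrorprob} and the claim in \cref{sec:tails}), and finish with a union bound over the high-probability bound on $R_N$ and the choice $c = \Theta(\max\{\sqrt{A}, A/\sqrt{k}\})$. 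Your remark about avoiding a $\sqrt{\log k}$ inflation by using the maximal inequality rather than a naive union over the $k$ prefixes is exactly the point of the paper's maximal partial-sum claim.
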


\section{Analysis of the Robust Protocol}\label{sec:analysisRobust}

In this section we prove \cref{thm:RobustProtocol} and \cref{thm:RobustProtocolforall}.
In \cref{sec:robust} we argue that our protocol is robust, in \cref{sec:comm} we analyze the communication cost, 
and in \cref{sec:accuracy} we analyze the accuracy. 
 
\subsection{Robustness to an adaptive adversary}\label{sec:robust}
Robustness of the protocol is immediate.
An adaptive adversary in a white-box setting has the freedom to select which site to inject the next event to, based on the server and stations (historical and current) state and the transcript so far.
Observe, however, that the actions and effective state of the server at any given time are oblivious to the particulars of how the events are assigned to sites, and only depend on the total number of events that occurred. In particular, when syncing event counts at the beginning of each round, the server only uses $\bar{n}$ which is the total exact number of events. Likewise, the number and time of \textsf{ReportSample}$()$ messages, counted by the server in $B$, do not depend on the site at which the event had occurred.  The computed probability $p$, the termination and restart time of rounds, and the running estimate only depends on these two parameters $\bar n$ and $B$.

\subsection{Communication Analysis} \label{sec:comm}

We bound the number of rounds that the protocol performs on the first $N$ events.

\begin{lemma}[Number of rounds for $N$ events]\label{lemma:RforN}
Let $R_N$ be the number of rounds the protocol performs to process the first $N$ events. Then
\[
\mathbb{E}[R_N] \;=\; \Theta\!\left(\frac{\log N}{\log\!\left(1+ \frac{\sqrt{k}\,\varepsilon}{c} \right)}\right).
\]
Moreover, for any $\delta \in (0,1)$, with probability at least $1-\delta$,
\[
R_N \;=\; O\!\left(\frac{\log N}{\log\!\left(1+ \frac{\sqrt{k}\,\varepsilon}{c} \right)} \;+\; \log\frac{1}{\delta} \right),
\]
where the implicit constant in the $O(\cdot)$ notation is absolute and independent of $\delta,\varepsilon,k,N$.
\end{lemma}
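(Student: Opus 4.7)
The plan is to analyze the stochastic dynamics of the round-start counts $\bar n_j$ (with $\bar n_1=0$) and the round lengths $L_j:=\bar n_{j+1}-\bar n_j$. Since a round ends exactly when $k$ independent \textsf{ReportSample} messages have been received, $L_j\sim\mathrm{NegBin}(k,p_j)$ with $p_j=\min\{1,\,c\sqrt{k}/(\varepsilon\bar n_j)\}$. Write $r:=1+\sqrt{k}\varepsilon/c$. Two regimes appear: an \emph{early regime} where $\bar n_j\le c\sqrt{k}/\varepsilon$, $p_j=1$, and $L_j=k$ deterministically; and a \emph{main regime} where $\E[L_j\mid\mathcal F_j]=(r-1)\bar n_j$ and $\Var(L_j\mid\mathcal F_j)\le(r-1)^2\bar n_j^2/k$. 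The analysis proceeds on $Z_j:=\log\bar n_j$.

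For the upper bound on $\E[R_N]$, Chebyshev gives $\Pr[L_j\ge(r-1)\bar n_j/2\mid\mathcal F_j]\ge 1-4/k$ in the main regime; for $k\ge 8$ this event has conditional probability $\ge 1/2$, and on it $Z_{j+1}-Z_j\ge\log((r+1)/2)=\Theta(\log r)$. Since $Z_{j+1}\ge Z_j$ always, $\E[Z_{j+1}-Z_j\mid\mathcal F_j]=\Omega(\log r)$. A submartingale optional-stopping argument, together with the overshoot bound $\E[Z_{R_N+1}\mid\mathcal F_{R_N}]\le\log\E[\bar n_{R_N+1}\mid\mathcal F_{R_N}]=\log(r\bar n_{R_N})\le\log N+\log r$ (Jensen), yields $\E[R_N-\tau_2]=O(\log N/\log r)$, where $\tau_2$ is the first main-regime round. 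Because early-regime rounds add exactly $k$ events each, $\tau_2\le\lceil c/(\sqrt{k}\varepsilon)\rceil+1$, and this is absorbed into $O(\log N/\log r)$ because $\log r\le\sqrt{k}\varepsilon/c$. For small $k$ I would replace Chebyshev with a Bernstein-type tail for $L_j$ viewed as a sum of $k$ independent geometrics (invoking the machinery behind \cref{lemm:BersteinTailZeroInflated}), obtaining the same $\Omega(\log r)$ drift with absolute constants.

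For the lower bound on $\E[R_N]$, Jensen in the opposite direction gives $\E[Z_{j+1}-Z_j\mid\mathcal F_j]\le\log\bigl(1+\E[L_j/\bar n_j\mid\mathcal F_j]\bigr)=\log r$, so $Z_j-j\log r$ is a supermartingale in the main regime. Optional stopping at $R_N+1$, combined with $Z_{R_N+1}\ge\log N$ and $Z_2=\log k$, immediately yields $\E[R_N]\ge(\log N-\log k)/\log r=\Omega(\log N/\log r)$ for $N\gtrsim k$.

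For the high-probability bound, the indicators $\xi_j:=\mathbbm{1}\{L_j\ge(r-1)\bar n_j/2\}$ satisfy $\Pr[\xi_j=1\mid\mathcal F_j]\ge 1/2$ in the main regime and stochastically dominate i.i.d.\ $\Bernoulli(1/2)$. Since $m:=\lceil 2\log N/\log((r+1)/2)\rceil$ successes suffice to drive $Z_j$ past $\log N$, a standard Chernoff bound for $\mathrm{NegBin}(m,1/2)$ gives $R_N=O(m+\log(1/\delta))=O(\log N/\log r+\log(1/\delta))$ with probability $\ge 1-\delta$. The main obstacle I anticipate is controlling the overshoot of $Z_{R_N+1}$ past $\log N$ in the upper bound (since $L_{R_N}$ has only a sub-exponential tail and cannot be bounded deterministically) together with cleanly stitching the early and main regimes across the stopping-time analysis; both are ultimately routine, but require care with the indexing of the martingale stopping arguments.
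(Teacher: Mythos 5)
Your two upper bounds (the expectation bound and the high-probability bound) are correct, and at their core they use the same mechanism as the paper's proof: with at least constant conditional probability a round is ``big'' and multiplies the running count by at least $1+\sqrt{k}\varepsilon/(2c)$ (the paper gets the constant from negative-binomial concentration, you from Chebyshev with $\Var(L_j\mid\mathcal F_j)\le k/p_j^2$, plus a separate argument for small $k$), and then one counts how many rounds are needed to accumulate $O(\log N/\log r)$ such successes --- the paper via negative-binomial domination, Wald, and a Chernoff lower tail; you via a drift/optional-stopping argument on $Z_j=\log\bar n_j$ for the expectation and the same $\mathrm{NegBin}(m,1/2)$ Chernoff for the tail. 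Two steps are written more loosely than they should be, though both are fixable: (i) the quantity $\E[Z_{R_N+1}\mid\mathcal F_{R_N}]$ is not legitimate as written, since $R_N$ is not an $(\mathcal F_j)$-stopping time (whether round $j$ covers index $N$ depends on $L_j$, which biases $L_{R_N}$ upward); the intended overshoot bound does hold --- by memorylessness the residual round length after crossing $N$ has conditional mean at most $k/p_{R_N}\le (r-1)N$, so $\E[\bar n_{R_N+1}]\le rN$ and Jensen gives $\E[Z_{R_N+1}]\le\log N+\log r$ --- which is exactly the ``care'' you flag; (ii) for $k<8$ you invoke \cref{lemm:BersteinTailZeroInflated}, which is an \emph{upper}-tail bound, whereas you need a lower-tail constant (e.g.\ via $\Pr[S\le m]=\Pr[\Bin(m,p)\ge k]$); this is minor, and the paper is equally brief at the analogous step.

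The genuine gap is your lower-bound direction of the $\Theta$. The process $Z_j-j\log r$ is a supermartingale only in the main regime: in the early regime ($\bar n_j\le c\sqrt{k}/\varepsilon$, $p_j=1$) each round adds exactly $k$ events and the increment $\log(1+k/\bar n_j)$ can be as large as $\log 2$, which may vastly exceed $\log r$, so starting the optional-stopping argument at $Z_2=\log k$ is invalid. Moreover the conclusion $\E[R_N]\ge(\log N-\log k)/\log r$ is simply false in the early-regime-dominated range: with $c=1$, $k=100$, $\varepsilon=10^{-3}$, $N=10^4=\sqrt{k}/\varepsilon$, the protocol has $p=1$ throughout, so $R_N=N/k=100$, whereas $(\log N-\log k)/\log(1+\sqrt{k}\varepsilon)\approx 460$ and $\log N/\log r\approx 900$; a matching lower bound therefore requires $N$ large compared to $\sqrt{k}/\varepsilon$, not merely $N\gtrsim k$. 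To be fair, the paper's own proof also establishes only the upper-bound direction of the stated $\Theta$ (and only upper bounds are used downstream), so your proposal does cover the substantive content of \cref{lemma:RforN}; but the lower-bound claim as you state it cannot stand without additional hypotheses.
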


\begin{proof}
The number of events in one round is
\[
S \;=\; \sum_{i=1}^k X_i,
\qquad X_i \stackrel{\text{i.i.d.}}{\sim} \Geom(p),
\quad p \;=\; \min\!\left\{1,\frac{c\sqrt{k}}{\varepsilon\,\bar n}\right\},
\]
where $\Geom(p)$ is the geometric distribution on $\{1,2,\ldots\}$. Hence
$\E[S]=k/p \;\ge\; \max\{k,(\varepsilon\sqrt{k}/c)\,\bar n\}$.

Call a round \emph{big} if
\[
S \;\ge\; \tfrac12\,\mathbb{E} [S] \;\ge\; \frac{\varepsilon\sqrt{k}}{2c}\,\bar n .
\]
For $X\sim\Geom(p)$ we have $\Pr(X\ge 1/(2p))\ge 1/2$; by standard properties of sums of independent geometrics (negative binomial concentration), this implies
\[
\Pr\!\left(S \;\ge\; \tfrac12\,\mathbb{E}[S]\right)\;\ge\;\tfrac12 .
\]
Thus each round is big with probability at least $1/2$, independently of other rounds.

Let
\[
s^\star \;=\; \left\lceil \log_{\,1+\frac{\varepsilon\sqrt{k}}{2c}} N \right\rceil .
\]
After $s^\star$ big rounds we must have processed at least $N$ events. Let
$T$ be the number of rounds needed to accrue $s^\star$ big rounds. Then
$T$ is stochastically dominated by the number of $\Bernoulli(1/2)$
trials needed to see $s^\star$ successes, so $\mathbb{E}[T]\le 2s^\star$, and by a standard Chernoff lower-tail bound, for all $\delta\in(0,1)$,
\[
T \;\le\; 2\!\left(s^\star + 2\sqrt{s^\star \ln\!\tfrac{1}{\delta}} + 4\ln\!\tfrac{1}{\delta}\right)
\quad\text{with probability at least }1-\delta .
\]
Since $R_N\le T$, the high-probability bound follows. Moreover,
\[
\mathbb{E}[R_N]\;\le\;\mathbb{E}[T]\;\le\;2s^\star
\;=\; \Theta\!\left(\frac{\log N}{\log\!\left(1+\frac{\sqrt{k}\,\varepsilon}{c}\right)}\right),
\]
as replacing the base $1+\varepsilon\sqrt{k}/(2c)$ by $1+\varepsilon\sqrt{k}/c$ only changes the expression by an absolute constant factor. This completes the proof.
\end{proof}

\subsubsection{Communication on the first $N$ events}

In each round the protocol sends $O(k)$ messages. We combine with \cref{lemma:RforN} to obtain 
\[
\mathbb{E}[M_N] \;=\; O\!\left(k\,\frac{\log N}{\log\!\left(1+\frac{\sqrt{k}\,\eps}{c}\right)}\right).
\]
To establish the message bound in \cref{thm:RobustProtocol}, take $c=\Theta(1)$; then 
\(\log\!\bigl(1+\tfrac{\sqrt{k}\eps}{c}\bigr)=\Theta(\min\{1,\sqrt{k}\eps\})\).

Similarly from \cref{lemma:RforN}  we obtain that for any $\delta\in(0,1)$, with probability at least $1-\delta$,
\[
M_N \;=\; O\!\left(k\,\frac{\log N}{\log\!\left(1+\frac{\sqrt{k}\,\eps}{c}\right)} \;+\; k\log\frac{1}{\delta}\right).
\]
To establish the message bound in 
\cref{thm:RobustProtocolforall}, we substitute
\(c=O\!\bigl(\max\{\sqrt{A},\,A/\sqrt{k}\}\bigr)\) (with \(A\) as defined there).

\subsubsection{Communication bounds in bits}\label{sec:bits}
Following prior work, we measure communication in terms of \emph{messages}.  We present a lightly modified version of our protocol (\cref{alg:robustdc}) in which each message has size $O(\log(k/\eps))$ bits, improving over the $O(\log N)$-bit messages required in a vanilla implementation.

\begin{lemma}[Communication bound in bits]
There is a (lightly modified) version of \cref{alg:robustdc} with the same accuracy guarantees whose per-round communication is
\[
O\!\left(
  k \;+\;
  \min\!\left\{\,k,\ \frac{1}{\varepsilon^{2}}\log\frac{1}{\delta}\,\right\}
  \cdot \log\frac{k}{\varepsilon}
\right)
\ \ \text{bits.}
\]
\end{lemma}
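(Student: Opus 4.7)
The plan is to take Algorithm 3.1 essentially as-is but reimplement every message at the bit level, using $O(\log(k/\eps))$ bits per message, and to argue that fewer than $k$ count-bearing messages per round suffice when $\eps$ is not too small relative to $\sqrt{k}$.

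First, I would catalog per-round traffic: up to $k$ \textsf{ReportSample} signals; one \textsf{CountRequest} broadcast; up to $k$ \textsf{ReportCount} responses (each naively carrying $n_i$ in $O(\log N)$ bits); and one \textsf{Broadcast}$(p)$ (naively $O(\log N)$ bits per site). The $O(k)$ term in the target accounts for 1-bit control signals: the \textsf{CountRequest} broadcast, the source-identity overhead of each \textsf{ReportSample}, and the 1-bit incremental update of $p$ described below.

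I would then apply three bit-level compressions. First, I round $p$ to $2^{-j}$ for integer $j$ and argue that $j$ changes by $O(1)$ per round, because $\bar{n}$ grows by at most a constant factor per round (analogously to \cref{obs:round-length}); the update to $p$ is then a constant-size increment per site, contributing $O(k)$ bits. Second, I replace the payload of each \textsf{ReportCount} by a quantized delta: the site transmits $\lfloor(n_i-\bar{n}_i^{\text{prev}})/Q\rfloor$ with quantum $Q=\Theta(\eps\bar{n}/k)$. The delta is at most the round length $O(k/p)=O(\eps\bar{n}\sqrt{k})$, so the quantized value lies in $\{0,1,\dots,O(k^{3/2})\}$ and fits in $O(\log(k/\eps))$ bits, while the induced additive error in the server's update of $\bar{n}$ is at most $kQ=O(\eps\bar{n})$, which is absorbed by the accuracy slack. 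Third, each \textsf{ReportSample} carries $O(\log k)=O(\log(k/\eps))$ bits for source identification when addressing is not free.

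For the $\min\{k,(1/\eps^2)\log(1/\delta)\}$ cap on count-bearing messages, I would end each round after $L=\min\{k,(1/\eps^2)\log(1/\delta)\}$ samples and rescale the protocol's $c$-parameter so that $p=\Theta(\sqrt{L\log(1/\delta)}/(\eps\bar{n}))$. Applying \cref{lemm:BersteinTailZeroInflated} to the sum of $L$ geometric$(p)$ random variables shows that the per-round deviation is $O(\sqrt{L\log(1/\delta)}/p)=O(\eps\bar{n})$ with probability $1-\delta$, so the per-event and uniform accuracy guarantees of \cref{thm:RobustProtocol,thm:RobustProtocolforall} are preserved. At most $L$ distinct sites can be sampled in a round, so only $L$ \textsf{ReportCount} messages are required, yielding the $L\cdot\log(k/\eps)$ term.

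The main obstacle will be tracking the accumulated quantization error across rounds. Because the quantum $Q$ itself depends on the server's (now approximate) $\bar{n}$, a naive induction fails; however, the quantization errors across rounds are independent, approximately zero-mean, with per-round standard deviation $O(\eps\bar{n}_r)$. The geometric growth of $\bar{n}_r$ implies that the variance of the cumulative error is dominated by the final round, preserving the overall $\eps$-accuracy. Formalizing this alongside the Bernstein-type tail bound used in \cref{sec:accuracy} should complete the proof.
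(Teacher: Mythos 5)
Your bit-level compressions for the two ``payload'' message types partially match the paper: like you, the paper quantizes $p$ to powers of two and sends only exponent increments, and it has each site report its count to additive accuracy $\Theta(\max\{1,\eps\bar n/k\})$, which costs $O(\log(k/\eps))$ bits per count message. (The paper sends an approximate \emph{absolute} count relative to a shared coarse proxy $\bar n'$ rather than a delta, so each round's $\bar n$ is re-estimated fresh to relative error $\eps/2$ and the cross-round accumulation problem you worry about at the end simply never arises.) The genuine gap is in how you obtain the $\min\{k,\eps^{-2}\log\frac{1}{\delta}\}$ factor. You end the round after $L$ samples and collect \textsf{ReportCount} only from the at most $L$ sites that sampled. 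But at that sync the server resets $B$ and rebuilds $\bar n$ from the reporting sites only, so every event that arrived during the round at a site whose $\Bernoulli(p)$ coins never fired is silently dropped from the estimate. This is not a tail event: even the \emph{non-adaptive} uniform stream with $k\gg L$ puts about $T/k$ events on each site during a round of length $T\approx L/p$, and each site then samples with probability about $1-e^{-L/k}\approx L/k\ll 1$, so order $T=\Theta(\eps\bar n\sqrt{L/\log(1/\delta)})\gg\eps\bar n$ events are missing right after the sync, destroying per-event accuracy. Your Bernstein bound only controls the within-round sampling deviation, not this dropped mass. It also breaks the robustness argument of \cref{sec:analysisRobust}: once the sync covers only the sites that happened to sample, the server's state depends on \emph{which} sites received events, so the reduction to an oblivious stream no longer applies. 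The paper avoids all of this by leaving the protocol's rounds (and the full-population sync) intact and replacing the \emph{exact} count collection with a distributed approximate count over all $k$ sites: in the $\eps>1/\sqrt{k}$ regime, sites report their (approximate) counts probabilistically via weighted sampling against the shared proxy $\bar n'$, so $\Theta(\eps^{-2}\log\frac{1}{\delta})$ messages suffice for an $(\eps/2)$-accurate estimate of the \emph{total} $\bar n$ — no events are ever excluded from the estimate.

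Two smaller accounting points also do not fit the claimed bound as you state them. First, the exponent of $p$ does not change by $O(1)$ per round in general: a round of \cref{alg:robustdc} multiplies $\bar n$ by roughly $1+\eps\sqrt{k}/c$, which is unbounded when $\eps\sqrt{k}$ is large (\cref{obs:round-length} concerns the \textsf{HYZ12} doubling rounds, not \cref{alg:robustdc}); the paper instead amortizes the total exponent increments, $O(\log N)$ bits per site over the whole execution, across all rounds. Second, charging $O(\log k)$ bits of source identification to each of the $k$ \textsf{ReportSample} messages would already cost $k\log k$ bits per round, exceeding the stated bound whenever $\eps^{-2}\log\frac{1}{\delta}\ll k$; in the coordinator model the channel identifies the sender (and the server never needs the identity of a sampler anyway), so these must be counted as $O(1)$-bit messages, as the paper does.
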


\begin{proof}
The vanilla description of the protocol (\cref{alg:robustdc}) uses messages that are event-triggered from sites to the server. These messages are of size $O(1)$. Additionally, messages of potentially larger sizes are used for (i) syncing the updated transmission probability and (ii) event counts in the beginning of each round. We describe the modifications needed to condense this communication.

\paragraph{Transmission probability syncing: }
Our analysis goes through when the server quantizes the probabilities to powers of $2$ (that is, sets
$p \gets 2^{\min\{0,\lceil \log_2 \frac{c\sqrt{k}}{\eps \bar n} \rceil\}}$. In each broadcast, the server then only sends the increase to the previous value of the negated exponent. Therefore, the total bits broadcasted, over all rounds, are 
\[O(\log_2 (\eps N/(c\sqrt{k}))=O(\log_2 N).\] Since each broadcast is to $k$ sites, this amounts to $k\log_2 N$ total bits communicated to facilitate the updates of $p$.  Since $\E[R_N] = \Omega(\log N)$, syncing $p$ does not add asymptotically to the total communication in bits.

\paragraph{Total event count syncing: } 
At the end of each round, the server collects the event counts $\bar n_i$ from the sites $i\in [k]$. 
Observe that the server only needs to obtain an \emph{estimate} of the total count $\bar n$ that is within a relative error of $\eps/2$ (and then provide $(\eps/2)$ accuracy guarantees). 

Obtaining an estimated count is is an instance of 
\emph{distributed approximate counting} (without tracking), with the added benefit that the sites (and the server) share a high probability constant-factor proxy $\bar n'$ for $\bar n$
(e.g., $\bar n'=\Theta(c\sqrt{k}/(\varepsilon p) + k/p)$).

It therefore suffices that each site broadcasts its value to within an additive error of $O(\max\{1,\frac{\eps \bar n'}{k}\})$. When $n_i = O(\bar n')$, this requires 
$O(\log(k/\eps)$ bits per message and $O(k\log(k/\eps))$ bits per round.\footnote{This can be converted to a high probability or in-expectation bound to cover the case where the coarse estimate $\bar n'$ is off.}

In the regime where $\eps > 1/\sqrt{k}$, we can use weighted sampling, where sites only share their approximate count probabilistically. In this case, only $\Theta\!\left(\dfrac{1}{\varepsilon^{2}} \log \dfrac{1}{\delta}\right)$ sites send messages to obtain $(\eps,\delta)$-accurate $\bar n$. 

\end{proof}

\subsection{Accuracy Analysis} \label{sec:accuracy}

We first bound the probability that the maximum relative error over all the event times in a single round exceeds $\eps$:
\begin{lemma} [Tail bound on the maximum relative error in a round] \label{lemma:perrounderrorprob}
    Consider the random variable that is a single round of the protocol that is started at some event count $n_0$. Let $R$ be the number of events in the round and $\hat{n}$ the estimate of the event count when the event count is $n$. Then
    \[\Pr\left[\max_{n\in [t,t+R-1]}\frac{\vert n-\hat{n} \vert}{n} > \eps\right] \leq 2\exp\!\left(-\min\!\left\{\frac{c^2 }{8},\ \frac{c\sqrt{k}}{4}\right\}\right).\]
\end{lemma}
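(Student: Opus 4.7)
My plan is to reduce the maximum relative error inside a single round to the maximum deviation of a random walk whose increments are iid centered $\Geom(p)$ random variables, and then apply Doob's submartingale inequality together with the Bernstein-type bound already supplied by \cref{lemm:BersteinTailZeroInflated}. If $p=1$ the estimate is exact throughout the round and the claim is trivial, so I restrict to $p = c\sqrt{k}/(\eps n_0) < 1$, in which case $1/p = \eps n_0/(c\sqrt{k})$.

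First I would set up the randomization structure explicitly. Let $X_1,\ldots,X_k$ be iid $\Geom(p)$ variables recording the number of events between consecutive \textsf{ReportSample} messages, and set $S_\ell := \sum_{j=1}^\ell X_j$, so that $S_\ell$ is the position (within the round) of the $\ell$-th report. The round lasts $R = S_k$ events, and for an event whose position within the round is $m \in (S_\ell, S_{\ell+1}]$, the server has received exactly $\ell$ reports and publishes $\hat n = n_0 + \ell/p$. Since the error $\hat n - n = \ell/p - m$ is linear in $m$ on each such interval, its magnitude on the whole round is controlled by
\[
\max_{n\in[n_0+1,\,n_0+R]} |\hat n - n| \;\le\; \max_{0 \le \ell \le k} |M_\ell| \;+\; 2/p, \qquad M_\ell := S_\ell - \ell/p,
\]
where the $2/p$ absorbs the boundary slack between adjacent intervals. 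Using $n \ge n_0$ and $1/p = \eps n_0/(c\sqrt{k})$, the event $|\hat n - n|/n > \eps$ implies $\max_\ell |M_\ell| > \eps n_0 \bigl(1 - 2/(c\sqrt{k})\bigr)$, which is at least $\eps n_0/2$ when $c\sqrt{k}\ge 4$; for smaller $c\sqrt{k}$ the claimed bound satisfies $2e^{-c\sqrt{k}/4}\ge 2e^{-1}$ and is essentially vacuous, so that regime is handled by inspection.

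Next I would bound $\Pr[\max_\ell |M_\ell| > \tau]$ for $\tau = \eps n_0/2$. Since $(M_\ell)$ is a mean-zero martingale, $(e^{\lambda M_\ell})$ is a non-negative submartingale for every $\lambda$ in the admissible range of the $\Geom(p)$ moment generating function. Doob's maximal inequality gives
\[
\Pr\!\left[\max_{0 \le \ell \le k} M_\ell \ge \tau\right] \;\le\; \inf_{\lambda>0}\, e^{-\lambda \tau}\, \mathbb{E}\!\left[e^{\lambda M_k}\right],
\]
and the same argument applied to $-M_\ell$ handles the lower tail. Optimizing $\lambda$ reproduces exactly the Bernstein bound of \cref{lemm:BersteinTailZeroInflated} with $A = k$ (the zero-inflated geometric there degenerates to $\Geom(p)$ when all $\alpha_i = 1$), and the two-sided deviation contributes a factor of two. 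Plugging in $p \ge c\sqrt{k}/(\eps n_0)$ and $\tau \ge \eps n_0/2$ gives $p^2 \tau^2/k \ge c^2/4$ and $p\tau \ge c\sqrt{k}/2$, yielding the stated bound $2\exp\bigl(-\min\{c^2/8,\, c\sqrt{k}/4\}\bigr)$.

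The main obstacle is passing from a tail bound on $|M_k|$ to one on $\max_\ell |M_\ell|$ without paying a union-bound factor of $k$; this is precisely what Doob's submartingale inequality buys, since the exponential-moment estimate underlying \cref{lemm:BersteinTailZeroInflated} can be reused verbatim for every partial sum. Everything else is bookkeeping: verifying the piecewise-linear error decomposition, absorbing the additive $O(1/p)$ slack between intervals and from the passage from $\eps n$ to $\eps n_0$, and tracking the constants.
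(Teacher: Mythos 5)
Your proof takes essentially the same route as the paper's: reduce the in-round error to the maximal partial-sum deviation of the centered $\Geom(p)$ walk $M_\ell = S_\ell - \ell/p$, apply Doob's maximal inequality combined with a Bernstein-type MGF bound, and absorb the $O(1/p)$ update-granularity slack. The paper packages Doob+Bernstein as an appendix claim and applies it at $t=\eps n_0$, deferring the $1/p$ slack to a footnote, whereas you pay for the slack up front by halving to $\tau = \eps n_0/2$ and compensating with the tighter pre-relaxation constants implicit in \cref{lemm:BersteinTailZeroInflated} (specialized to $\alpha_i=1$), recovering the same stated exponent.
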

\begin{proof}

We use the following fact  (see \cref{sec:tails} for a proof)
\begin{claim} [Bound on the maximal partial-sum deviation \cite{boucheron2013concentration}]
    Let $X_1,\ldots,X_r \stackrel{\mathrm{iid}}{\sim}\mathrm{Geom}(p)$. 
Define $S_i=\sum_{j=1}^i X_j$.
Then for all \(t>0\),
\[
\Pr\!\left(\max_{1\le i\le r}\Big|S_i-\tfrac{i}{p}\Big|\ge t\right)
~\le~
2\exp\!\left(-\min\!\left\{\frac{t^2 p^2}{8r},\ \frac{tp}{4}\right\}\right).
\]
\end{claim}

We apply the claim with $t=\eps \bar{n}$, $r=k$ and $p=c\sqrt{k}/(\eps \bar{n})$ and thus $pt= c\sqrt{k}$ and obtain that the probability of a deviation that exceeds $\eps \bar{n}$ at any of the message times (update points) in the round is at most 
\[
2\exp\!\left(-\min\!\left\{\frac{c^2 }{8},\ \frac{c\sqrt{k}}{4}\right\}\right).
\]

Finally, observe that this bound on the maximum deviation at \emph{message times} translates to a bound of $\eps \bar{n} +1/p = (1+1/(c\sqrt{k}))\eps\bar{n}$ on the maximum deviation between the actual and estimated counts over \emph{all events} in the round, since the estimate is updated in increments of $1/p$.\footnote{technically we should apply this with the slightly smaller $\eps' = \eps/(1+1/(c\sqrt{k}))$ but we skip that to reduce clutter}
\end{proof}

\subsubsection{$(\eps,\delta)$-accuracy on the first $N$ events}
\begin{proof}[Proof of \cref{thm:RobustProtocolforall}; accuracy]
From \cref{lemma:perrounderrorprob}, using
\[c= O\left(\max\left\{\sqrt{\log\left(\frac{R}{\delta}\right)}, \frac{\log\left(\frac{R}{\delta}\right)}{\sqrt{k}}\right\}\right) ,\]  with probability at least $1-\delta$, the maximum relative error  at all events in the first $R$ rounds is at most $\eps$.
This follows from a per-round bound on the error probability of $\delta/R$ and a union bound on the $R$ rounds.
To obtain the bound for the first $N$ events, 
we substitute for $R$ the upper tail bound on $R_N$, the number of rounds in $N$ events, given in \cref{lemma:RforN}.
\end{proof}

\subsubsection{Per-event-index $(\eps,\delta)$-accuracy}

From \cref{lemma:perrounderrorprob}, using  
\begin{equation}\label{RroundsAcc:eq}
     c=O\left(\max\left\{\sqrt{\log\left(\frac{1}{\delta}\right)} , \frac{\log\left(\frac{1}{\delta}\right) }{\sqrt{k} }\right\} \right)         \, ,    
\end{equation}
we obtain that the protocol provides \emph{per-round $(\eps,\delta)$ accuracy}, that is, for each round (independently of history), 
with probability at least $1-\delta$, the maximum relative error of the count at all events that fall in the round is at most $\eps$.

To
establish per-event-index $(\eps,\delta)$ accuracy, 
we relate it to the 
per-round accuracy via the following (proof in \cref{sec:perevent}):
\begin{lemma} [Per-round to Per-event success bound]\label{lem:perR2perE}
For any $\pi>0$, with $\delta(\pi) = O(\pi/\log(1/\pi))$, the protocol with per-round $(\eps,\delta(\pi))$ accuracy 
provides $(\eps,\pi)$ per-event-index accuracy.
\end{lemma}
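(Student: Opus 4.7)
The plan is to reduce the per-event-index guarantee at a fixed $n$ to a per-round guarantee on only $O(\log(1/\pi))$ candidate rounds, so that $\delta$ need only absorb a $\log(1/\pi)$ factor. The key observation is that $\hat n(n)=\bar n+B/p$ depends only on the randomness of the round that contains event $n$: if that round satisfies its per-round guarantee (maximum relative error at most $\varepsilon$), then automatically $|\hat n(n)-n|\le\varepsilon n$. Hence it suffices to bound the probability that the round containing event $n$ is bad.

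I would localize that round by choosing a window $n_0 = n/(1+\beta)$ with $\beta = C\,\varepsilon\sqrt{k}\log(1/\pi)/c$ for a sufficiently large absolute constant $C$, and let $E_1$ be the event that the round containing event $n_0$ extends past $n$. That round has transmission probability $p\ge c\sqrt{k}/(\varepsilon n_0)$, and its length $S$ is a sum of $k$ i.i.d.\ $\Geom(p)$ variables with mean $k/p\le\varepsilon n_0\sqrt{k}/c = n_0\beta/(C\log(1/\pi))$. Rewriting $\{S>m\}$ as the event $\{\mathrm{Bin}(m,p)<k\}$ and applying the Chernoff lower tail of \cref{col:chernoffBinTwoTails} (with $m=n_0\beta$, so $mp=Ck\log(1/\pi)$) yields $\Pr[E_1]\le\pi/2$ for $C$ large enough.

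On $E_1^c$ the round containing $n$ starts at some $\bar n\in(n_0,n]$. Since rounds starting at distinct event counts are disjoint events, and the per-round guarantee is $\delta$ conditionally on the round's start state,
\[
\Pr[\text{round containing }n\text{ is bad},\,E_1^c]\;\le\;\sum_{\bar n\in(n_0,n]}\Pr[\text{a round starts at }\bar n]\cdot\delta\;=\;\delta\cdot\E[K],
\]
where $K$ counts the rounds starting in $(n_0,n]$. To bound $\E[K]=O(\log(1/\pi))$ I would use an optional-stopping argument: each round starting in the window has conditional expected length at least $\varepsilon n_0\sqrt{k}/c$, because $p_i\le c\sqrt{k}/(\varepsilon n_0)$; applying optional stopping to the martingale $T_i-\sum_{j\le i}k/p_j$ at the stopping time just before exiting the window gives $\E[K]\le 1+\beta c/(\varepsilon\sqrt{k}) = 1+C\log(1/\pi)$. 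Combining, $\Pr[|\hat n(n)-n|>\varepsilon n]\le \pi/2+\delta\cdot O(\log(1/\pi))$, which is at most $\pi$ whenever $\delta=O(\pi/\log(1/\pi))$, proving the lemma.

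The main subtlety will be that the round parameters $p_i$ are adaptive to the cumulative count, so round lengths are not i.i.d. I would handle both the $\Pr[E_1]$ tail and the $\E[K]$ bound via the uniform inequality $p_i\le c\sqrt{k}/(\varepsilon n_0)$ on the window, which produces the uniform lower bound on expected round lengths that both pieces need. Two minor corner cases to dispatch are the regime $p=1$ (where rounds are deterministic of length $k$ and the estimate is exact) and the initial regime $n\le c\sqrt{k}/\varepsilon$ (where again the protocol is exact), so both are trivial.
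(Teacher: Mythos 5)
Your proposal is correct and follows essentially the same route as the paper: fix a window ending at $n$ that is wide enough that a round starts inside it except with probability $O(\pi)$, bound the expected number of round starts in the window by a Wald/optional-stopping argument using the uniform lower bound on the conditional expected round length, and union-bound the per-round failure probability $\delta$ over those starts (cf.\ \cref{claim:t0-gap,lem:per-t-window}). The only real difference is calibration: the paper chooses the window via a quantile condition on the round length, giving the slightly sharper bound $O\big(\delta(1+\tfrac{\log(1/\delta)}{k})\big)$, whereas your explicit window of relative width $\Theta(\eps\sqrt{k}\log(1/\pi))$ pays $O(\delta\log(1/\pi))$, which still suffices for the stated $\delta(\pi)=O(\pi/\log(1/\pi))$.
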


\begin{proof}[Proof of \cref{thm:RobustProtocol}; accuracy]
We apply the protocol with
\[c = O\left(\max\left\{\sqrt{\log(1/\delta(\pi))} , \frac{\log(1/\delta(\pi)) }{\sqrt{k} }\right\} \right) .
\]
Per \cref{RroundsAcc:eq} this gives per-round $(\eps,\delta(\pi))$ accuracy and per \cref{lem:perR2perE} this translates to $(\eps,\pi)$ per-event-index accuracy.

Now note that
\[\max\left\{\sqrt{\log(1/\delta(\pi))} , \frac{\log(1/\delta(\pi)) }{\sqrt{k} }\right\} = O\left(\max\left\{\sqrt{\log(1/\pi)}, \frac{\log(1/\pi) }{\sqrt{k} }\right\} \right) .\] Therefore we can express $c$ as stated in terms of $\pi$ (with a different hidden constant). 
\end{proof}

\section{Per-round to per-event-index accuracy}\label{sec:perevent}

In this section we prove \cref{lem:perR2perE}. We derive a bound on the maximum per-event-index failure probability from the per-round failure probability $\delta$. For the purpose of this analysis we introduce a simple
stochastic renewal process which captures 
the relevant components of the protocol's execution.

\paragraph{Renewal process}.
Fix integers $k\ge1$ and a constant $C>0$. For $n\geq 1$ let
\[
S_n \;=\; \sum_{i=1}^k X_i, \qquad X_i \stackrel{\text{iid}}{\sim} \Geom\!\big(p(n)\big),
\quad p(n):=\min\{1,C/n\}.
\]
 
Write
\[
\mu_n:=\E[S_n]=\frac{k}{p(n)}=\frac{k}{C}\,n,\qquad
\nu_n^2:=\Var(S_n)=\frac{k\,(1-p(n))}{p(n)^2} .
\]

Partition the outcomes of $S_n$ into $A_n$ (good) and $B_n$ (bad) with
\[
\Pr(B_n)\le \delta\quad\text{and}\quad \Pr(A_n)=1-\Pr(B_n)\ge 1-\delta,
\]
where the partition depends only on $n$.

Our renewal process starts with $N_0=1$ and uses $N_{j+1}=N_j+L_j$, where
$L_j\sim S_{N_j}\mid A_{N_j}$ if the outcome falls in $A_{N_j}$,
and $L_j\sim S_{N_j}\mid B_{N_j}$ otherwise.
Let $T_j\in\{A,B\}$ be the label of $L_j$.
For a time $t\ge1$, let $J(t)=\max\{j:N_j\le t\}$.
We say that \emph{$t$ is bad} if $T_{J(t)}=B$.

\begin{remark}[Mapping to our protocol]
    $S_n$ samples the length of a round that starts at event-index $n$. The partitioning to label $B_n$ corresponds to a round being bad (not meeting the maximum accuracy requirement) and $A_n$ to the round meeting the requirement. 

    A time $t$ in the process corresponds to an event-index $t$ in the protocol. When $t$ is good, the round is good and the accuracy requirement is met. When $t$ is bad, the round is bad and the accuracy requirement may not have been met. We aim to bound the probability that $t$ is bad.
\end{remark}

\paragraph{Tail bounds.}
We use the standard sub-exponential tail (Bernstein/Bennett form): for all $x\ge0$,
\begin{equation}\label{eq:bernstein}
\Pr\!\big(S_n-\mu_n\ge x\big)\ \le\ \exp\!\left(-\tfrac12\,
\min\!\left\{\frac{x^2}{\nu_n^2},\ x\,p(n)\right\}\right).
\end{equation}

Inverting \eqref{eq:bernstein} yields, for every $n$, the 
quantile bound 
\begin{equation}\label{eq:qdelta}
q_\delta(n)\ \le\ \mu_n\ +\ 2\max\!\Big\{\nu_n\sqrt{\log\frac{1}{\delta}},\ \frac{\log\frac{1}{\delta}}{p(n)}\Big\},
\end{equation}
where $q_\delta(n)$ denotes the $(1-\delta)$-quantile of $S_n$.

\bigskip

\begin{claim}[Quantile-to-gap bound for $t_0$]\label{claim:t0-gap}
Fix $t\in\mathbb{N}$ and define $t_0<t$ to be the largest $t_0\geq 0$ such that
\[
\Pr\!\big[S_{t_0}>t-t_0\big]\ \le\ \delta.
\]
If $t_0$ is not defined we take $t_0=0$. Then
\[
\quad
\frac{t-t_0}{\mu_{t_0}}
\ \le\
1\;+\;4\,\max\!\Big\{\sqrt{\tfrac{\log\frac{1}{\delta}}{k}},\ \tfrac{\log\frac{1}{\delta}}{k}\Big\}\;+\;\frac{2}{t_0} = O(1+\tfrac{\log\frac{1}{\delta}}{k}) \,.
\quad
\]
\end{claim}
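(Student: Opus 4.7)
The strategy is to use the maximality of $t_0$ to convert the gap bound into a quantile statement, and then invoke the quantile bound \eqref{eq:qdelta}. By definition of $t_0$ as the largest integer satisfying $\Pr[S_{t_0}>t-t_0]\le\delta$, the next integer $t_0+1$ fails the condition: $\Pr[S_{t_0+1}>t-t_0-1]>\delta$. Letting $q_\delta(n)$ denote the $(1-\delta)$-quantile of $S_n$ (an integer, since $S_n$ is integer-valued), this rearranges to $q_\delta(t_0+1)\ge t-t_0$. So it suffices to upper-bound $q_\delta(t_0+1)$ and divide by $\mu_{t_0}=kt_0/C$.

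I would then apply \eqref{eq:qdelta} at $n=t_0+1$ and plug in the explicit estimates $\mu_n=kn/C$, $\nu_n^{2}\le k/p(n)^{2}\le kn^{2}/C^{2}$, and $1/p(n)\le n/C$ (valid in the regime $p(n)=C/n$; the regime $p(n)=1$ only tightens the bound, since there $\nu_n=0$ and $1/p(n)=1\le n/C$). All three terms of the quantile bound then scale linearly in $t_0+1$, yielding
\[
q_\delta(t_0+1)\ \le\ \frac{k(t_0+1)}{C}\!\left(1+2\max\!\left\{\sqrt{\tfrac{\log(1/\delta)}{k}},\ \tfrac{\log(1/\delta)}{k}\right\}\right).
\]
Dividing by $\mu_{t_0}=kt_0/C$ and writing $(t_0+1)/t_0=1+1/t_0$, I obtain $\tfrac{t-t_0}{\mu_{t_0}}\le 1+\tfrac{1}{t_0}+2\max\{\cdot\}+\tfrac{2}{t_0}\max\{\cdot\}$. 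For $t_0\ge1$ the cross term is dominated by $2\max\{\cdot\}$, and combined with $1/t_0\le 2/t_0$ this gives exactly the claimed bound $1+4\max\{\cdot\}+2/t_0$. The final $O\!\left(1+\tfrac{\log(1/\delta)}{k}\right)$ form is then immediate from $\max\{\sqrt{x},x\}\le 1+x$ applied with $x=\log(1/\delta)/k$.

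\textbf{Expected obstacle.} There is no serious conceptual obstacle here: once the maximality-to-quantile reduction is in place, the rest is constant-chasing using \eqref{eq:qdelta}. The only minor items are (i) the degenerate case $t_0=0$, for which $\mu_{t_0}=0$ makes the stated bound vacuous and which must be ruled out at the call site (\cref{lem:perR2perE} is presumably applied in a regime where $t_0\ge1$); and (ii) verifying that the small-$n$ regime $p(n)=1$ does not break the linear-in-$n$ scaling of the bounds on $\nu_n$ and $1/p(n)$, which is immediate from the explicit inequalities above.
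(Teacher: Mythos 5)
Your proof is correct and follows the same route as the paper: use the maximality of $t_0$ to extract a quantile inequality, then substitute the quantile bound \eqref{eq:qdelta} and divide by $\mu_{t_0}$. Your application at index $t_0+1$ is actually more careful than the paper's terse ``$s\le 1+q_\delta(t_0)$''; the only slip is the parenthetical claim that $1/p(n)=1\le n/C$ in the $p(n)=1$ regime, where the inequality is in fact reversed (since $p(n)=1$ forces $n\le C$), but that regime is degenerate ($S_n=k$ deterministically, $\nu_n=0$) and the paper likewise implicitly restricts to $t_0\ge C$ when writing $\mu_n=kn/C$.
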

\begin{proof}
Let $s:=t-t_0$ and note that by its definition
$s\ \le\ 1 + q_\delta(t_0)$. 
Substitute \eqref{eq:qdelta} and divide 
by $\mu_{t_0}$ to obtain the claim.
\end{proof}

\bigskip

We try to provide an informal overview of our approach to bounding the probability that $t$ is bad: Observe that the probability that time $t$ is bad is bounded by $\delta$ times the expected number of round starts before $t$. The issue in using this, however, is that for a large $t$, this expected number of round starts in $[0,t]$ can be large. In this case, our strategy is to look at a more restricted window $[t_0,t]$. We choose $t_0(t)$ so that the probability that no round starts in $[t_0,t]$ is at most $\delta$. We then pay this $\delta$ towards our bound and only consider rounds starting in this window. 

\begin{lemma}[Per-\texorpdfstring{$t$}{t} bad-probability via the window \texorpdfstring{$[t_0,t]$}{[t0,t]}]
\label{lem:per-t-window}
With $t_0$ as in \cref{claim:t0-gap}, let $M_t:=\sum_j \mathbf{1}\{N_j\in[t_0,t]\}$ be the
number of round starts in $[t_0,t]$, where $(N_j)$ are the renewal start positions generated by
the process. Then
\[
\quad
\Pr(\text{$t$ lies in a bad round})\ \le\ \delta\;+\;\delta\,\E[M_t]
\ \le\ \delta\Big(2+\frac{t-t_0}{\mu_{t_0}}\Big)\,.
\quad
\]
\end{lemma}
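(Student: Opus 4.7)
The plan is to split the event ``$t$ lies in a bad round''---that is, $T_{J(t)}=B$---according to whether $M_t=0$ or $M_t\ge 1$, and bound the two pieces separately. On $\{M_t\ge 1\}$ the round containing $t$ starts at some $N_{J(t)}\in[t_0,t]$, so the event is contained in the union over $j$ of $\{N_j\in[t_0,t],\,T_j=B\}$. On $\{M_t=0\}$ the round containing $t$ is the same round containing $t_0$, so its length necessarily exceeds $t-t_0$.

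For the piece $\{M_t=0\}$ I would show $\Pr[M_t=0]\le\delta$ by bounding the residual life at $t_0$. The key to sidestepping the usual inspection paradox is that the round containing $t_0$ starts at some $n^*\le t_0$ and is a sum of $k$ i.i.d.\ $\Geom(p(n^*))$ variables. By the memoryless property of the geometric, conditional on how many of those $k$ successes have already occurred by $t_0$, the residual (i.e.\ $N_{J(t_0)+1}-t_0$) is a sum of at most $k$ fresh $\Geom(p(n^*))$ variables, hence stochastically dominated by $S_{n^*}$; since $p(n)$ is non-increasing in $n$, $S_{n^*}$ is in turn stochastically dominated by $S_{t_0}$. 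Using the defining property of $t_0$ from \cref{claim:t0-gap},
\[
\Pr[M_t=0]\;=\;\Pr[\text{residual}>t-t_0]\;\le\;\Pr[S_{t_0}>t-t_0]\;\le\;\delta.
\]
This is the step I expect to be the main subtlety.

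For the piece with $M_t\ge 1$, I would use a union bound over renewals inside $[t_0,t]$. Since the partition into $A_n/B_n$ depends only on the outcome of the single round starting at $N_j$, we have $\Pr[T_j=B\mid N_j=m]=\Pr(B_m)\le\delta$ for every $m$. Conditioning on $N_j$ and summing,
\[
\Pr\!\big[M_t\ge 1,\,T_{J(t)}=B\big]\;\le\;\sum_{j}\Pr[N_j\in[t_0,t]]\cdot\delta\;=\;\delta\,\E[M_t].
\]
Adding the two bounds gives the first stated inequality $\Pr[t\text{ bad}]\le\delta+\delta\,\E[M_t]$.

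For the second inequality it suffices to show $\E[M_t]\le 1+(t-t_0)/\mu_{t_0}$, since then \cref{claim:t0-gap} yields $\E[M_t]\le 1+(t-t_0)/\mu_{t_0}$ and substitution gives the claim. Enumerating the renewals inside $[t_0,t]$ as $\tau_1<\cdots<\tau_{M_t}$, we have pointwise on $\{M_t\ge 1\}$
\[
\sum_{i=1}^{M_t-1}L_{\tau_i}\;=\;\tau_{M_t}-\tau_1\;\le\;t-t_0,
\]
and each $L_{\tau_i}\sim S_{\tau_i}$ has conditional mean $\mu_{\tau_i}\ge\mu_{t_0}$ since $\tau_i\ge t_0$ and $\mu_n$ is non-decreasing in $n$. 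Applying optional stopping to the submartingale $\sum_{i\le n}(L_{\tau_i}-\mu_{t_0})$ at the stopping time $\max\{0,M_t-1\}$ then yields $(\E[M_t]-1)\,\mu_{t_0}\le t-t_0$, completing the proof.
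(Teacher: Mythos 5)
Your decomposition into $\{M_t=0\}$ and $\{M_t\ge1\}$ is exactly the paper's split into ``no round start in $[t_0,t]$'' and its complement, and the first two pieces are handled correctly. For $\Pr[M_t=0]\le\delta$, your residual-life argument via the memorylessness of the geometric is a welcome explicit treatment of the inspection-paradox subtlety that the paper states tersely (the paper writes $\Pr(\text{no start in }[t_0,t])=\E[f(W)]\le f(t_0)$ with $W=N_{J(t)}$, leaving implicit precisely the domination step you spell out). The union bound over starts in $[t_0,t]$ for the second piece is identical to the paper's.

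The gap is in the third piece, the bound $\E[M_t]\le 1+(t-t_0)/\mu_{t_0}$. The random index $\max\{0,M_t-1\}$ is \emph{not} a stopping time for the filtration that reveals $L_{\tau_1},L_{\tau_2},\dots$ one at a time: $\{M_t-1\le n\}=\{M_t\le n+1\}=\{\tau_{n+2}>t\}$ is determined only once $L_{\tau_{n+1}}$ is known, which is one step ahead. So optional stopping does not apply as you invoke it. Stopping at $T=M_t$ instead (which \emph{is} a stopping time) does not close the gap either: one then controls $\sum_{i\le M_t}L_{\tau_i}=\tau_{M_t+1}-\tau_1$, which overshoots $t$ by at least the last round length $L_{\tau_{M_t}}$, and $\E[L_{\tau_{M_t}}]\ge \mu_{t_0}$ (indeed it is length-biased), so no useful upper bound on $\E[M_t]$ emerges. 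What actually makes the bound true here is the lattice structure of the step sizes: realize each round that starts at some $m\ge t_0$ as the wait for $k$ successes of a $\Bernoulli(p(m))$-coin with $p(m)\le p(t_0)$, so that the number of renewals strictly after $\tau_1$ and at most $t$ is pointwise at most $\lfloor H/k\rfloor$ where $H$ is the number of successes in $(\tau_1,t]$; then $\E[(M_t-1)_+]\le \E[H]/k\le (t-t_0)\,p(t_0)/k = (t-t_0)/\mu_{t_0}$. This is in substance the paper's coupling step to i.i.d.\ $Y_i\sim S_{t_0}$ and the renewal-function bound $\E[N'(h)]\le h/\mu_{t_0}$ that it calls ``Wald's bound'' without further justification; the coin-flip realization is the clean way to see it and is what your optional-stopping sketch is implicitly reaching for. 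Supplying this missing ingredient closes the argument.
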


\begin{proof}
\textbf{Show that the probability that no round starts in $[t_0,t]$ is at most $\delta$:}
Define $f(n):=\Pr(S_n>t-n)$. Since $p(n)=\min\{1,C/n\}$ is non increasing in $n$, the law of $S_n$ is stochastically
increasing in $n$, while $t-n$ decreases; hence $f$ is non-decreasing. Let $W:=N_{J(t)}$ be the start of the round that covers $t$. The event
``no start in $[t_0,t]$'' is exactly $\{W\le t_0\ \&\ S_W>t-W\}$, so
\[
\Pr(\text{no start in }[t_0,t])=\E\big[f(W)\big]\ \le\ f(t_0)\ \le\ \delta,
\]
where the expectation is over the randomness of the process generating $W$.
\bigskip

\textbf{Bound bad probability of $t$ by union over starts in $[t_0,t]$:}
On $\{M_t\ge1\}$, exactly one start $n\in[t_0,t]$ has its interval covering $t$.
Let $w(n):=\Pr(\exists j:\ N_j=n)$ denote the (unconditional) probability of a start at $n$.
By the Markov property at $n$ and because $\Pr(B_n)\le\delta$,
\[
\Pr(t\text{ bad},\,M_t\ge1)
\le \sum_{n=t_0}^t w(n)\,\Pr\big(B_n\big)
\le \delta \sum_{n=t_0}^t w(n)=\delta\,\E[M_t].
\]

\textbf{Bound $\E[M_t]$ by coupling to an i.i.d.\ baseline and Wald.} We show that $\E[M_t]$ is bounded from above by a similar process where round lengths are all $S_{t_0}$. 
Let $j^\star:=\min\{j:\ N_j\ge t_0\}$ and set $H:=t-N_{j^\star}\in[0,t-t_0]$.
For $n\ge t_0$ we have $S_n\succeq S_{t_0}$ (stochastic dominance since $p(n)$ is non-increasing).
Couple the post-$t_0$ interarrivals $L_{j^\star+i}\sim S_{N_{j^\star+i}}$ with i.i.d.\ $Y_i\sim S_{t_0}$
via common uniforms so that $L_{j^\star+i}\ge Y_i$ a.s.
Let $N'(h):=\max\{m\ge0:\ Y_1+\cdots+Y_m\le h\}$.
Then pathwise $\#\{j: N_j\in(N_{j^\star},t]\}\le N'(H)$, hence
\[
M_t\ \le\ 1+N'(H).
\]
Conditioning on $H$ and using Wald’s bound for nonnegative i.i.d.\ steps,
$\E[N'(H)\mid H]\le H/\mu_{t_0}$, so
\[
\E[M_t]\ \le\ 1+\E\!\left[\frac{H}{\mu_{t_0}}\right]\ \le\ 1+\frac{t-t_0}{\mu_{t_0}}.
\]

\textbf{Combining.}
The probability that $t$ is bad is decomposed to the events of ``no round started'' in the interval $[t_0,t]$ and ``at least one started.'' From the above we obtain 
\[
\Pr(t\text{ bad})\ \le\ \delta\ +\ \delta\,\E[M_t]\ \le\ \delta\ +\ \delta\Big(1+\frac{t-t_0}{\mu_{t_0}}\Big)
\ =\ \delta\Big(2+\frac{t-t_0}{\mu_{t_0}}\Big).
\]
\end{proof}

\begin{proof}[Proof of Lemma~\ref{lem:perR2perE}]
The claim follows by combining \cref{claim:t0-gap} and \cref{lem:per-t-window}: for every fixed $t$,
\[
\Pr(\text{$t$ is bad})
\ \le\  O\bigg(\delta(1+\tfrac{\log\frac{1}{\delta}}{k})
\Bigg).
\]
\end{proof}

\section{Empirical Demonstration} \label{sec:experiments}

We simulate the \textsf{HYZ12} protocol as described in \cref{alg:HYZ12}, %
and our \textsf{Robust} protocol as described in \cref{alg:robustdc}.

\smallskip
\noindent\textbf{Inputs.}
For each configuration \((k,\varepsilon)\) (with $c=1.0$ for \textsf{Robust}) we evaluate each of the protocols on the following two input streams:
(i) \emph{Uniform}, where each event is assigned to an i.i.d.\ uniformly random site in \([k]\);
and (ii) an \emph{Adaptive attack} that repeatedly injects events at the current site until
the published estimate changes, and then advances to the next site in round-robin (corresponding to our adaptive attack on the \textsf{HYZ12} protocol, as described in \cref{alg:attack}).

\smallskip
\noindent\textbf{Metrics.}
At each event index $t\le N$ we report: 
(a) cumulative communication $M_t$, 
with server broadcasts counted as \(k\) messages, (b) the ratio \(\hat{n}_t/n_t\),
and (c) the relative error \(|\hat{n}_t - n_t|/n_t\).
For each configuration we run \(r\) independent seeds and, at each event index \(t\le N\),
plot the median across runs with a $(0.05,0.95)$ quantile band.

\paragraph{Performance of \textsf{HYZ12}.}
\cref{fig:hyz12_k_eps0125_r40} 
reports representative results
for $k\in\{64,256\}$ and $\eps=0.125$. We observe that communication under uniform and attack streams is comparable, but accuracy differs: Under uniform (non-adaptive), the estimates concentrate around the true values as predicted by \cite{huang2012randomized}; under attack input (adaptive), a persistent bias emerges. The relative bias level depends on $(k,\eps)$ and remains steady as the number of events $t$ grows.

\paragraph{Performance of \textsf{Robust}.}
\cref{fig:Robust_k_eps0125_r40} reports the same metrics for \textsf{Robust}. Communication and accuracy are similar for uniform and attack streams, and the estimate remains (empirically) unbiased in both cases, with comparable estimation error. The observed variability in the band width is due to increased variability within each round as the round progresses.

Overall the simulations illustrate \textsf{HYZ12}'s vulnerability to adaptive inputs and the robustness of our protocol \textsf{Robust}. They also show that \textsf{Robust} behaves similarly across input streams regardless of which sites receive the events.

\begin{figure}[t]
  \centering
  \begin{subfigure}{.32\linewidth}
    \includegraphics[width=\linewidth]{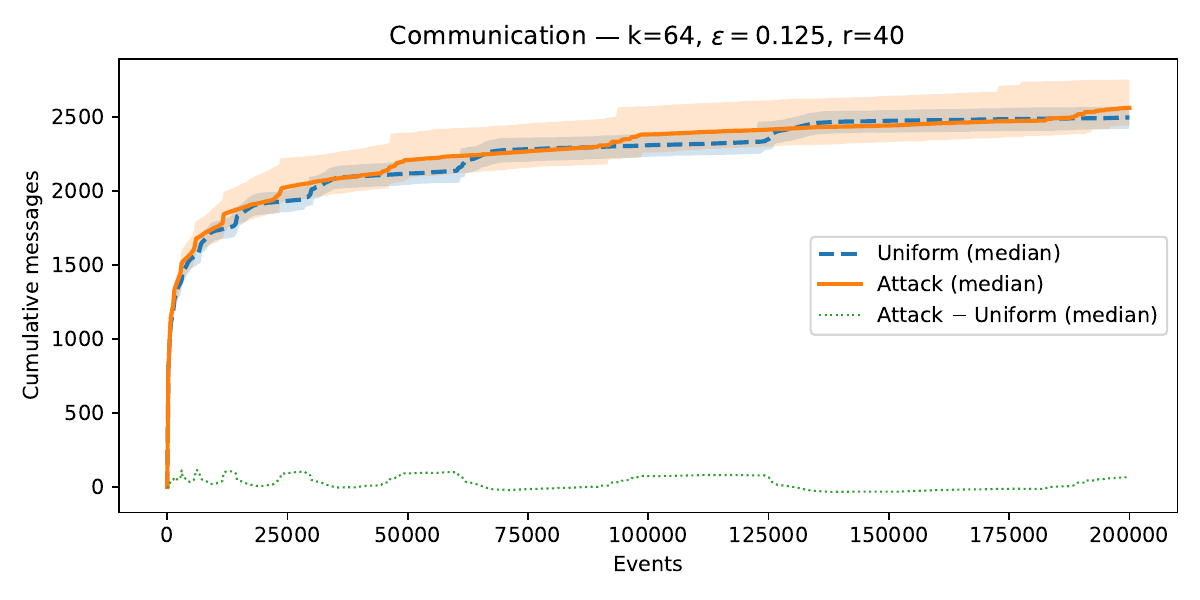}
        \includegraphics[width=\linewidth]{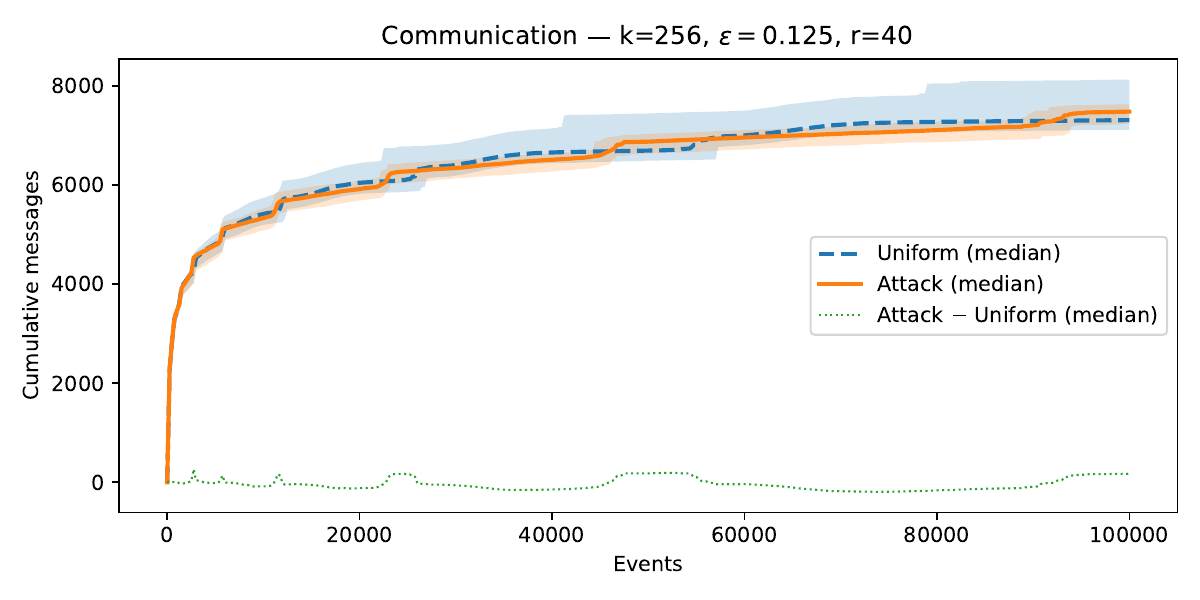}
    \caption{Communication}
  \end{subfigure}\hfill
  \begin{subfigure}{.32\linewidth}
    \includegraphics[width=\linewidth]{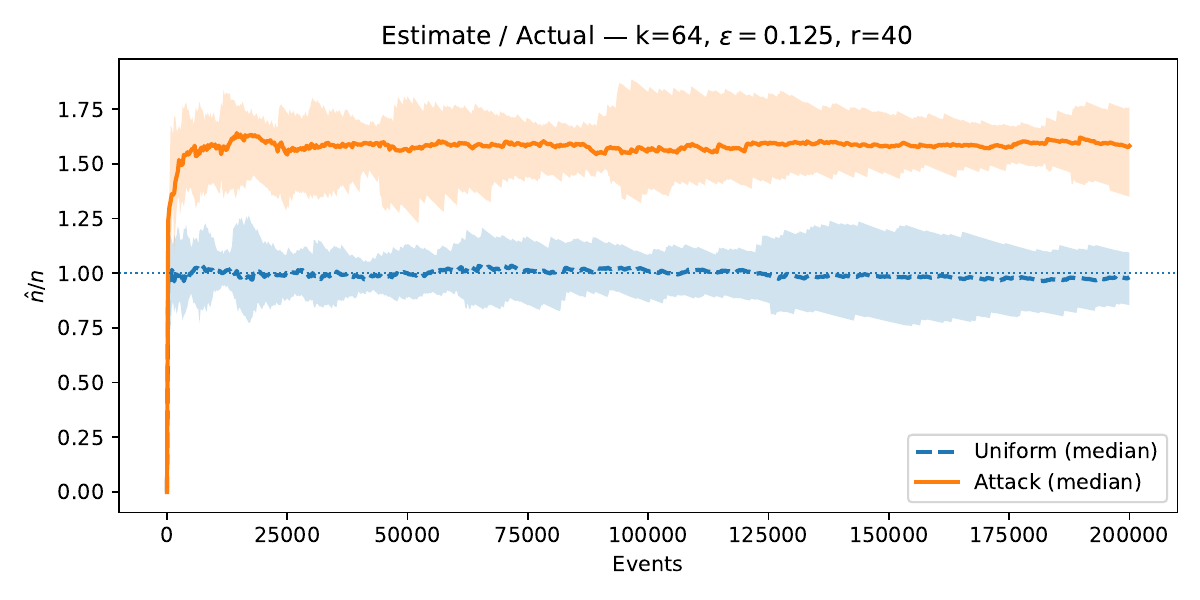}
        \includegraphics[width=\linewidth]{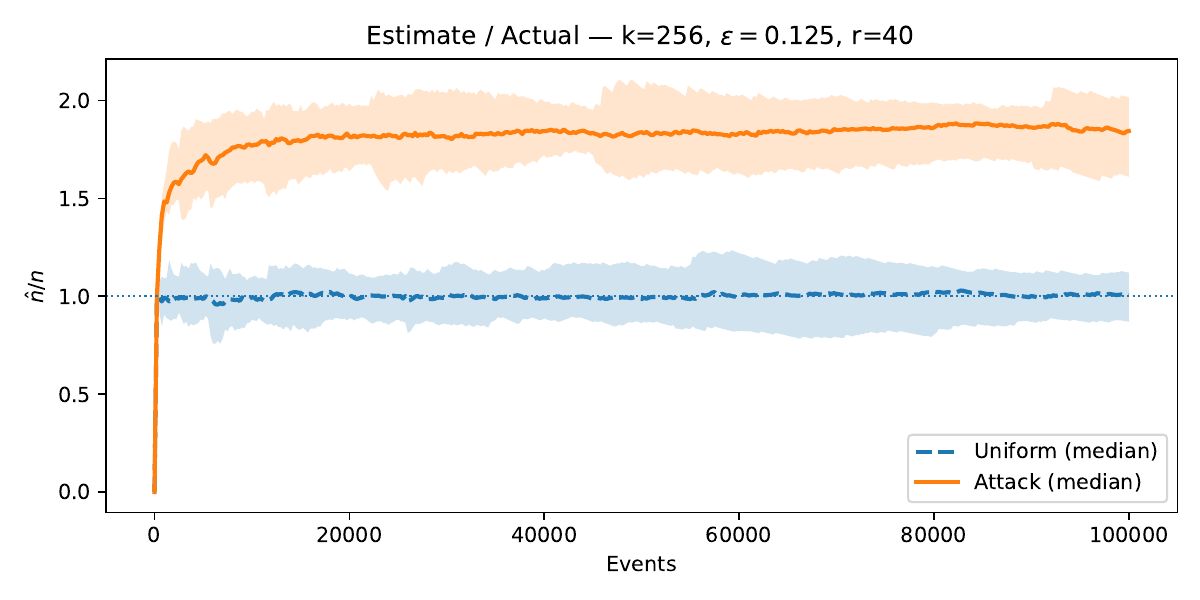}
    \caption{Estimate / actual \(\hat{n}/n\)}
  \end{subfigure}\hfill
  \begin{subfigure}{.32\linewidth}
    \includegraphics[width=\linewidth]{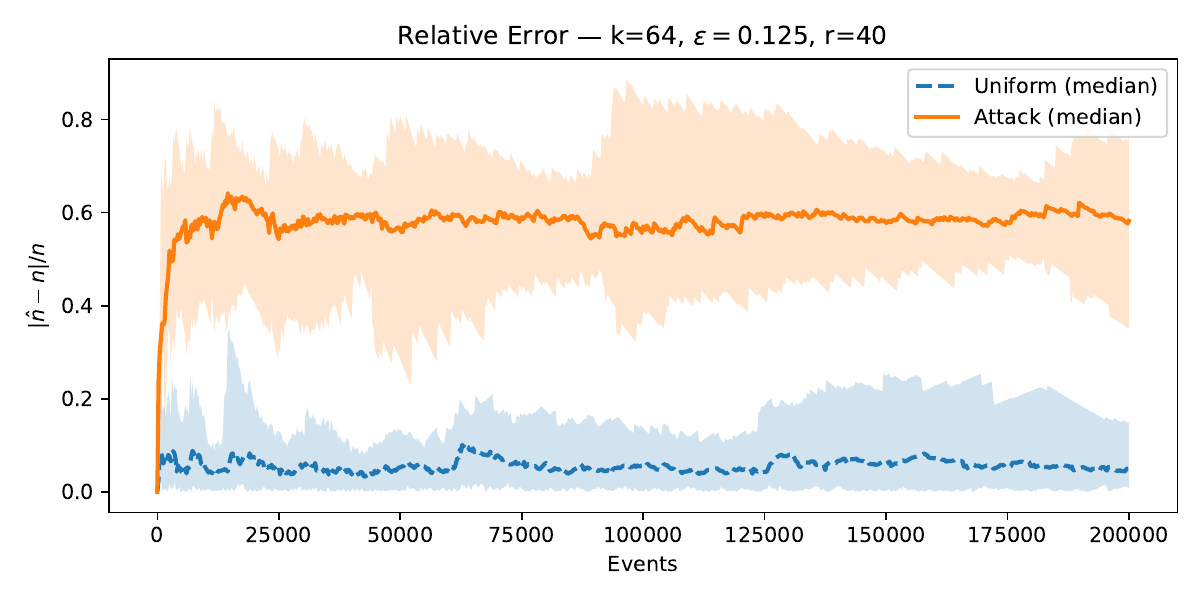}
       \includegraphics[width=\linewidth]{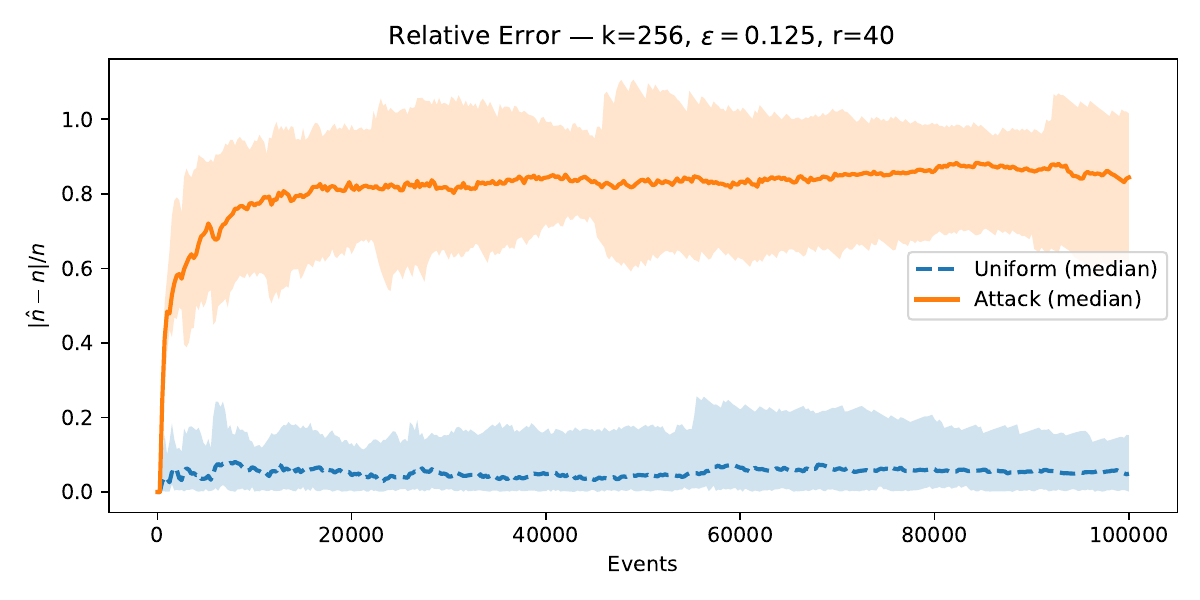}
    \caption{Relative error \(|\hat{n}-n|/n\)}
  \end{subfigure}
  \caption{\textsf{HYZ12} under uniform (dashed) vs.\ attack (solid).
  \(k=64,256\), \(\varepsilon=0.125\), \(r=40\).}
  \label{fig:hyz12_k_eps0125_r40}
\end{figure}

\begin{figure}[t]
  \centering
  \begin{subfigure}{.32\linewidth}
    \includegraphics[width=\linewidth]{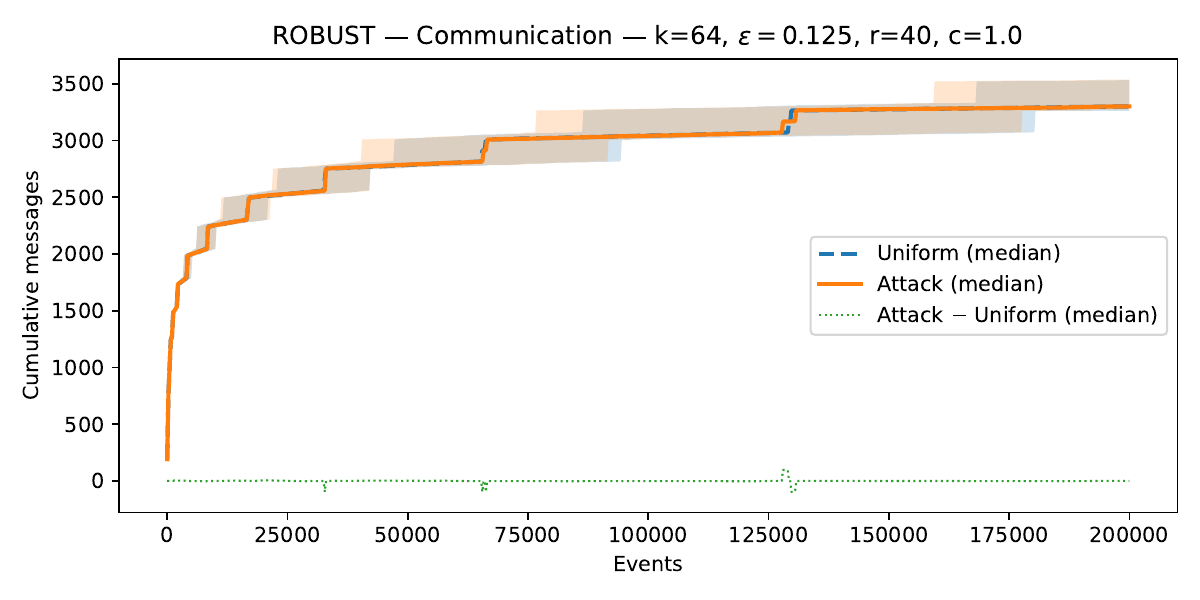}
        \includegraphics[width=\linewidth]{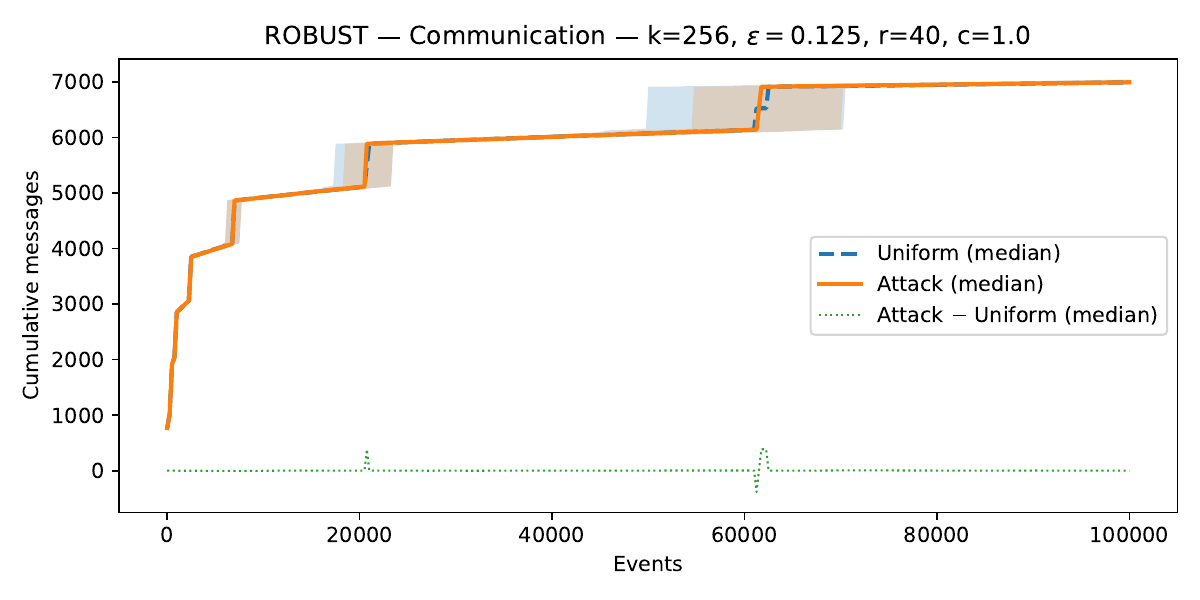}
    \caption{Communication}
  \end{subfigure}\hfill
  \begin{subfigure}{.32\linewidth}
    \includegraphics[width=\linewidth]{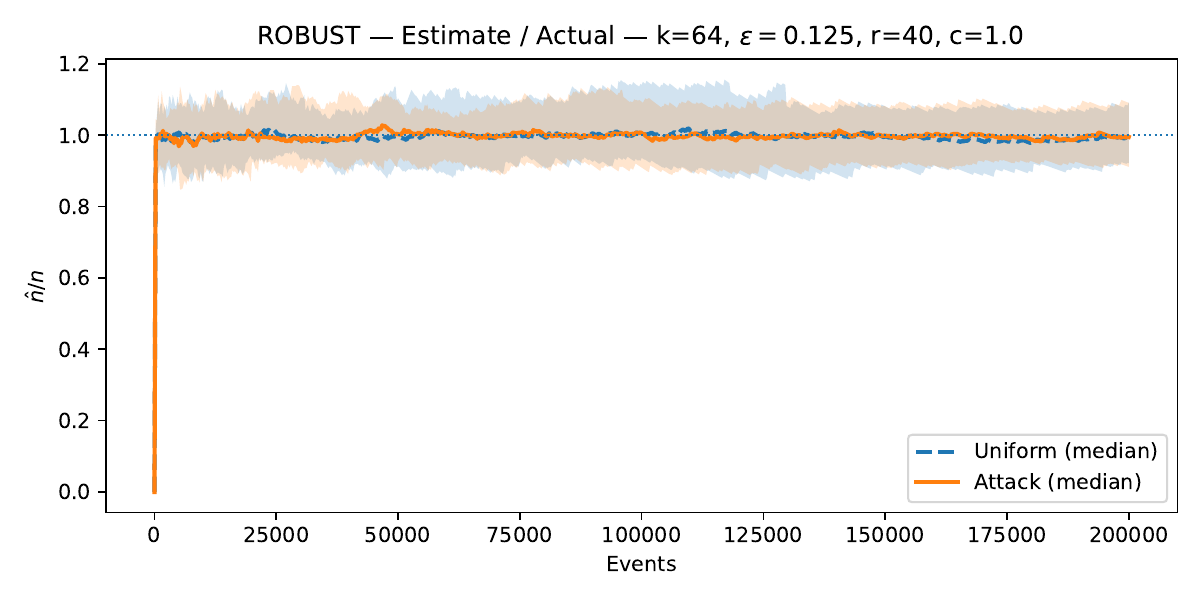}
        \includegraphics[width=\linewidth]{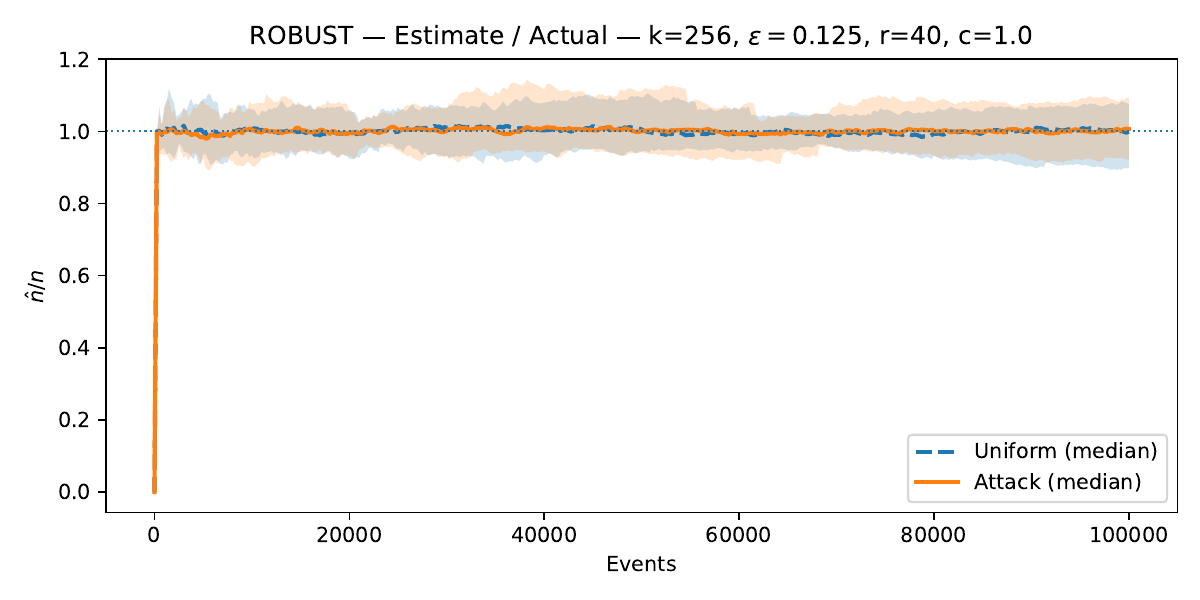}
    \caption{Estimate / actual \(\hat{n}/n\)}
  \end{subfigure}\hfill
  \begin{subfigure}{.32\linewidth}
    \includegraphics[width=\linewidth]{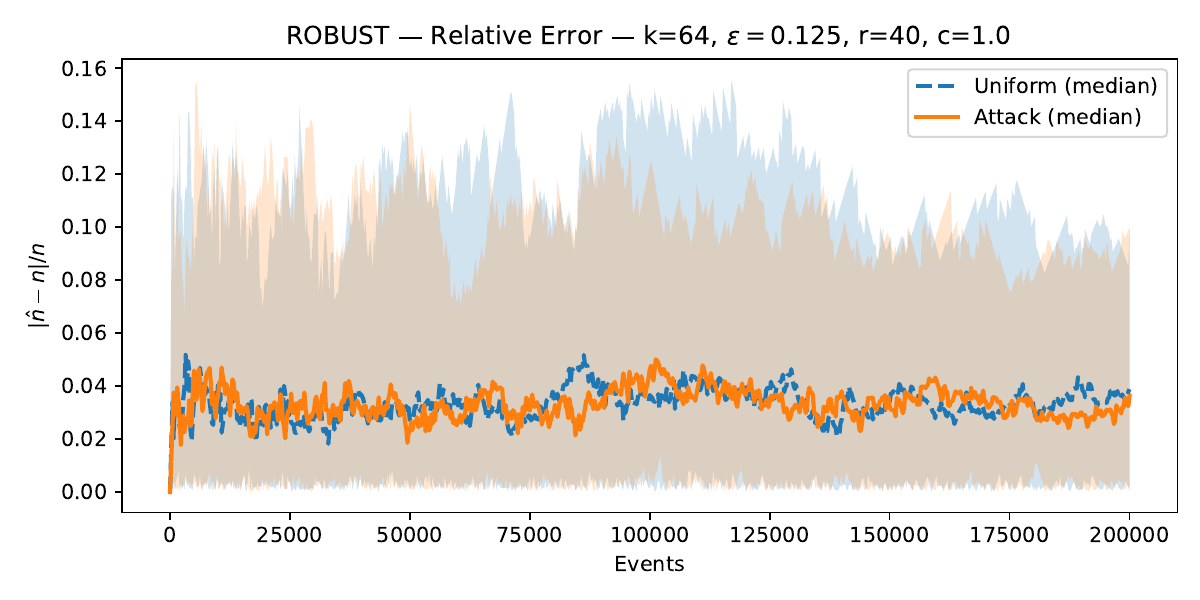}
       \includegraphics[width=\linewidth]{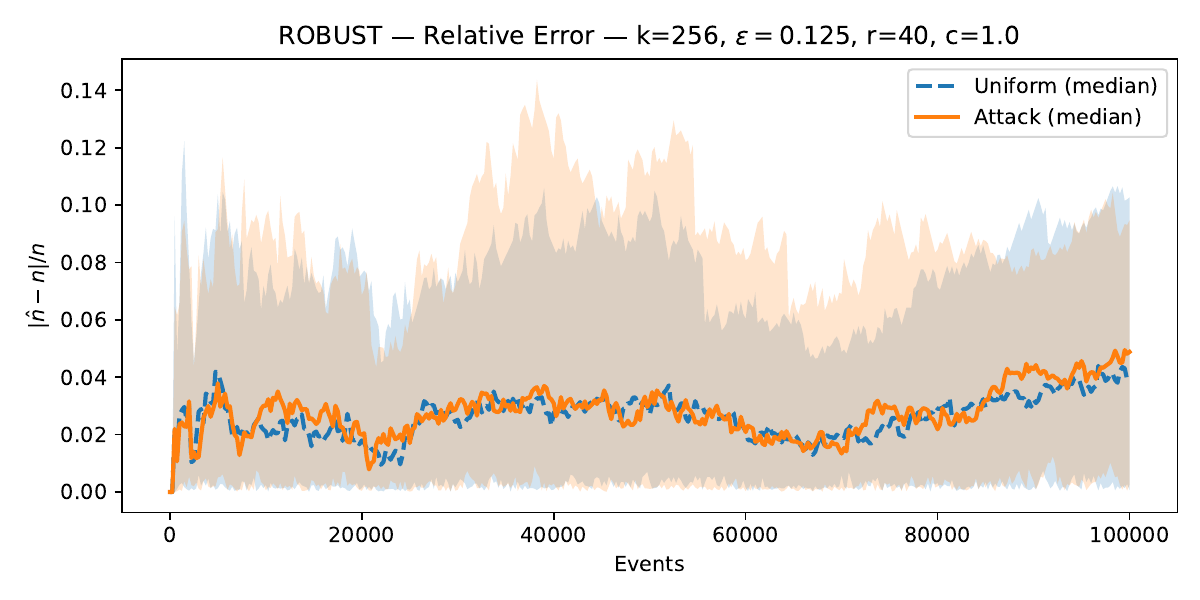}
    \caption{Relative error \(|\hat{n}-n|/n\)}
  \end{subfigure}
  \caption{\textsf{Robust} under uniform (dashed) vs.\ attack (solid).
  \(k=64,256\), \(\varepsilon=0.125\), \(r=40\).}
  \label{fig:Robust_k_eps0125_r40}
\end{figure}

\paragraph{Tradeoff experiment.}
We compare the communication--accuracy tradeoff of \textsf{HYZ12} and \textsf{Robust} for fixed $k$ and $N=10^5$ on both streams. 
We fix $k$ and sweep $\varepsilon$ over the grid $\varepsilon_i = 2^i/\sqrt{k}$ for $i\in\{-3,-2,-1,0,1,2\}$, retaining only values with $\varepsilon_i\le 1/2$ and set $c=1$ for \textsf{Robust}.

For each $(\text{protocol}, \text{stream}, \varepsilon)$ we run $r$ independent seeds and summarize each run by
\[
\text{Comm} := M_{N}\,,
\qquad
\text{Acc} := \frac{1}{N/2+1}\sum_{t=\lfloor N/2\rfloor}^{N}\frac{|\hat n_t - t|}{t}\,.
\]
Across the $r$ seeds we report the \emph{median} of \text{Comm} and \text{Acc} as the point for that configuration. 
Error bars on both axes are central $90\%$ quantile bands (5th and 95th percentiles) across seeds.

The reported results in \cref{fig:tradeoff-grid} show for each fixed $k$ two figures (one per input stream type).
Each figure shows, for both protocols, the six $(\text{Acc},\text{Comm})$ points corresponding to the $\varepsilon$ sweep, connected by lines. 
The horizontal axis is the average relative error on the interval $t\in[N/2,N]$; 
the vertical axis is the cumulative messages at $t=N$. 

We observe that for the adaptive attack inputs, \textsf{Robust} uniformly dominates; for uniform inputs, the curves are  incomparable: \textsf{HYZ12} is favored in the very low $\eps\ll 1/\sqrt{k}$ regime while \textsf{Robust} is better in the moderate $\eps$ regime. Recall that both protocols have the same asymptotic guarantees. 

\begin{figure}[t]
  \centering
  \begin{subfigure}{.48\linewidth}
    \includegraphics[width=\linewidth]{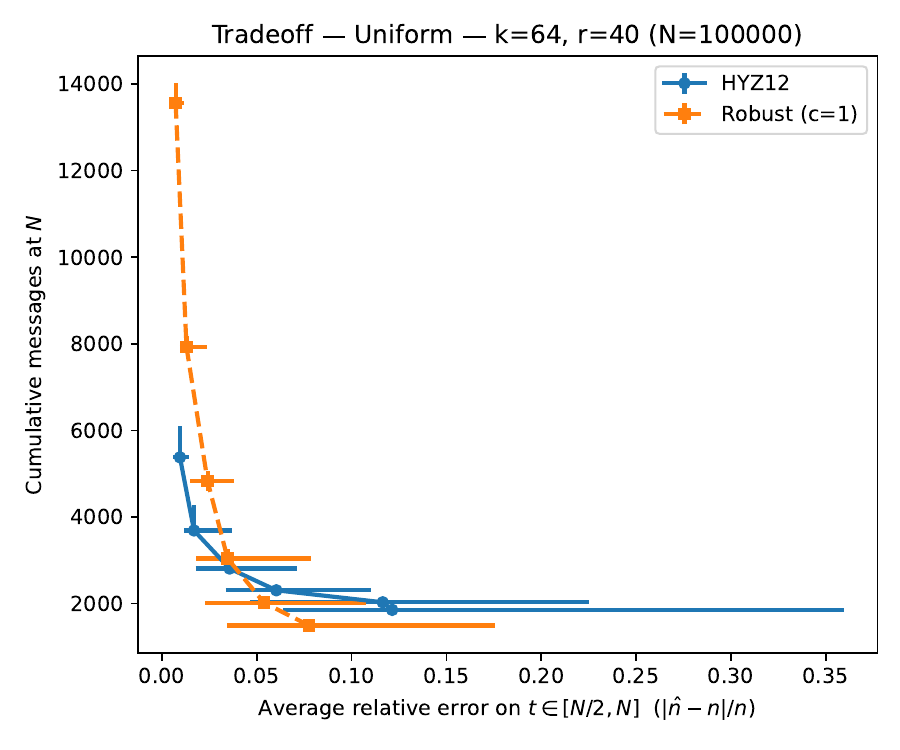}
    \caption{\(k=64\), Uniform}
  \end{subfigure}\hfill
  \begin{subfigure}{.48\linewidth}
    \includegraphics[width=\linewidth]{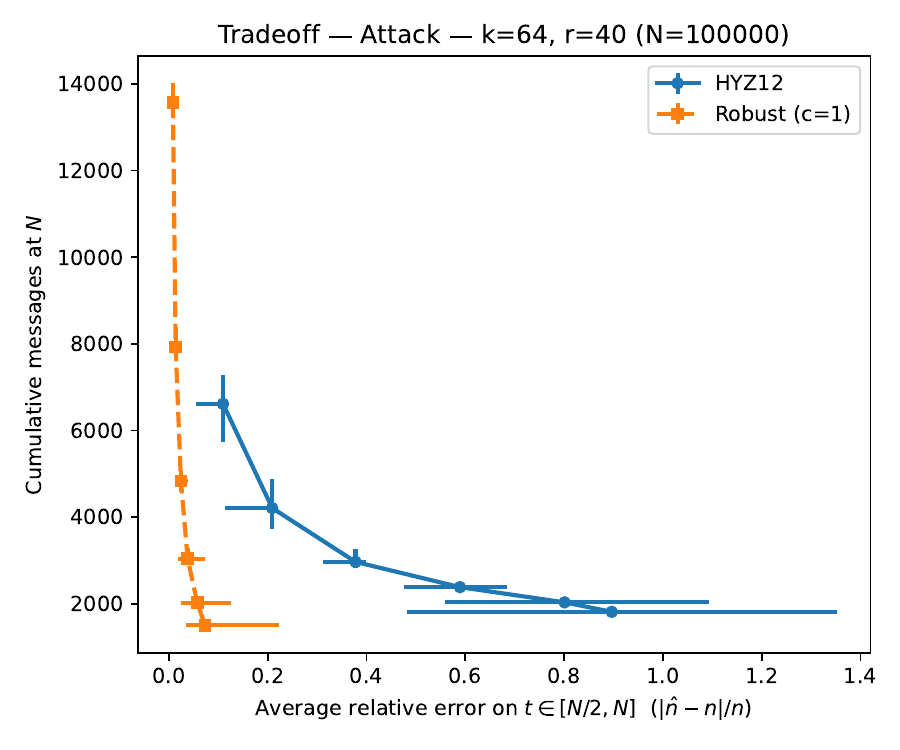}
    \caption{\(k=64\), Attack}
  \end{subfigure}

  \medskip

  \begin{subfigure}{.48\linewidth}
    \includegraphics[width=\linewidth]{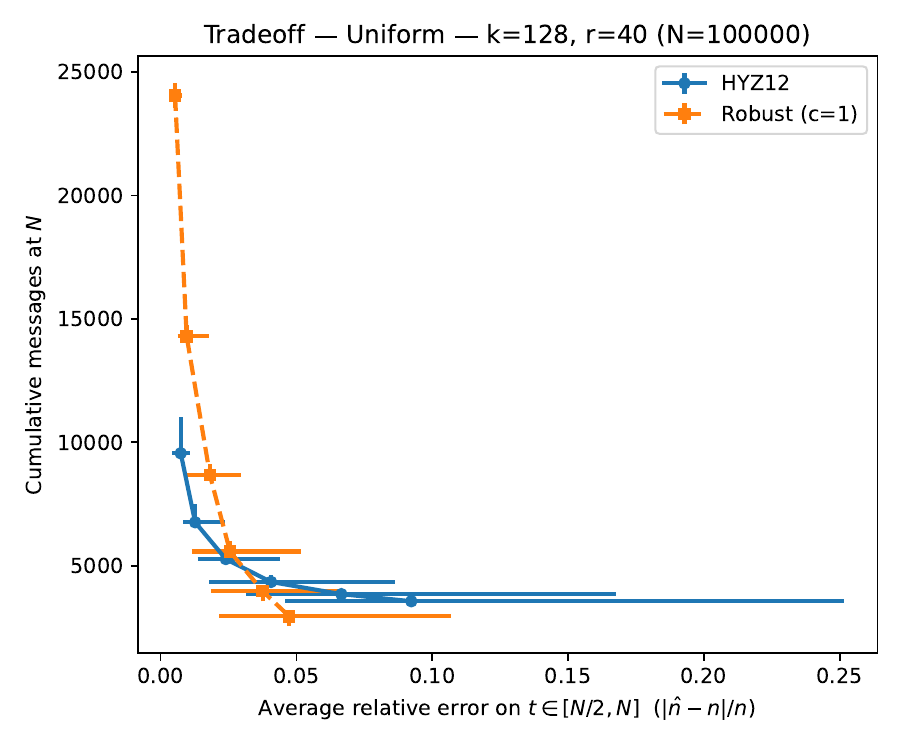}
    \caption{\(k=128\), Uniform}
  \end{subfigure}\hfill
  \begin{subfigure}{.48\linewidth}
    \includegraphics[width=\linewidth]{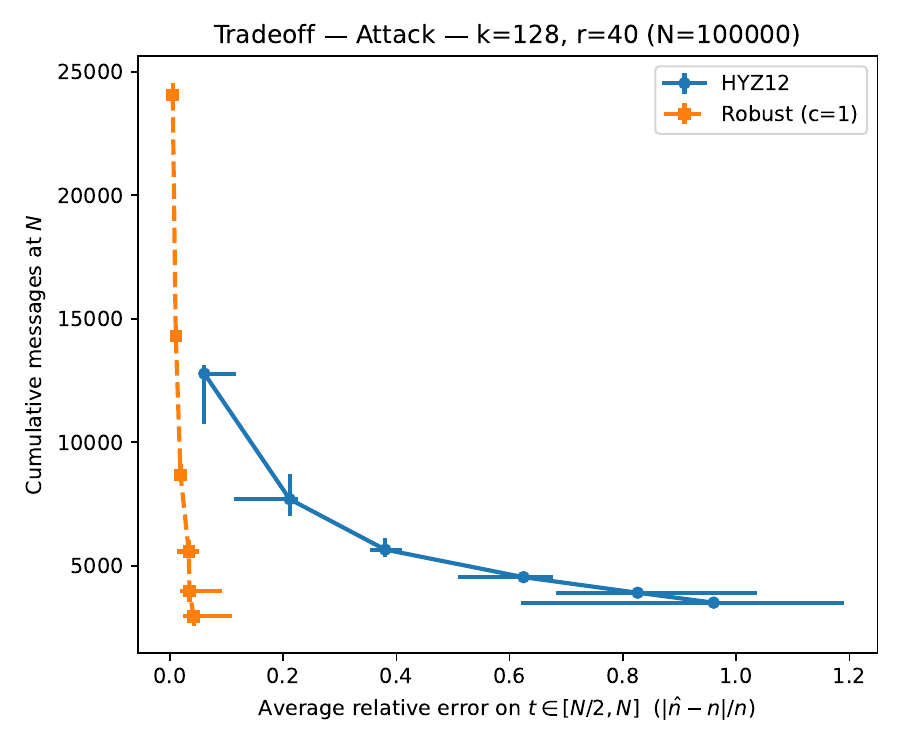}
    \caption{\(k=128\), Attack}
  \end{subfigure}
  \caption{Communication vs.\ accuracy tradeoff (median with (5\%,95\% bands) for \(N=10^5\), \(r=40\) (\textsf{Robust} uses \(c=1\)). Each panel shows the \(\varepsilon\) sweep; numeric \(\varepsilon\) values are omitted in the plot.}
  \label{fig:tradeoff-grid}
\end{figure}

\bibliographystyle{alpha}
\newcommand{\etalchar}[1]{$^{#1}$}

\appendix

\section{Tail bounds} \label{sec:tails}

\begin{claim}[Bound on the maximal partial-sum deviation {\cite[Ch.~2]{boucheron2013concentration}}]
Let $X_1,\ldots,X_r \stackrel{\mathrm{i.i.d.}}{\sim} \mathrm{Geom}(p)$ with support $\{1,2,\dots\}$, 
and define $S_i = \sum_{j=1}^i X_j$.  
Then for all $t > 0$,
\[
\Pr\!\left(\max_{1\le i\le r} \big|S_i - \tfrac{i}{p}\big| \ge t\right)
~\le~
2\exp\!\left(-\min\!\left\{\frac{t^2 p^2}{8r},\ \frac{t p}{4}\right\}\right).
\]
\end{claim}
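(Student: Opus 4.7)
The plan is to reduce the maximal inequality to the single-time Bernstein bound via Doob's maximal inequality, exploiting the fact that the centered process is a martingale. Set $Y_j := X_j - 1/p$ and $M_i := S_i - i/p = \sum_{j=1}^{i} Y_j$. Then $(M_i)_{i\ge 1}$ is a mean-zero martingale with respect to the natural filtration $\mathcal{F}_i = \sigma(X_1,\dots,X_i)$, and $Y_j$ has variance $\sigma^2 = (1-p)/p^2 \le 1/p^2$.

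The first main step is a Bernstein-type moment generating function bound for $Y_j$. Starting from the explicit MGF of the geometric, $\E[e^{\lambda X_j}] = pe^{\lambda}/(1-(1-p)e^{\lambda})$ for $\lambda < -\log(1-p)$, a routine Taylor expansion (centering by $1/p$ and bounding higher-order terms by a geometric series in $\lambda/p$) yields, for all $0 \le \lambda < p$,
\[
\E\!\left[e^{\lambda Y_j}\right]\ \le\ \exp\!\left(\frac{\lambda^2/p^2}{2(1-\lambda/p)}\right).
\]
By independence, $\E[e^{\lambda M_r}] \le \exp\!\bigl(\tfrac{r\lambda^2/p^2}{2(1-\lambda/p)}\bigr)$.

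The second step is to pass to the maximum. For any $\lambda \ge 0$ in the admissible range, $e^{\lambda M_i}$ is a nonnegative submartingale (by Jensen applied to the convex map $x \mapsto e^{\lambda x}$), so Doob's maximal inequality gives
\[
\Pr\!\Bigl(\max_{1\le i\le r} M_i \ge t\Bigr)
\ =\ \Pr\!\Bigl(\max_{1\le i\le r} e^{\lambda M_i} \ge e^{\lambda t}\Bigr)
\ \le\ e^{-\lambda t}\,\E[e^{\lambda M_r}]
\ \le\ \exp\!\left(-\lambda t + \frac{r\lambda^2/p^2}{2(1-\lambda/p)}\right).
\]
Optimizing over $\lambda$ in the standard Bernstein fashion, the minimizer is $\lambda^\star = p^2 t/(r + pt)$ (truncated below $p$), and this produces the familiar two-regime bound
\[
\exp\!\left(-\min\!\left\{\frac{t^2 p^2}{8 r},\ \frac{t p}{4}\right\}\right),
\]
where the sub-Gaussian regime dominates when $t \lesssim r/p$ and the sub-exponential regime dominates otherwise. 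The constants $1/8$ and $1/4$ can be absorbed from the factor $\tfrac{1}{2(1-\lambda/p)}$ by restricting $\lambda^\star \le p/2$.

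The third step handles the lower tail. Apply the same argument to $-Y_j$. Here the analysis is strictly easier: $Y_j \ge -1/p$ is bounded below, so its negative MGF is dominated by that of a sub-Gaussian variable and satisfies the same Bernstein-type envelope, giving the identical bound on $\Pr(\max_i (-M_i) \ge t)$. A union bound over the two signs produces the factor of $2$. The single obstacle I anticipate is getting the constants right in the MGF inequality for $Y_j$: the geometric has a one-sided heavy tail, so one must verify that the remainder in the Taylor expansion of $\log \E[e^{\lambda Y_j}]$ is indeed dominated by a geometric series in $\lambda/p$ that matches the claimed constants; once that is in place, the Doob/Bernstein optimization proceeds routinely, and this essentially recovers the maximal Bernstein inequality in \cite{boucheron2013concentration}.
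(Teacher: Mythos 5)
Your proposal is correct and follows essentially the same route as the paper's proof: Doob's maximal inequality applied to the exponential submartingale $e^{\lambda M_i}$, a Bernstein/sub-gamma bound on the centered geometric's MGF ($\psi(\lambda) \le \frac{v\lambda^2}{2(1-\lambda/p)}$ with $v \le 1/p^2$), Chernoff optimization in $\lambda$, and a two-sided union bound for the factor of $2$. The only differences are presentational (you state the MGF bound before invoking Doob, the paper does the reverse) and your correct but inessential observation that the lower tail is easier because $-Y_j$ is bounded above.
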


\begin{proof}
Let $Y_i := X_i - \frac1p$ so that $\mathbb{E}[Y_i] = 0$ and 
$S_i - i/p = \sum_{j=1}^i Y_j$.  

\textbf{Step 1: Doob's maximal inequality.}  
For any $\lambda > 0$, the process
\[
M_i := \exp\!\left(\lambda \sum_{j=1}^i Y_j\right)
\]
is a non-negative submartingale.  
By Doob's maximal inequality (\cite[§2.5]{boucheron2013concentration}),
\[
\Pr\!\left(\max_{1 \le i \le r} \sum_{j=1}^i Y_j \ge t\right)
\ \le\ e^{-\lambda t} \, \mathbb{E} \, e^{\lambda \sum_{j=1}^r Y_j}
\ =\ \exp\!\big(-\lambda t + r \, \psi(\lambda)\big),
\]
where $\psi(\lambda) := \log \mathbb{E} e^{\lambda Y_1}$ is the cumulant generating function of $Y_1$.  
An identical bound with $-\lambda$ handles the lower tail, and a union bound gives a factor $2$.

\textbf{Step 2: Bernstein--type mgf bound.}  
For $X \sim \mathrm{Geom}(p)$ with support $\{1,2,\dots\}$,
\[
\mathbb{E} e^{\lambda X} = \frac{p \, e^\lambda}{1 - (1-p) e^\lambda}, 
\qquad \lambda < -\log(1-p).
\]
Thus, for $0 < \lambda < p$,
\[
\psi(\lambda) 
= \log \mathbb{E} e^{\lambda(X - 1/p)}
\le \frac{v \, \lambda^2}{2(1 - c \lambda)},
\]
where $v = \mathrm{Var}(X) = \frac{1-p}{p^2} \le \frac1{p^2}$ and $c = \frac1p$.  
This is exactly the sub-gamma mgf condition in \cite[Thm.~2.10]{boucheron2013concentration}.

\textbf{Step 3: Tail bound.}  
Plugging into Step~1 and optimizing over $\lambda$ via \cite[Cor.~2.11]{boucheron2013concentration} gives
\[
\Pr\!\left(\max_{1 \le i \le r} \sum_{j=1}^i Y_j \ge t\right)
\ \le\ \exp\!\left(-\frac{t^2}{2(rv + c t)}\right).
\]
Since $v \le 1/p^2$ and $c = 1/p$, we have
\[
\Pr\!\left(\max_{1 \le i \le r} (S_i - i/p) \ge t\right)
\ \le\ \exp\!\left(-\min\left\{\frac{t^2 p^2}{2r}, \ \frac{t p}{2}\right\}\right).
\]
Union bounding with the lower tail yields
\[
\Pr\!\left(\max_{1 \le i \le r} \big|S_i - i/p\big| \ge t\right)
\ \le\ 2 \exp\!\left(-\min\left\{\frac{t^2 p^2}{8r}, \ \frac{t p}{4}\right\}\right),
\]
after a slight relaxation of constants.
\end{proof}

\section{Lemmas for attack analysis} \label{sec:attacktoolsproofs}

This section contains deferred proofs for zero-inflated geometrics from \cref{sec:attacktools}.

\GeometricTelescopes*
\begin{proof}
Let $G_q(s)=\E[s^{\Geom(q)}]=\dfrac{q s}{1-(1-q)s}$ be the pgf of $\Geom(q)$.
For $q\le p$ the pgf of $Z_{q,p}$ is
\[
F_{q,p}(s)=\E[s^{Z_{q,p}}]
=\frac{q}{p}+\Bigl(1-\frac{q}{p}\Bigr)G_q(s)
=\frac{q}{p}\cdot\frac{1-(1-p)s}{1-(1-q)s}.
\]
Hence independence implies
\[
\prod_{i=1}^{r-1} F_{p_{i+1},p_i}(s)
=\prod_{i=1}^{r-1}\left(\frac{p_{i+1}}{p_i}\cdot
\frac{1-(1-p_i)s}{1-(1-p_{i+1})s}\right)
=\frac{p_r}{p_1}\cdot\frac{1-(1-p_1)s}{1-(1-p_r)s}
=F_{p_r,p_1}(s),
\]
where the middle equality is a telescoping product. Since pgfs coincide, the distributions coincide, proving the claim.
\end{proof}

\zeroinflatedtail*

\begin{proof}
For $\lambda\in(0,-\ln(1-p))$ the mgf of $\Geom(p)$ is
$M_G(\lambda)=\E[e^{\lambda G}]=\frac{pe^{\lambda}}{1-(1-p)e^{\lambda}}$.
By independence and $\ln(1+u)\le u$,
\[
\ln\E\!\left[e^{\lambda\left(X_i-\alpha_i/p\right)}\right]
=\ln\!\Bigl(1-\alpha_i+\alpha_i\,M_G(\lambda)e^{-\lambda/p}\Bigr)
\;\le\;\alpha_i\!\left(M_G(\lambda)e^{-\lambda/p}-1\right).
\]
Summing over $i$ and writing $S:=\sum_i X_i$ gives
\[
\ln\E\!\left[e^{\lambda(S-\mu)}\right]
\;\le\;A\!\left(M_G(\lambda)e^{-\lambda/p}-1\right).
\]
A direct calculus bound (using $-\ln(1-x)\le \tfrac{x}{1-x}$ and
$e^{\lambda}\le \tfrac{1}{1-\lambda}$ for $\lambda\in[0,1)$) yields, for
$0\le\lambda<p$,
\[
\ln\!\bigl(M_G(\lambda)e^{-\lambda/p}\bigr)
\;\le\;\frac{(1-p)}{p^2}\,\frac{\lambda^2}{1-\lambda/p}.
\]
Hence, for $0\le\lambda<p$,
\[
\ln\E\!\left[e^{\lambda(S-\mu)}\right]
\;\le\;\frac{A(1-p)}{p^2}\,\frac{\lambda^2}{1-\lambda/p}.
\]
By Chernoff's method,
$\Pr(S-\mu\ge t)\le \inf_{0<\lambda<p}\exp\!\bigl(-\lambda t
+ \frac{A(1-p)}{p^2}\frac{\lambda^2}{1-\lambda/p}\bigr)$.
Optimizing the quadratic-over-linear upper bound (standard Bernstein
optimization) gives
\[
\Pr(S-\mu\ge t)\ \le\
\exp\!\left(-\,\frac{t^2}{2\left(\dfrac{A(1-p)}{p^2}+\dfrac{t}{p}\right)}\right),
\]
and the min-form follows from $ \frac{u^2}{2(v+u)}\ge \frac12\min\{\tfrac{u^2}{v},u\}$.
\end{proof}

\end{document}